\documentclass[lettersize,journal,12pt,onecolumn]{IEEEtran}
\usepackage{amsmath,amsfonts}
\usepackage{algorithmic}
\usepackage{algorithm}
\usepackage{amssymb}
\usepackage{array}
\usepackage[caption=false,font=normalsize,labelfont=sf,textfont=sf]{subfig}
\usepackage{textcomp}
\usepackage{stfloats}
\usepackage{url}
\usepackage{xcolor}
\usepackage{verbatim}
\usepackage{graphicx}%导入图片
\usepackage{cite}
\usepackage{stfloats} %让跨双栏图片出现在正确的位置
\usepackage{caption}  %用了以后图片名字会居中

\newtheorem{definition}{Definition}
\newtheorem{theorem}{Theorem}

\newtheorem{remark}{Remark}

\begin{document}
	
\title{Nonlinear Information Theory: Characterizing Distributional Uncertainty in Communication Models with Sublinear Expectation}
	
\author{Wen-Xuan Lang, Shaoshi Yang, Jianhua Zhang, Zhiming Ma

\thanks{Wen-Xuan Lang is with the National Center for Mathematics and Interdisciplinary Sciences, Academy of Mathematics and Systems Science, Chinese Academy of Sciences, Beijing 100190, China (e-mail: langwx@amss.ac.cn).}
\thanks{Shaoshi Yang is with the School of Information and Communication Engineering, Beijing University of Posts and Telecommunications, with the Key Laboratory of Universal Wireless Communications, Ministry of Education, and also with the Key Laboratory of Mathematics and Information Networks, Ministry of Education, Beijing 100876, China (e-mail: shaoshi.yang@bupt.edu.cn).}
\thanks{Jianhua Zhang is with the School of Information and Communication Engineering, Beijing University of Posts and Telecommunications, and also with the State Key Laboratory of Networking and Switching Technology, Beijing 100876, China (e-mail: jhzhang@bupt.edu.cn).}
\thanks{Zhiming Ma is with the National Center for Mathematics and Interdisciplinary Sciences, Academy of Mathematics and Systems Science, Chinese Academy of Sciences, and also with University of Chinese Academy of Sciences, Beijing 100190, China (e-mail: mazm@amt.ac.cn).}

}

\maketitle

\begin{abstract}
A mathematical framework for information-theoretic analysis is established, with a new viewpoint of describing transmitted messages and communication channels by the nonlinear expectation theory, beyond the framework of classical probability theory. The major motivation of this research is to emphasize the probabilistic distribution uncertainty within the ever increasingly complex communication networks, where random phenomena are often nonstationary, heterogeneous, and cannot be characterized by a single probability distribution. Based on the nonlinear expectation theory, in this paper we first explicitly define several fundamental concepts, such as nonlinear information entropy, nonlinear joint entropy, nonlinear conditional entropy and nonlinear mutual information, and establish their basic properties. Secondly, by using the strong law of large numbers under sublinear expectations, we propose a nonlinear source coding theorem, which shows that the nonlinear information entropy is the upper bound of the achievable coding rate of sources whose \textit{distributions are uncertain} under the maximum error probability criterion, and determines a cluster point of the coding rate of such sources under the minimum error probability criterion. Thirdly, we propose a nonlinear channel coding theorem, which gives the explicit expression of the upper bound under the maximum error probability criterion and a cluster point under the minimum error probability criterion, respectively, for the achievable coding rate of communication channels whose distributions are uncertain. Additionally, we propose a nonlinear rate-distortion source coding theorem, proving that the rate distortion function based on the nonlinear mutual information is a cluster point of the lossy compression performance of uncertain-distribution sources under the minimum expected distortion criterion. Finally, we show some examples and applications of uncertain-distribution sources and uncertain-distribution channels in the framework of nonlinear information theory, and present simulation results to consolidate our theoretic study.
\end{abstract}
	
\begin{IEEEkeywords}
	Nonlinear expectation theory, nonlinear information theory, sublinear expectation, uncertain-distribution source, uncertain-distribution channel, strong law of large numbers under sublinear expectations.
\end{IEEEkeywords}
	
\section{Introduction}\label{sec1}
\subsection{Motivation}
The notion of uncertainty plays a central role in information theory, which is built on the theory of probability and stochastic processes \cite{book}. By assuming specific individual distributions for information sources, communication channels, noises, and interferences, the key concepts of Shannon information theory, such as mutual information, entropy, and channel capacity, are defined. The classical Shannon information theory can be viewed as only considering the uncertainty of one degree, i.e., the probability distributions are defined for random events only and the probability space defined with the measure theory is deterministic and explicitly known. In other words, the probability models in classical Shannon information theory are assumed to have no uncertainty.
	
However, in the complex physical world fraught with a significant amount of unanticipated and nonstationary random phenomena, the assumption of precise and well-defined probability distributions to describe uncertainties is somewhat idealized. In many cases that, for instance, have only limited observations or small sample size, there may not be a single, fixed probability distribution that accurately captures the underlying randomness. This reality challenges the traditional approach of relying on deterministic probability models, which typically assume the existence of an underlying ``true'' probability distribution. As pointed out in \cite{spiegelhalter2024does}, probability is not an objective property of the world but rather a construct based on subjective judgments and assumptions. In most practical scenarios, the assessments of probability are not estimates of some ``true'' probability but rather expressions of personal or collective view of uncertainty. This perspective underscores the need of developing more flexible and realistic methods to characterize uncertainty, moving beyond the limitations of assuming fixed and known probabilistic distributions.
		
Given these insights, it is important to recognize that the classical Shannon information theory, while foundational, still has limitations when applied to real-world scenarios. Specifically, it relies on deterministic probability models and linear expectation operations, both of which may not be sufficient for capturing the complex randomness of real physical world. The probability models in classical Shannon information theory are characterized by quantitative measures, i.e., moments, which are essentially dependent on the linear expectation operations. Therefore, the classical information theory can be viewed as \textit{linear information theory}. Although the traditional probability theories and methods can help us understand the laws and structures of random systems to some degree, we are often unable to completely eradicate the higher-level uncertainties inherent in the probability models themselves that underlie these methods. This is particularly the case when the method adopted does not match the underlying probability model studied. 
	
An important question naturally arises: Can the classical Shannon information theory be extended to characterize scenarios that have multiple levels of uncertainty or \textit{uncertainty of uncertainty}? If yes, how to extend the classical Shannon information theory to such scenarios? These are the two major motivational questions behind this study.
	
At the time of writing, the telecommunications community has begun to focus on the potential candidate technologies for 6G \cite{chowdhury20206g,tataria20216g,nikitopoulos2022massively,zhu2023pushing, zhang2024:6G_channel, wang2024clutter}. The higher precision requirements for various algorithms associated with  information sources and communication channels in 6G require us to consider the impact of uncertainty in the probability model itself as well. Therefore, new fundamental theories, as well as the related strong and trustworthy analysis, are much needed. In light of this, it is imperative to consider a more holistic landscape of uncertainties in wireless communications.
	
\subsection{Related Work}
Traditionally, there are two types of uncertainty affected by the probability models used in communication systems. The first is the uncertainty inherent in information sources. Information entropy, a well known concept for measuring uncertainty in information sources, derives from the linear expectation of a deterministic probability space. Among various definitions of information entropy, the most famous one, Shannon entropy \cite{shannon1948mathematical}, provides a mathematical tool for quantifying information and analyzing the efficiency of data compression. It plays a key role in solving practical problems and helps people deeply understand the essence of information and process data.  The successful application of Shannon entropy has prompted researchers to extend it from various aspects, leading to the emergence of concepts that can be more generalized than Shannon entropy. More specifically, R{\'e}nyi entropy, which was introduced by R{\'e}nyi \cite{renyi1961measures}, can be regarded as an extension of Shannon entropy. Marichal first established the notion of Choquet capacity entropy \cite{marichal2002entropy}, which is fundamentally dependent on Choquet capacity and Choquet expectation. There are also many other works devoted to the generalization of entropy based on the deterministic probability distribution, such as \cite{sharma1975entropy,rao2004cumulative,tsallis2011nonadditive,nielsen2011closed}. However, for all these methods, it is crucial to provide an explicit expression for the probability model, regardless of using the Choquet capacity or the probability measure. As a result, bias or error may be imposed on signal or information processing algorithms once the real distribution of the information source diverges from the estimated distribution.
		
Furthermore, in source coding, Ziv proposed fundamental concepts and frameworks for universal source coding \cite{ziv1972coding1,ziv1972coding2}, and a universal algorithm for sequential data compression was presented \cite{ziv1977universal}. A universal Wyner–Ziv coding setup was introduced in \cite{jalali2010universal}, where the distribution of the source is assumed unknown while the conditional distribution of the side-information channel is perfectly known. In \cite{dupraz2012source}, the authors used a single distribution chosen from a family of candidate distributions to describe the information sources, and the distribution chosen varies with a set of given parameters that either change from symbol to symbol or remain constant for a while. To elaborate a little further, for the theoretical analysis of universal sources, the open literature typically assumes the existence of a particular distribution characterizing the random variable, even though the specific parameters and/or form of this distribution are uncertain. In other words, although universal source coding can directly encode sources with uncertain or unknown distributions, in theoretical analysis, the existing methods still assume that the underlying probability model (i.e., the probability space) is deterministic.
	
In essence, traditional research on information sources is restricted to analyzing the uncertainty characterized by random variables and the corresponding well-defined probability distributions, and is incapable of dealing with the uncertainty inherent in probability models themselves. Therefore, it is difficult to use traditional theoretical framework to support dynamic compatibility with various individual sources, hybrid sources, and unknown sources. This limitation may result in a high degree of inaccuracy when modeling the physical law dominated but still randomly changing complex real world. Instead of assuming the existence of a fixed probability distribution, it makes more sense to recognize that the underlying randomness may not be captured by any deterministic probability model. Universal source coding is a powerful and elegant solution for compressing data without knowing the exact source distribution. Our work aims to use nonlinear expectation theory to directly model and analyze distributional uncertainty, without relying on any single underlying distribution. This approach does not conflict with universal source coding, but instead helps to strengthen its theoretical basis in handling distributional uncertainty.
	
The second is the uncertainty inherent in communication channel. The mathematical model used to characterize the uncertainty of the channel is typically described as a deterministic transition probability matrix. In 1959, Blackwell et al. first generalized Shannon's basic theorem on the capacity of a channel to a class of memoryless channels \cite{blackwell1959capacity}, which can be viewed as accounting for the uncertainty of the transition probability matrix itself. Afterwards, the concept of arbitrarily varying channels (AVC) was established \cite{blackwell1960capacities} in order to simulate communication channels with unknown transition probability matrices that may vary with time in an arbitrary manner. An arbitrarily varying channel can be characterized by a family of transition probability matrices $\mathcal{C}=\{ \boldsymbol{P}_s(\cdot|\cdot),s \in \mathcal{S} \}$, where $s\in \mathcal{S}$ is the index of a specific transition probability matrix $\boldsymbol{P}$ in the set $\mathcal{C}$, under the classical communication framework. The definition of random coding capacity was proposed in \cite{ahlswede1969correlated} and derived for AVC under the maximum error probability or average error probability criterion \cite{ahlswede1970capacity,ahlswede1970note,ahlswede1973channels,ahlswede1978elimination}. In \cite{ahlswede1980method}, Ahlswede presented a strong theoretical performance guarantee that AVC can achieve the random coding capacity in some situations under the maximum error probability criterion. In recent years, there is a renewed interest in studying the impacts imposed by a family of transition probability matrices, such as \cite{chen2010tighter,yang2018intrinsic,vu2020uncertainty}.
	
All the previous works based on the concept of a family of transition probability matrices actually still assumed that the channel follows a certain distribution of this family within a specific observation slot in time, frequency or spatial domain. In other words, the probability model remains conditionally deterministic, but its specific expression may not be known to the encoder or decoder. As we have noted previously, in the complex physical world, the uncertainty inherent in probability models themselves cannot be eliminated. This means that even if we theoretically restrict the channel to a family of transition probability matrices, in reality we are still unable to conclude with any degree of confidence that the channel must be a member of this family. Existing research on channels with a family of transition probability matrices has made significant and impressive progress, providing valuable insights and practical solutions. Inspired by these advancements, we also aim to take a complementary theoretical approach by directly studying channels whose distributions are uncertain through the framework of nonlinear expectation theory.
	
\subsection{Novel Contributions}
Against the above background, in this paper, we show that the recently developed \textit{nonlinear expectation theory} \cite{peng2007g}, which extends classical probability theory that involves linear expectation operations to nonlinear expectation scenarios, can be utilized to analyze the multi-level uncertainties embedded in the probability models of communication systems in the complex physical world. Stimulated by the idea of nonlinear expectation theory, in this paper we propose a \textit{nonlinear information theory}, by considering a new nonlinear communication model, which has multi-level uncertainties in probability distributions. The new theory innovatively characterizes the uncertainty in communications by using sublinear expectation to deal with the underlying families of probability distributions. Our main results demonstrate that the proposed nonlinear information theory shares many characteristics with classical linear information theory, while having a far wider range of applications.
The main contributions can be summarized as follows:
\begin{itemize}
	\item[$\bullet$] Based on the nonlinear expectation theory, we generalize several fundamental concepts related to information entropy. Suppose the messages from information sources are from a sublinear expectation space $(\Omega,\mathcal{H},\mathbb{E})$, where $\Omega$ is a sample space, $\mathcal{H}$ is a linear space consisting of real valued functions defined on $\Omega$, $\mathbb{E}$ is a sublinear expectation defined on $(\Omega,\mathcal{H})$. The corresponding uncertain probability distributions for the message is potentially characterized by $\{p_{\theta}\}_{\theta \in \Theta}$, where $\theta$ is the index of a specific distribution $p$ in $\{p_{\theta}\}_{\theta \in \Theta}$. We deduce a nonlinear information entropy as the form of
	\begin{equation}
		\hat{H}(X):=\sup_{\theta \in \Theta } \sum _{x}p_{\theta }(x)\log \frac{1}{p_{\theta }(x)}
	\end{equation}
	from a set of new hypotheses. Nonlinear joint entropy, nonlinear conditional entropy and nonlinear mutual information are also explicitly defined. Some important properties, such as the chain rule and the Fano inequality, are generalized as well. 
		
	\item[$\bullet$] We propose a nonlinear source coding theorem. Specifically, we use random processes in nonlinear expectation spaces to represent message sequences, which can overcome the limitations of traditional independent and identically distributed (i.i.d.) assumptions. We also redefine the metric for characterizing the performance of source encoder/decoder by sublinear expectation. Then, based on the mathematical concept of \textit{capacity}\footnote{ The mathematical ``capacity" is originally defined in measure theory, and the explicit form of mathematical ``capacity" used in this paper is given in Definition \ref{def:capacity} of Section \uppercase\expandafter{\romannumeral2}.}, which is different from the concept of ``channel capacity" in the context of communications, and the strong law of large numbers under sublinear expectations \cite{zhang2023sufficient}, we show that the nonlinear information entropy $\hat{H}(X)$ is the upper bound of the achievable coding rate of uncertain-distribution sources under the maximum error probability criterion, and also determines a cluster point\footnote{A cluster point of a sequence $\mathcal S$ is a point $x$, to which there is a subsequence of $\mathcal S$ that converges.} $\underset{\theta \in \Theta}{\inf} \sum _{x}p_{\theta }(x)\log \frac{1}{V(x)}$ of the coding rate of uncertain-distribution sources under the minimum error probability criterion.  This constitutes the nonlinear source coding theorem. We draw an important conclusion that, within the framework of nonlinear information theory, the achievable fundamental limit under the minimum error probability criterion must be redefined and no longer coincides with the classical source coding limit derived under a single probability measure.
		
	\item[$\bullet$] We propose a nonlinear channel coding theorem to characterize varying channels whose distributions are uncertain. To this end, we define an uncertain-distribution channel model, where all random variables, regardless of input or output, are derived from sublinear expectation space $(\Omega,\mathcal{H},\mathbb{E})$ under the foundational assumptions. Suppose $\mathcal{X}$ is the input alphabet and $\mathcal{Y}$ is the output alphabet. The uncertain-distribution channel model is potentially characterized by a family of transition probability matrices $\left\{\boldsymbol{P}_{\lambda } \in [0,1]^{\mathcal{X}\times \mathcal{Y}}\right\}_{\lambda \in \Lambda }$. After denoting $R_{\textrm{c}}$ as the achievable channel coding rate of communication systems that rely on the proposed uncertain-distribution channel model, the upper bound and a cluster point of $R_{\textrm{c}}$ for any given scenario are determined under the maximum error probability criterion and the minimum error probability criterion, respectively, thus preliminarily characterizing the performance of communications systems described by the proposed uncertain-distribution channel models.
		
	\item[$\bullet$] We also propose a generalized nonlinear rate-distortion coding theorem. On the sublinear expectation space $(\Omega,\mathcal{H},\mathbb{E})$, we characterize the lossy compression for uncertain-distribution sources by the sublinear expectation, and establish the concept of maximum expected distortion and minimum expected distortion. Then, the rate distortion function is extended in the sublinear expectation space and defined based on the nonlinear mutual information. On this basis, we prove that the rate distortion function based on the nonlinear mutual information is a cluster point of the lossy compression performance of uncertain-distribution sources under the minimum expected distortion criterion.
	\end{itemize}
	
Being the first to study information theory from the nonlinear expectation theory perspective, this work represents a paradigm shift, extending the classical information theory and communication models from linear to nonlinear. This new framework not only accommodates the complex randomness of real-world scenarios but also provides novel insights into the behavior of information sources and communication systems under higher-level of uncertainty.
	
\subsection{Organization}
The remainder of the paper is organized as follows. In Section \ref{sec2}, we briefly introduce the basics of nonlinear expectation theory, the specific definitions of mathematical \textit{capacity} generated by sublinear expectation, and the strong law of large numbers under sublinear expectations. In Section \ref{sec3}, we establish a nonlinear communication model, and present the definitions of uncertain-distribution sources and uncertain-distribution channels, together with the corresponding definitions of nonlinear information measures. In Section \ref{sec4}, the main results of our nonlinear information theory are presented, including fundamental properties and coding theorems for uncertain-distribution sources and uncertain-distribution channels under nonlinear expectations. In Section \ref{sec5} we show some examples and applications of uncertain-distribution sources and uncertain-distribution channels in the framework of nonlinear information theory, and present simulation results to consolidate our theoretic study. We draw conclusions in Section \ref{sec6}. In Appendix \ref{app1}, we provide an intuitive explanation of why we use the nonlinear expectation theory and compare it with traditional probabilistic methods to highlight the differences and advantages, and the proofs are deferred to the appendices \ref{app2}-\ref{app5}.

\section{Basic Notions of Nonlinear Expectation Theory}\label{sec2}
Peng \cite{peng2007g} established the nonlinear expectation theory, which extends classical probability theory by using sublinear expectations instead of linear expectations. This theory is an emerging field of mathematics and has offered significant benefits to practical applications such as finance and risk management \cite{peng2023improving}. The central idea of this theory is that random variables have uncertainty in the distribution itself.  By incorporating the distribution uncertainty, this theory provides a more  general framework for modeling complex stochastic systems. In this section, we briefly introduce the major concepts of nonlinear expectation theory. To facilitate readers' understanding of this theory, we will provide some interpretative statements as we introduce the major concepts. Readers may also first refer to the main results and the intuitive explanations provided in later sections and come back to this section as needed.
		
Nonlinear expectation theory is a novel axiomatic system that is parallel to probability theory. Just as probability theory starts from the axiomatized system of probability space, the theory of nonlinear expectation establishes its basic theoretical framework at the level of the axiomatized system of sublinear expectation space. The development of this theory commences with directly defining the sublinear expectation functional for uncertain quantities (i.e. random variables). It is important to note that although terms like ``expectation'' and ``distribution'' continue to be used in this new theory, their meanings are to be reunderstood. For details of the following definitions and theorems, readers can refer to \cite{peng2019nonlinear, peng2017theory}.
	
Let $\Omega$ be a sample set\footnote{The set $\Omega$ here plays the same role as the sample space $\Omega$ in probability theory.}. Let $\mathcal{H}$ be a linear space of real valued functions defined on $\Omega$. Suppose that $\mathcal{H}$ satisfies the following three conditions:
	
1) $c\in \mathcal{H}$ for each constant $c \in \mathbb{R}$.
	
2) $|X|\in \mathcal{H}$ if $X \in \mathcal{H}$.
	
3) $\phi(X_1,\cdots, X_n)\in \mathcal{H}$ for each $\phi \in \mathbb{L}^{\infty}(\mathbb{R}^n)$ if $X_1,\cdots, X_n\in \mathcal{H}$.
	
\noindent Here $\mathbb{L}^{\infty}(\mathbb{R}^n)$ denotes the space of bounded Borel-measurable functions on $\mathbb{R}^n$. The functions in $\mathcal{H}$ are called random variables. The tuple $(\Omega,\mathcal{H})$ is called the space of random variables.
	
\begin{definition}\label{def:sublinear expectation}
	A \textit{sublinear expectation} $\mathbb{E}:\mathcal{H} \rightarrow \mathbb{R}$ is a functional defined on the space $\mathcal{H}$ satisfying the following properties:
	\begin{enumerate}
		\item{\textit{Monotonicity}: $\mathbb{E}[X]\geq \mathbb{E}[Y]$, if $X\geq Y$.}
		\item{\textit{Constant preserving}: $\mathbb{E}[c]=c,\forall c \in \mathbb{R}$.}
		\item{\textit{Sub-additivity}: $\mathbb{E}[X+Y]\leq \mathbb{E}[X]+\mathbb{E}[Y]$, $\forall X,Y \in \mathcal{H}$.}
		\item {\textit{Positive homogeneity}: $\mathbb{E}[\lambda X]=\lambda \mathbb{E}[X]$, for $\lambda \geq 0$.}
	\end{enumerate}
	The triplet $(\Omega,\mathcal{H},\mathbb{E})$ is called a \textit{sublinear expectation space}. If $\mathbb{E}$ satisfies only 1) and 2), then $\mathbb{E}$ is called a nonlinear expectation and $(\Omega,\mathcal{H},\mathbb{E})$ is called a nonlinear expectation space.	
\end{definition}

\begin{remark}
	Note that a sublinear expectation $\mathbb{E}$ has positive homogeneity, and for $\lambda \in \mathbb{R}$, we only have
	\begin{equation}
		\mathbb{E}[\lambda X]=\lambda^{+}\mathbb{E}[X]+\lambda^{-}\mathbb{E}[-X],
	\end{equation}
	where $\lambda^{+}=\max \{ \lambda,0 \}$ and $\lambda^{-}=\max \{ -\lambda,0 \}$. Additionally, given a sublinear expectation $\mathbb{E}$, the conjugate expectation $\mathcal{E}$ of sublinear expectation $\mathbb{E}$ is defined as
	\begin{equation}\label{eq:conjugate_expectation}
		\mathcal{E}[X]:=-\mathbb{E}[-X],\quad \forall X\in \mathcal{H}.
	\end{equation}
\end{remark}
		
Upon giving the Definition \ref{def:sublinear expectation} of sublinear expectation, the following theorem shows that a sublinear expectation $\mathbb{E}$ can be represented as the upper expectation of a subset of linear expectations $\{E_{\theta}:\theta \in \Theta \}$, i.e. $\mathbb{E}[X]=\underset{\theta \in \Theta}{\sup} E_{\theta}[X]$. Here the linear expectation is the notion in classical probability theory.
	
	\begin{theorem}\label{thm:representation theorem}
		Let $\mathbb{E}$ be a sublinear expectation defined on $(\Omega,\mathcal{H})$. Then there exists a family of linear functionals $\{E_{\theta}:\theta \in \Theta \}$ defined on $\mathcal{H}$, such that
		\begin{equation}
		\mathbb{E}[X]=\sup_{\theta \in \Theta }E_{\theta}[X], \quad X\in \mathcal{H}.
		\end{equation}
		If $\mathbb{E}$ also satisfies $\mathbb{E}[X_{i}]\downarrow 0$ for each sequence $\{ X_{i}\}_{i=1}^{\infty}$ of random variables in $\mathcal{H}$ such that $X_{i}\downarrow 0$, then for each $\theta \in \Theta$, there exists a probability measure $P_{\theta}$ defined on the measurable space $(\Omega,\sigma(\mathcal{H}))$ such that
		\begin{equation}
		E_{\theta}[X]=\int_{\Omega} XdP_{\theta}, \quad \forall X\in \mathcal{H}.
		\end{equation}
		Then $E_{\theta}$ is also denoted as $E_{P_{\theta}}$ and $\mathbb{E}[\cdot]=\underset{\theta \in \Theta}{\sup} E_{P_{\theta}}[\cdot]$. Here ``$ X_i \downarrow 0$ " means $X_i$ monotonically decreases to zero, and $\sigma(\mathcal{H})$ is the smallest $\sigma$-algebra generated by $\mathcal{H}$.
	\end{theorem}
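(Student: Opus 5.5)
The first part is a Hahn--Banach argument; the second applies the Daniell--Stone theorem to each linear functional separately.

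\emph{Step 1: a dominated linear functional at each point.} Fix $Y\in\mathcal{H}$. On the one-dimensional subspace $L_Y=\{cY:c\in\mathbb{R}\}$ define $\ell_Y(cY)=c\,\mathbb{E}[Y]$. We check that $\ell_Y\le \mathbb{E}$ on $L_Y$:
\begin{itemize}
\item for $c\ge 0$ this is positive homogeneity;
\item for $c<0$ we need $c\,\mathbb{E}[Y]\le |c|\,\mathbb{E}[-Y]$, i.e. $-\mathbb{E}[Y]\le \mathbb{E}[-Y]$. This follows from sub-additivity, since $0=\mathbb{E}[0]\le \mathbb{E}[Y]+\mathbb{E}[-Y]$.
\end{itemize}
Since $\mathbb{E}$ is sublinear (sub-additive and positively homogeneous), the Hahn--Banach theorem extends $\ell_Y$ to a linear functional $E_Y$ on $\mathcal{H}$ with $E_Y\le\mathbb{E}$ everywhere. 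The extension is algebraic, so no topology on $\mathcal{H}$ is needed.

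\emph{Step 2: properties of $E_Y$ and the representation.} Each $E_Y$ is monotone. If $X\ge 0$, then $E_Y[-X]\le \mathbb{E}[-X]\le \mathbb{E}[0]=0$, so $E_Y[X]\ge 0$. Each $E_Y$ also preserves constants: $E_Y[c]\le c$ and $-E_Y[c]=E_Y[-c]\le -c$. Set $\Theta=\mathcal{H}$ and $E_\theta=E_Y$. Then $\sup_\theta E_\theta[X]\le \mathbb{E}[X]$ for every $X$, with equality attained at $\theta=X$ because $E_X[X]=\mathbb{E}[X]$. This proves the first assertion.

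\emph{Step 3: countable additivity under the continuity hypothesis.} Suppose $X_i\downarrow 0$ in $\mathcal{H}$. By monotonicity of each $E_\theta$,
\[
0\le E_\theta[X_i]\le \mathbb{E}[X_i]\to 0 .
\]
So every $E_\theta$ is a positive linear functional on $\mathcal{H}$ with the Daniell continuity property. The space $\mathcal{H}$ is a vector lattice: $X\vee Y=\tfrac12(X+Y+|X-Y|)$ lies in $\mathcal{H}$ by conditions 1)--2). It also satisfies Stone's condition, since $X\wedge 1\in\mathcal{H}$ by condition 3), or simply because constants belong to $\mathcal{H}$. Hence the Daniell--Stone theorem gives a unique measure $P_\theta$ on $\sigma(\mathcal{H})$ with $E_\theta[X]=\int X\,dP_\theta$ for all $X\in\mathcal{H}$. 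Taking $X\equiv 1$ gives $P_\theta(\Omega)=E_\theta[1]=1$, so $P_\theta$ is a probability measure.

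\emph{Main obstacle.} The delicate step is applying Daniell--Stone correctly. One must verify the lattice and Stone conditions, and check that integrability of every $X\in\mathcal{H}$ with respect to $P_\theta$ follows from the construction, namely by extending $E_\theta$ from $\mathcal{H}$ to monotone limits. The Hahn--Banach part is routine.
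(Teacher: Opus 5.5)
Your proof is correct and follows essentially the standard argument in Peng's monographs, which the paper cites in place of giving its own proof: extend $c\,\mathbb{E}[Y]$ from $\mathbb{R}Y$ by Hahn--Banach under the majorant $\mathbb{E}$ (the supremum is then a maximum, attained at $\theta=X$), and apply the Daniell--Stone theorem to each dominated functional $E_\theta$, whose continuity follows from $0\le E_\theta[X_i]\le\mathbb{E}[X_i]$. One small slip: $x\mapsto x\wedge 1$ is unbounded below, so condition 3) alone does not give $X\wedge 1\in\mathcal{H}$, but your alternative justification (vector lattice plus constants) is the correct one.
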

	
The above theorem is called the representation theorem of sublinear expectation. This theorem indicates that a sublinear expectation can be equivalently characterized by a probability measure family $\mathcal{P}=\{P_{\theta}\}_{\theta \in \Theta}$. In the remainder of this paper, we exclusively focus on the sublinear expectation $\mathbb{E}$ satisfying the conditions in Theorem \ref{thm:representation theorem} unless otherwise specified. 

\begin{remark}
	We call the family $\mathcal{P}$ the uncertain probability measures associated with the sublinear expectation $\mathbb{E}$. For a given $n$-dimensional random variable $X$ defined on a sublinear expectation space $(\Omega,\mathcal{H},\mathbb{E})$, the probability measure family $\mathcal{P}$ gives rise to a family of probability distributions $\{ F_X(\theta,A)=P_{\theta}(X\in A) , \ A\in \mathcal{B}(\mathbb{R}^n) \}_{\theta\in \Theta}$, where $\mathcal{B}(\mathbb{R}^n)$ is the Borel $\sigma$-algebra on $\mathbb{R}^n$. For notational convenience, we also denote this family of probability distributions as $\{p_{\theta}(X)\}_{\theta \in \Theta}$. In such a case, we claim the corresponding uncertain probability distributions of $X$ are $\left\{p_{\theta }(X)\right\}_{\theta \in \Theta }$.
\end{remark}
		
\begin{remark}
	In Theorem \ref{thm:representation theorem}, the sublinear expectation is considered from the perspective of the supremum value of a family of linear expectations. For certain scenarios, if people prefer the perspective of the infimum value, they can also consider the term
	\begin{equation}
		\mathcal{E}[X]=-\mathbb{E}[-X]=\inf_{\theta \in \Theta }E_{\theta}[X].
	\end{equation}
	Therefore, the essence of Theorem \ref{thm:representation theorem} lies in its ability to derive a family of probability measures. Whether to consider the supremum or infimum value depends on the direction of interest.
\end{remark}
		
In this new framework, the nonlinear versions of the notions of independence and identical distribution play important roles, and these notions are shown in the following two definitions.
		
\begin{definition}\label{def:identically distributed}
	Let $X_1$ and $X_2$ be two $n$-dimensional random variables defined on sublinear expectation spaces $(\Omega,\mathcal{H},\mathbb{E}_1)$ and $(\Omega,\mathcal{H},\mathbb{E}_2)$, respectively. They are called \textit{identically distributed}, denoted by $X_1\overset{d}{=}X_2$, if for any $\phi \in \mathbb{L}^{\infty}(\mathbb{R}^n)$ we have
	\begin{equation}
	\mathbb{E}_1[\phi(X_1)]=\mathbb{E}_2[\phi(X_2)].
	\end{equation}
	We say that the distribution of $X_1$ is stronger than that of $X_2$ if $\mathbb{E}_1 [\phi(X_1 )]\geq \mathbb{E}_2 [\phi(X_2 )]$, for each $\phi\in \mathbb{L}^{\infty}(\mathbb{R}^n)$.
\end{definition}
		
\begin{remark}
	The stronger the distribution of random variables on sublinear expectation space, the greater the uncertainty of their distribution families.
\end{remark}
		
\begin{definition}\label{def:independent}
    In a sublinear expectation space $(\Omega,\mathcal{H},\mathbb{E})$, a random variable $Y$ is said to be \textit{independent} of another random variable $X$ under $\mathbb{E}[\cdot]$, if for any $\phi \in \mathbb{L}^{\infty}(\mathbb{R}^2)$ we have
	\begin{equation}\label{eq:def_independence}
	\mathbb{E}[\phi(X,Y)]=\mathbb{E}[\mathbb{E}[\phi(x,Y)]_{x=X}].
	\end{equation}
\end{definition}
		
It is important to observe that, under a sublinear expectation,  the statement ``$Y$ is independent of $X$'' does not in general imply that ``$X$ is independent of $Y$''. This is considerably different from traditional independence in classical probability theory, in which these two statements are equivalent. We give an example below, which also appears in \cite{peng2019nonlinear}.
		
Consider the case where $\mathbb{E}$ is a sublinear expectation and $X,Y\in \mathcal{H}$ are identically distributed. We assume that $\mathbb{E}[X]=-\mathbb{E}[-X]=0$, $\overline{\sigma}^2=\mathbb{E}[X^2]>\underline{\sigma}^2=-\mathbb{E}[-X^2]$, and $\mathbb{E}[|X|]>0$. Then we have $\mathbb{E}[X^{+}]=\frac{1}{2}\mathbb{E}[|X|+X]=\frac{1}{2}\mathbb{E}[|X|]>0$. We now calculate $\mathbb{E}[XY^2]$. In the case of $Y$ independent of $X$, we first have $\mathbb{E}[xY^2]=x^{+}\mathbb{E}[Y^2]+x^{-}\mathbb{E}[-Y^2]=x^{+}\overline{\sigma}^2-x^{-}\underline{\sigma}^2$, then $\mathbb{E}[xY^2]_{x=X}=X^{+}\overline{\sigma}^2-X^{-}\underline{\sigma}^2$, where $X^+=\max\{X, 0\}$ and $X^-=\max \{-X, 0 \}$. Therefore, we have
\begin{equation}
	\mathbb{E}[XY^2]=\mathbb{E}[\mathbb{E}[xY^2]_{x=X}]=\mathbb{E}[X^{+}\overline{\sigma}^2-X^{-}\underline{\sigma}^2]=(\overline{\sigma}^2-\underline{\sigma}^2)\mathbb{E}[X^{+}]>0.
\end{equation}
In the case of $X$ independent of $Y$, due to $y^2\geq 0$, we have $\mathbb{E}[Xy^2]=y^2\mathbb{E}[X]=0$. Therefore, we obtain
\begin{equation}
    \mathbb{E}[XY^2]=\mathbb{E}[\mathbb{E}[Xy^2]_{y=Y}]=0.
\end{equation}
		
In fact, cases of asymmetric independence between random variables in the real world are more common than cases of symmetric independence \cite{manner2010testing,vinod2022asymmetric}. This is especially true for random variables related to time order. We know that time has directionality, and there is a significant asymmetry between random variables that have occurred and those that will occur. Although random variables may not satisfy traditional independence in practice, it is possible that they satisfy the independence under sublinear expectation.
	
Motivated by the notion of independent and identically distributed in a sublinear expectation space, we also adopt the concept of IID\footnote{Note that the notion of ``IID" is defined under the framework of nonlinear expectation theory, and it is different from ``i.i.d." in classical probability theory.} random variables sequence under sublinear expectation, which is shown in Definition \ref{def:IID} below.
	
	\begin{definition}\label{def:IID}
		Suppose	that $\{X_{1},X_{2},\cdots,X_{n},\cdots\}$ is a sequence of random variables defined on a sublinear expectation space $(\Omega,\mathcal{H},\mathbb{E})$.
		\begin{enumerate}
			\item {$X_{1},X_{2},\cdots,X_{n},\cdots$ are said to be \textit{identically distributed} if for each pair of $X_{i},X_{j},i \neq j$ and for each Borel-measurable function $\phi$ such that $\phi(X_{i}),\phi(X_{j})\in \mathcal{H}$, there holds
				\begin{equation}
				\mathbb{E}[\phi(X_{i})]=\mathbb{E}[\phi(X_{j})].
				\end{equation}}
			\item {Random variable $X_{n}$ is said to be \textit{independent} to $\boldsymbol{X}:=(X_{1},\cdots,X_{n-1})$ under sublinear expectation $\mathbb{E}$ if for each Borel-measurable function $\phi$ on $\mathbb{R}^{n}$ with $\phi(\boldsymbol{X},X_{n})\in \mathcal{H}$ and $\phi(\boldsymbol{x},X_{n})\in \mathcal{H}$ for each $\boldsymbol{x}\in \mathbb{R}^{n-1}$, we have
				\begin{equation}
				\mathbb{E}[\phi(\boldsymbol{X},X_{n})]=\mathbb{E}\left[\mathbb{E}[\phi(\boldsymbol{x},X_{n})]_{\boldsymbol{x}=\boldsymbol{X}}\right].
				\end{equation}}
			\item {A sequence of random variables $\{X_{1},X_{2},\cdots,X_{n},\cdots\}$ is said to be \textit{IID} if $X_{1},X_{2},\cdots,X_{n},\cdots$ are identically distributed and $X_{i}$ is independent to $\boldsymbol{X}:=(X_{1},\cdots,X_{i-1})$ for each $i \geq 1$.}
		\end{enumerate}
	\end{definition}
	
	In order to analyze information sources under the framework of nonlinear information theory, it is necessary to introduce the mathematical concept of \textit{capacity} generated by a sublinear expectation $\mathbb{E}$.
	
	\begin{definition}\label{def:capacity}
		A pair of capacities $(V,v)$ is said to be generated by a sublinear expectation $\mathbb{E}$, if
		\begin{equation}
		V(A):=\mathbb{E}[I_{A}],\quad v(A):=-\mathbb{E}[-I_{A}], \quad \forall A \in \mathcal{F},
		\end{equation}
		where 
		\begin{equation}
		{I_A}(\omega) = \left\{ {\begin{array}{*{20}{c}}
			{1,}&{{\rm{if}}\,\,\omega  \in A}\\
			{0,}&{{\rm{otherwise}}}
			\end{array}} \right.
		\end{equation}
		is the indicator function of event $A$, and $\mathcal{F}$ is a $\sigma$-algebra.
	\end{definition}
	
Note that due to $\mathbb{E}[X]=\underset{\theta \in \Theta}{\sup}E_{P_{\theta}}[X]$, we can intuitively understand the pair $(V,v)$ in Definition \ref{def:capacity} as $V(A)=\underset{\theta \in \Theta}{\sup}P_{\theta}(A)$ and $v(A)=\underset{\theta \in \Theta}{\inf}P_{\theta}(A)$. 
		
A conclusion that is drawn from the strong law of large numbers under sublinear expectations and is instrumental in developing our nonlinear information theory is presented as follows.
		
\begin{theorem}\label{thm:SLLN}
	Let $\{X_i\}_{i=1}^{\infty}$ be a sequence of IID random variables defined on a sublinear expectation space $(\Omega,\mathcal{H},\mathbb{E})$ where $\mathbb{E}[\cdot]=\underset{\theta \in \Theta}{\sup} E_{P_{\theta}}[\cdot]$. Suppose that $\{P_{\theta}\}_{\theta\in \Theta}$ is a countably-dimensional weakly compact family of probability measures on $(\Omega,\sigma(\mathcal{H}))$ in the sense that, for any bounded $Y_1,Y_2,\cdots \in \mathcal{H}$ and any sequence $\{P_n\}\subset \{P_{\theta}\}_{\theta\in \Theta}$, there is a subsequence $\{n_k\}$ and a probability measure $P\in \{P_{\theta}\}_{\theta\in \Theta}$ such that
	\begin{equation}
		\lim_{k\rightarrow\infty} P_{n_k}(\phi(Y_1,\cdots,Y_d)) =P(\phi(Y_1,\cdots,Y_d)).
	\end{equation}
	Suppose that $C(\{x_n\})$ is the set of cluster points of the sequence $\{x_n\}$. Then, for any $b\in [-\mathbb{E}[-X_1],\mathbb{E}[X_1]]$ we have
	\begin{equation}
	V\left(b\in C\left(\left\{\frac{\sum_{i=1}^{n}X_i}{n}\right\}\right)\right) = 1.
	\end{equation}
\end{theorem}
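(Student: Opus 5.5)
The plan is to prove the statement by constructing, inside the family $\{P_\theta\}_{\theta\in\Theta}$, one probability measure $Q$ under which the averages $S_n/n$, with $S_n=\sum_{i=1}^n X_i$, oscillate so slowly that $b$ is a cluster point $Q$-almost surely. Since $V(A)=\sup_\theta P_\theta(A)\ge Q(A)$, it then suffices to show $Q(b\in C(\{S_n/n\}))=1$. Write $\overline{\mu}=\mathbb{E}[X_1]$ and $\underline{\mu}=-\mathbb{E}[-X_1]$. We may assume $\underline{\mu}<\overline{\mu}$, because the degenerate case reduces to the standard strong law (convergence under every $P_\theta$). As a preliminary step, we truncate and assume the $X_i$ are bounded. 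This can be recovered via the usual moment condition $\lim_c\mathbb{E}[(|X_1|-c)^+]=0$, which is implicit in the cited result \cite{zhang2023sufficient}.

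\textbf{Step 1 (blocks with alternating drift).} Choose integers $n_k$ growing super-exponentially, e.g.\ $n_k=k^k$, so that $n_{k-1}/n_k\to0$. On the odd blocks $(n_{k-1},n_k]$ we want $X_i$ to behave as if its mean were $\overline{\mu}$, and on the even blocks as if its mean were $\underline{\mu}$. Concretely, by the definition of IID (independence of $X_i$ from $(X_1,\dots,X_{i-1})$) together with the representation theorem, there is, for each $i$, a conditional selection of a measure attaining (or nearly attaining) $\mathbb{E}[X_i]$ or $\mathcal{E}[X_i]$ given the past. Using the countably-dimensional weak compactness hypothesis, we patch these finite-dimensional choices into a single $Q\in\{P_\theta\}$. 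To do so, we take measures $Q_m$ realizing the desired conditional laws for $X_1,\dots,X_m$, extract a subsequence converging on all bounded functions of finitely many $X_i$, and let the limit be $Q$.

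\textbf{Step 2 (martingale SLLN under $Q$).} Under $Q$, set $Z_i=X_i-E_Q[X_i\mid X_1,\dots,X_{i-1}]$. These are bounded martingale differences, so $\frac1n\sum_{i\le n} Z_i\to0$ $Q$-a.s. The conditional means equal (up to $\varepsilon_i\to0$) $\overline{\mu}$ on odd blocks and $\underline{\mu}$ on even blocks. Because $n_{k-1}/n_k\to0$, we get $S_{n_k}/n_k\to\overline{\mu}$ along odd $k$ and $S_{n_k}/n_k\to\underline{\mu}$ along even $k$, $Q$-a.s.

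\textbf{Step 3 (intermediate value).} The increments satisfy $|S_{n+1}/(n+1)-S_n/n|\le 2\sup|X|/(n+1)\to0$. Hence the sequence $S_n/n$ crosses every level between $\underline{\mu}$ and $\overline{\mu}$ infinitely often with vanishing step size. So every $b\in[\underline{\mu},\overline{\mu}]$ is a cluster point $Q$-a.s., which gives $V(\cdot)=1$. (For unbounded $X_i$ one still has $X_n/n\to0$ a.s. from the moment condition.)

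The main obstacle is Step 1. We must show that the measure glued via weak compactness really belongs to the family and really has the prescribed conditional means. The weak limit preserves only finite-dimensional expectations of bounded continuous or measurable $\phi(Y_1,\dots,Y_d)$. So we would verify the conditional-mean property in the integrated form $E_Q[(X_i-\overline{\mu})\,g(X_1,\dots,X_{i-1})]\ge -\varepsilon_i E_Q[g]$ for $g\ge0$ bounded, which is a finite-dimensional statement and passes to the limit. If this becomes too delicate, we would simply invoke the corresponding result of \cite{zhang2023sufficient} directly.
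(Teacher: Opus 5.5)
The paper does not prove this statement; it imports it from \cite{zhang2023sufficient}, so your fallback is exactly what the authors do. Your architecture is the natural one: a single measure $Q$ in the family under which $S_n/n$ is driven alternately to $\overline{\mu}$ and $\underline{\mu}$ on blocks with $n_{k-1}/n_k\to0$, followed by $V\ge Q$ and the vanishing-increment argument. Steps 2--3 are sound for bounded $X_i$, given Step 1.

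The genuine gap is in Step 1, but not where you locate it. Passing to the limit $m\to\infty$ is fine: the limit lies in the family by hypothesis, and the integrated conditional-mean inequalities survive. What is not established is the existence of each $Q_m$. The representation theorem gives you a family of measures, not conditional kernels, and $\{P_\theta\}_{\theta\in\Theta}$ is not assumed stable under pasting. A measure assembled by choosing a (nearly) optimal $P_\theta$ separately for each past $(x_1,\dots,x_{i-1})$ and concatenating these choices need not belong to the family. So the inequality $Q(A)\le V(A)$, on which everything rests, is unjustified.

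The missing idea is to obtain the conditional means from one global optimization, using sublinear independence twice.

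(a) Let $\boldsymbol{X}=(X_1,\dots,X_{i-1})$ and let $g\ge0$ be bounded Borel. Independence and positive homogeneity give $\mathbb{E}[(X_i-\overline{\mu})g(\boldsymbol{X})]=\mathbb{E}\big[(g(\boldsymbol{x})\,\mathbb{E}[X_i-\overline{\mu}])_{\boldsymbol{x}=\boldsymbol{X}}\big]=0$. Hence $E_P[X_i\mid\boldsymbol{X}]\le\overline{\mu}$ $P$-a.s.\ for \emph{every} $P$ in the family, and symmetrically $E_P[X_i\mid\boldsymbol{X}]\ge\underline{\mu}$.

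(b) Set $\epsilon_i=1$, $c_i=\overline{\mu}$ on odd blocks and $\epsilon_i=-1$, $c_i=\underline{\mu}$ on even blocks, and let $\Phi_m=\sum_{i\le m}\epsilon_i(X_i-c_i)$. Iterating independence gives $\mathbb{E}[\Phi_m]=0$. Choose $P_m$ in the family with $E_{P_m}[\Phi_m]\ge-1/m$. By (a), $E_{P_m}[\Phi_m]=\sum_{i\le m}E_{P_m}[D_i]$ with $D_i=\epsilon_i(E_{P_m}[X_i\mid\boldsymbol{X}]-c_i)\le0$. So every $E_{P_m}[D_i]\ge-1/m$, whence $E_{P_m}[\epsilon_i(X_i-c_i)g(\boldsymbol{X})]\ge-\|g\|_\infty/m$ for bounded continuous $g\ge0$.

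Countably-dimensional weak compactness with $Y_j=X_j$ gives a subsequential limit $Q$ in the family. The inequality passes to the limit, and (a) applied to $Q$ itself gives the reverse inequality. Therefore $E_Q[X_i\mid X_1,\dots,X_{i-1}]=c_i$ exactly for every $i$. No pasting is needed, and Step 2 then runs with $\varepsilon_i=0$.

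Two smaller points. First, the degenerate case $\underline{\mu}=\overline{\mu}$ is not the classical SLLN, since the $X_i$ need not be classically i.i.d.\ under any $P_\theta$. However, (a) makes $X_i-\mu$ martingale differences under every $P_\theta$, so Step 2 covers it.

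Second, the statement as printed carries no moment hypothesis. Your argument proves the bounded case, which is all the paper uses. The unbounded case does not follow from $\lim_c\mathbb{E}[(|X_1|-c)^+]=0$ alone. Getting $X_n/n\to0$ under $Q$ via Borel--Cantelli needs $\sum_n V(|X_1|>\varepsilon n)<\infty$, i.e.\ a finite Choquet integral of $|X_1|$ under $V$, which uniform integrability does not imply. You would also have to control the truncation error under $Q$.
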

		
In summary, as we delve into the application of nonlinear expectation theory in our subsequent studies, it is crucial to bear in mind that this theory offers a robust tool for modeling and understanding the complexities of real processes, particularly in the presence of randomly dynamic environmental changes and probability model uncertainty.

\section{Establishment of Nonlinear Information Theory}\label{sec3}
In this section, we first propose a new nonlinear communication model, and then derive the definitions of the essential new measures, including nonlinear information entropy, nonlinear joint entropy, nonlinear conditional entropy and nonlinear mutual information, for nonlinear information theory on sublinear expectation space.
	
The communication model in classical information theory typically posits that a random variable describing messages obeys a specific probability distribution. A clear definition of the probability model is crucial to exploring classical information theory. However, uncertainty plays a central role in the emergence of a significant amount of unanticipated and heterogeneous data traffic transmitted over a number of randomly varying channels which may not have specific distribution. Although the methods of statistics and probability theory can aid us in understanding the laws and structures of information sources and communication channels, it is challenging to completely eliminate the higher-level uncertainties inherent in these methods themselves. If the realistic circumstances deviate from the predetermined distributions, the classical communication model becomes inadequate. The probability model itself is usually imprecise in practice. It is often difficult to adequately characterize the distributions of information sources and communication channels by assuming prior known probability models, because of the high degree of uncertainties inherent in both information sources and communication channels. In some scenarios, one can only determine the range of probability distributions.
	
The newly developed nonlinear expectation theory can be utilized to analyze the uncertainties of probability models themselves in the complex physical world. This motivates us to consider characterizing the communication model on a sublinear expectation space, resulting in a nonlinear communication model with uncertain distributions, as shown in Fig. \ref{fig_1}. 
	
\begin{figure*}[h]
	\centering
	\includegraphics[scale=0.55]{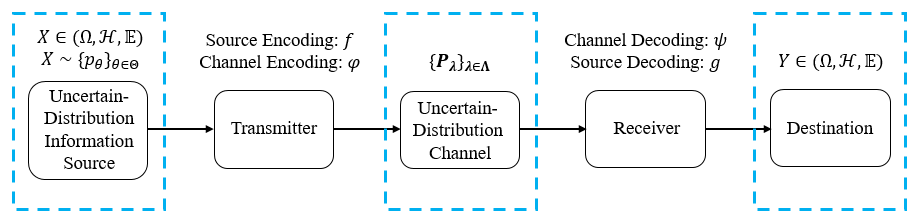}
	\captionsetup{font={scriptsize}}
	\caption{Nonlinear communication model with uncertain distributions.}
	\label{fig_1}
\end{figure*}
	
Compared with the classical communication model, in the proposed nonlinear communication model, we assume that the random variables describing the messages sent by information sources are defined on the sublinear expectation space $(\Omega,\mathcal{H},\mathbb{E})$. In this case the information source is referred to as \textit{uncertain-distribution information source}. The implicit meaning is that there are higher-level uncertainties inherent in the distributions of the random variables describing information sources, which do not obey a deterministic probability distribution. Therefore, it is necessary to use families of probability distributions to characterize these individual random variables. We also assume that the interference and noise that affect signal transmission over the communication channel are random variables defined on a sublinear expectation space, meaning that the transition probability matrices of communication channels are also uncertain. In this case the communication channel is called \textit{uncertain-distribution channel}. In summary, there are two types of uncertainties regarding the probability models involved in the transmission process $X \rightarrow Y$ according to our nonlinear communication model. One is the uncertainty of the probability distributions of $X$ and the other is the uncertainty of the transition probability matrices of $X \rightarrow Y$.
	
For uncertain-distribution information sources, it is impossible to study them using a deterministic probability framework, because of the ambiguities associated with the probability models themselves. As a result, several fundamental concepts, including information entropy, joint entropy and conditional entropy, are not directly applicable. Each of them needs to be reformulated under the framework of the nonlinear information theory.
	
To elaborate a little further, for a random variable defined on a sublinear expectation space, we aim to consider the amount of information it contains from the standpoint of the circumstance with the highest uncertainty. Let $\hat{H}(X)$ denote the measurement of the amount of information contained in $X$, where $X$ is a discrete random variable on a sublinear expectation space $(\Omega,\mathcal{H},\mathbb{E})$ and the corresponding uncertain probability distributions of $X$ are characterized by $\left\{p_{\theta }(X)\right\}_{\theta \in \Theta }$. In order to reduce the mismatching error between the probabilistic model of a system and the practical system itself, it is crucial to carefully analyze a series of random variables that have unknown distributions. Especially, if the family of uncertain probability distributions of a random variable contains the uniform distribution, then the uniform distribution should be used for analyzing the amount of information embedded in the random variable, because uniform distribution represents the highest uncertainty case. Furthermore, it makes intuitive sense that the amount of information embedded in a random variable increases with the size of the uncertain probability distributions family the random variable may follow. In light of the above discussions, it is reasonable to make the following assumptions:
	
\textit{Assumptions:}
	
\begin{enumerate}
	\item{$\hat{H}(X)\leq \underset{\theta \in \Theta }{\sup } \sum _{x}p_{\theta }(x)\log \frac{1}{p_{\theta }(x)}$ and $\hat{H}(X)$ is continuous with each distribution in the set $ \left\{p_{\theta }(X)\right\}_{\theta \in \Theta }$. }
	\item{If the number of states for $X$ is $N$ and the uniform distribution $p(i)=\frac{1}{N},i=1,2,\cdots,N$, is a member of the distribution family $\{p_{\theta }(X)\}_{\theta \in \Theta }$, then $\hat{H}\left(X\right)=\log N$.}
	\item{Let $X$ and $Y$ be two random variables defined on the sublinear expectation spaces $(\Omega,\mathcal{H},\mathbb{E}_1)$ and $(\Omega,\mathcal{H},\mathbb{E}_2)$, respectively. If the distribution of $Y$ is stronger than that of $X$, then $\hat{H}(X)\leq \hat{H}(Y)$.}
	\item{If a random choice\footnote{Here the meaning of ``random choice" can be understood as an action that has multiple possible outcomes, as detailed in Section 6 of \cite{shannon1948mathematical}.} can be broken down into two successive random choices, the original $\hat{H}$ corresponding to the single random choice should be no greater than the supremum of the weighted sum of individual values of $\hat{H}$  corresponding to the two successive random choices.}
\end{enumerate}
	
Based on the above four assumptions, we derive the following theorem, which presents the mathematical expression of the amount of information contained in discrete random variables defined on a sublinear expectation space. Furthermore, in Section \ref{sec4}, we will show that several important properties can be attained by merely relying on these four fundamental assumptions.
	
\begin{theorem}\label{theentropy}
	Let $X$ be a discrete random variable on a sublinear expectation space $(\Omega,\mathcal{H},\mathbb{E})$. The corresponding uncertain probability distributions of $X$ are $\left\{p_{\theta }(X)\right\}_{\theta \in \Theta }$. Then the amount of information contained in $X$ can be expressed as $\hat{H}\left(X\right)=\underset{\theta \in \Theta }{\sup } \sum _{x}p_{\theta }(x)\log \frac{1}{p_{\theta }(x)}$.
\end{theorem}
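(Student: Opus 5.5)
The plan is to prove the two inequalities separately. Assumption 1) gives $\hat{H}(X)\leq \sup_{\theta} \sum_x p_\theta(x)\log\frac{1}{p_\theta(x)}$ directly. So the real task is the reverse inequality, $\hat{H}(X)\geq H(p_\theta)$ for every $\theta\in\Theta$, where $H(p)$ denotes the classical Shannon entropy. I would obtain it in Shannon's style: first pin down $\hat{H}$ on degenerate families that consist of a single distribution, and then use monotonicity (Assumption 3) to compare $X$ with such reduced families.

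\emph{Step 1 (single-distribution spaces).} Fix $\theta$. Consider a sublinear expectation space $(\Omega,\mathcal{H},\mathbb{E}_\theta)$ with $\mathbb{E}_\theta=E_{P_\theta}$ linear, and a variable $X_\theta$ on it with law $p_\theta$. For any bounded $\phi$ we have $\mathbb{E}[\phi(X)]=\sup_{\theta'}E_{P_{\theta'}}[\phi(X)]\geq E_{P_\theta}[\phi(X_\theta)]$. Hence, in the sense of Definition \ref{def:identically distributed}, the distribution of $X$ is stronger than that of $X_\theta$, and Assumption 3) yields $\hat{H}(X)\geq \hat{H}(X_\theta)$. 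It therefore suffices to show $\hat{H}(X_\theta)=H(p_\theta)$ when the family is a singleton.

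\emph{Step 2 (Shannon's uniqueness argument).} For singleton families, set $A(N):=\hat{H}$ of the uniform law on $N$ points. By Assumption 2), $A(N)=\log N$; this already fixes the base and constant that Shannon's argument would otherwise leave free. Next take a rational distribution $p_i=n_i/\sum_j n_j$. A uniform choice among $\sum_j n_j$ outcomes decomposes into a choice of $i$ according to $p$, followed by a uniform choice among $n_i$ outcomes. For a singleton family, the ``supremum of weighted sums'' in Assumption 4) is just the weighted sum, so
\[
\log \sum_j n_j \leq \hat{H}(p)+\sum_i p_i \log n_i ,
\]
which gives $\hat{H}(p)\geq H(p)$. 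Combined with the upper bound from Assumption 1), this gives $\hat{H}(p)=H(p)$ on rational $p$. The continuity in Assumption 1) then extends the equality to all $p_\theta$, with rationals approximating $p_\theta$ inside the singleton setting. Taking the supremum over $\theta$ in Step 1 gives $\hat{H}(X)\geq \sup_\theta H(p_\theta)$, and together with Assumption 1) this proves the theorem.

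\emph{Main obstacle.} The delicate point is Step 2. Assumption 4) is stated only as an inequality, and it bounds the \emph{original} $\hat{H}$ from above. On its face that is the wrong direction for a lower bound on $\hat{H}(p)$. I would first read the decomposition so that the single uniform choice on $\sum_j n_j$ points is the ``original'' quantity. Its value $\log\sum_j n_j$ is known exactly from Assumption 2), so the inequality in Assumption 4) then isolates $\hat{H}(p)$ from below, which is exactly what is needed. If that reading is not tenable, the fallback is to invoke Assumption 1) on each stage. That caps $\hat{H}(p)\leq H(p)$ and $\hat{H}(\text{uniform on } n_i)\leq\log n_i$. Equality in the overall decomposition then forces each of these inequalities to be tight. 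A secondary technical issue is justifying the auxiliary space in Step 1, that is, comparing variables on spaces with different expectations. This is exactly what Definition \ref{def:identically distributed} allows, since it compares $\mathbb{E}_1$ and $\mathbb{E}_2$ on $(\Omega,\mathcal{H})$.
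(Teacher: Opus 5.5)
Your proof is correct and follows the paper's proof. It uses the same ingredients: the grouping argument with Assumptions 2) and 4) for the lower bound on rational distributions; Assumption 3) to pass between a family and its sub-families; and Assumption 1) for the upper bound and the irrational case. Your first reading of Assumption 4), with the uniform choice on $\sum_j n_j$ points as the ``original'' quantity, is exactly the paper's reading, so the fallback is unnecessary (it would in any case need that same inequality to supply the equality it invokes).

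The only difference is organizational. The paper runs the grouping argument on the whole family via a rational member $p_{\theta_0}$, obtaining $\hat{H}(X)\geq -\mathbb{E}[\log p_{\theta_0}(X)]\geq \inf_{\theta}\sum_x p_\theta(x)\log\frac{1}{p_\theta(x)}$, and then argues by contradiction with a sub-family $\Theta^*$. You run it only on singleton families and compare $X$ with each $p_\theta$ directly, which is a slightly cleaner version of the same argument.
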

	
\textit{Proof}: See Appendix \ref{app2}. $\hfill\blacksquare$
	
Without loss of generality, in this paper we consider only discrete random variables. Then, based on Theorem \ref{theentropy}, the nonlinear information entropy of discrete random variables is formally defined below. 
	
\begin{definition}
	Let $X$ be a discrete random variable on a sublinear expectation space $(\Omega,\mathcal{H},\mathbb{E})$. The corresponding uncertain probability distributions of $X$ are $\left\{p_{\theta }(X)\right\}_{\theta \in \Theta }$. The {\it nonlinear information entropy}, or abbreviated as {\it nonlinear entropy}, is formulated as  
	\begin{equation}
		\hat{H}\left(X\right):=\sup_{\theta \in \Theta} \sum _{x}p_{\theta }\left(x\right)\log \frac{1}{p_{\theta }\left(x\right)}.
	\end{equation}
\end{definition}
	
The amount of information embedded in uncertain-distribution information sources is described by the nonlinear information entropy presented above. Furthermore, to describe the uncertainty involved in information transmission processes, we define \textit{nonlinear joint entropy} and \textit{nonlinear conditional entropy} of two discrete random variables on a sublinear expectation space, as follows.
	
	\begin{definition}
		Let $X,Y$ be two discrete random variables on a sublinear expectation space $(\Omega,\mathcal{H},\mathbb{E})$. The corresponding uncertain probability distributions of $X$ are $\{p_{\theta}(X)\}_{\theta \in \Theta}$, and the corresponding uncertain transition probability matrices of $Y$ given $X$ are $\{\boldsymbol{P}_{\lambda}(Y|X)\}_{\lambda \in \Lambda}$. Then
		\begin{enumerate}
			\item {the {\it nonlinear joint entropy} of $X$ and $Y$ is defined as:
				\begin{equation}
				\hat{H}\left(X,Y\right):=\sup_{\theta \in \Theta} \sup_{\lambda \in \Lambda} \sum _{x,y}p_{\theta }\left(x\right)p_{\lambda }\left(y|x\right)\log \frac{1}{p_{\theta }\left(x\right)p_{\lambda }\left(y|x\right)},
				\end{equation}}
			
			\item {the {\it nonlinear conditional entropy} of $Y$ given $X$ is defined as:
				\begin{equation}
				\hat{H}\left(Y|X=x\right):=\sup_{\lambda \in \Lambda} \sum _{y}p_{\lambda }\left(y|x\right)\log \frac{1}{p_{\lambda }\left(y|x\right)}
				\end{equation}
				and
				\begin{equation}
				\hat{H}\left(Y|X\right):=\sup_{\theta \in \Theta} \sum _{x}p_{\theta }\left(x\right)\hat{H}\left(Y|X=x\right).
				\end{equation}}
		\end{enumerate}
	\end{definition}
	
	The example below may help us develop an intuitive understanding of how nonlinear information entropy is affected by the uncertainty of the distributions. Let $p$ and $\epsilon$ be two constants in the interval $[0,1]$. Consider a discrete random variable $X$ with two possible values $\{0,1\}$. Suppose that the probability of the event $\{X=0\}$ is uncertain but takes value in $[\max(p-\epsilon , 0),\min(p+\epsilon,1)]$ and $\textrm{Pr}(X=1)=1-\textrm{Pr}(X=0)$. The uncertain probability distributions of $X$ are characterized as a family of probability distributions, which is denoted by $\{p_{q}(X)\}_{q\in [\max(p-\epsilon , 0),\min(p+\epsilon,1)]}$. More specifically, we have
	\begin{equation}
	\{p_{q}(X)\}_{q\in [\max(p-\epsilon , 0),\min(p+\epsilon,1)]}=\{p_q=\{q,1-q\}|q\in [\max(p-\epsilon , 0),\min(p+\epsilon,1)] \}.
	\end{equation}
	Note that the larger $\epsilon$, the greater the uncertainty of distributions of $X$.
	
	\begin{figure}[h]
		\centering
		\includegraphics[scale=0.8]{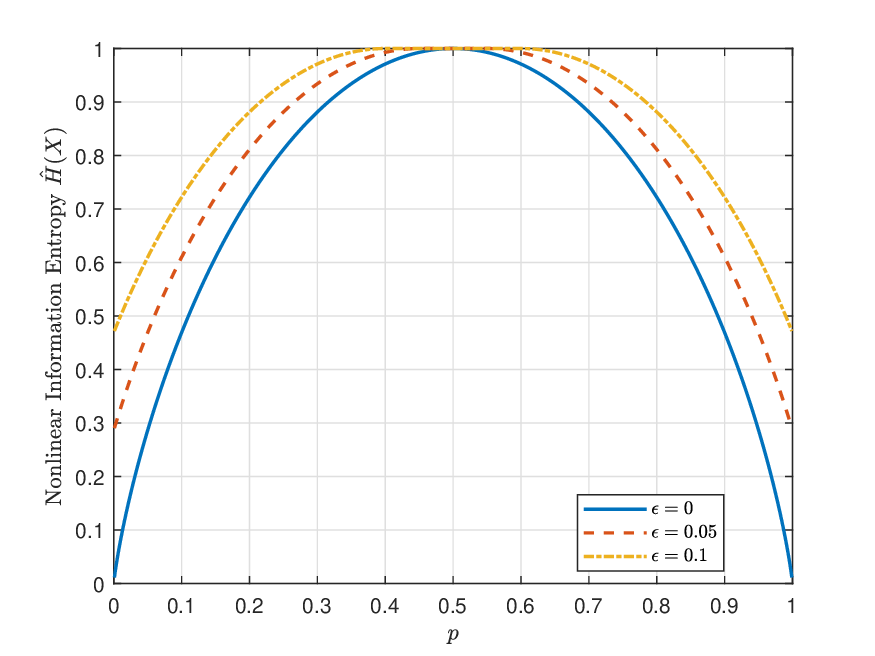}
		\captionsetup{font={scriptsize}}
		\caption{$\hat{H}(X)$ versus $p$, where $\epsilon=0$, $\epsilon=0.05$ and $\epsilon=0.1$.}
		\label{fig_2}
	\end{figure}
	
	A simple calculation tells us
	\begin{equation}
	\hat{H}(X)=\underset{q\in [\max(p-\epsilon , 0),\min(p+\epsilon,1)]}{\sup } \left(q\log \frac{1}{q}+(1-q)\log \frac{1}{1-q}\right).
	\end{equation}
	The relation between $\hat{H}(X)$ and $p$ is shown in Fig. \ref{fig_2}, where the red dashed line and the yellow dash-dotted line represent nonlinear information entropy $\hat{H}(X)$ with $\epsilon=0.05$ and $\epsilon=0.1$, respectively, while the blue solid line represents $H_{p}=p\log \frac{1}{p}+(1-p)\log \frac{1}{1-p}$ with the logarithmic base being 2. We observe that the value of the nonlinear information entropy $\hat{H}(X)$ increases with the value of $\epsilon$, which determines the degree of uncertainty of distributions of $X$. This is consistent with the intuition.
	
For uncertain-distribution channel model defined with sublinear expectation space, it is necessary to define new measures related to the signal transmission process $X \rightarrow Y$, where $X$ and $Y$ are two discrete random variables defined on a sublinear expectation space $(\Omega,\mathcal{H},\mathbb{E})$. In order to describe the uncertainty of the transmission process clearly, it is necessary to define \textit{nonlinear mutual information} under the framework of nonlinear information theory, in accordance with the concepts of nonlinear expectation and the uncertainty of distributions.
	
\begin{definition}
	Let $X,Y$ be two discrete random variables on a sublinear expectation space $(\Omega,\mathcal{H},\mathbb{E})$. The corresponding uncertain probability distributions of $X$ are $\{p_{\theta}(X)\}_{\theta \in \Theta}$, and the corresponding uncertain transition probability matrices of $Y$ given $X$ are $\{\boldsymbol{P}_{\lambda}(Y|X)\}_{\lambda \in \Lambda}$. Then the \textit{nonlinear mutual information} between $X$ and $Y$ is defined as
	\begin{equation}
		\overline{I}\left[X;Y\right]:=\sup_{\theta \in \Theta} \sup_{\lambda \in \Lambda } \sum _{x,y}p_{\theta }\left(x\right)p_{\lambda }\left(y|x\right)\log \frac{p_{\lambda }\left(y|x\right)}{\sum_{x}p_{\lambda }\left(y|x\right)p_{\theta }\left(x\right)}.
	\end{equation}
	If there is also a discrete random variable $Z$ such that the corresponding uncertain transition probability matrices of $X$ given $Z$ are $\{\boldsymbol{P}_{\lambda_1}(X|Z)\}_{\lambda_1 \in \Lambda_1}$ and the corresponding uncertain transition probability matrices of $Y$ given $X,Z$ are $\{\boldsymbol{P}_{\lambda_2}(Y|X,Z)\}_{\lambda_2 \in \Lambda_2}$, then the \textit{nonlinear conditional mutual information} is defined as
	\begin{equation}
		\overline{I}\left[X;Y|Z\right]:=\sup_{\lambda_1 \in \Lambda_1} \sup_{\lambda_2 \in \Lambda_2 } \sum _{x,y}p_{\lambda_1 }\left(x|z\right)p_{\lambda_2 }\left(y|x,z\right)\log \frac{p_{\lambda_2 }\left(y|x,z\right)}{\sum_{x}p_{\lambda_2 }\left(y|x,z\right)p_{\lambda_1 }\left(x|z\right)}.
	\end{equation}
\end{definition}
	
\begin{remark}
	We also use the notation $\overline{I}\left[\{p_{\theta}(X)\}_{\theta \in \Theta};\{\boldsymbol{P}_{\lambda}(Y|X)\}_{\lambda \in \Lambda}\right]$ to denote nonlinear mutual information. This indicates that the uncertainty of distributions, rather than messages or random variables themselves, is the more significant factor within the framework of nonlinear information theory.
\end{remark}
	
Consider an uncertain-distribution channel with input alphabet $\mathcal{X}$ and output alphabet $\mathcal{Y}$. Due to the uncertainty of the actual transmission process, the corresponding uncertainty of the uncertain-distribution channel can be expressed as a family of transition probability matrices $\{\boldsymbol{P}_{\lambda } \in [0,1]^{\mathcal{X}\times \mathcal{Y}} \}_{\lambda \in \Lambda}$. We denote this uncertain-distribution channel by $[\mathcal{X},\{\boldsymbol{P}_{\lambda } \in [0,1]^{\mathcal{X}\times \mathcal{Y}}\}_{\lambda \in \Lambda },\mathcal{Y} ]$.
	
\begin{definition}
	Let $(\Omega,\mathcal{H},\mathbb{E})$ be a sublinear expectation space where $\mathbb{E}[\cdot]=\underset{\theta \in \Theta}{\sup} E_{P_{\theta}}[\cdot]$. An uncertain-distribution channel $[\mathcal{X},\{\boldsymbol{P}_{\lambda } \in [0,1]^{\mathcal{X}\times \mathcal{Y}}\}_{\lambda \in \Lambda },\mathcal{Y} ]$ is called memoryless if for any input codewords $x_1,\cdots,x_n$ and output codewords $y_1,\cdots,y_n$, the uncertain-distribution channel satisfies
	\begin{equation}
		\{P_{\theta}(Y_1=y_1,\cdots,Y_n=y_n|X_1=x_1,\cdots,X_n=x_n)\}_{\theta \in \Theta}=\{\prod_{i=1}^{n}p_{\lambda}(y_i|x_i)\}_{\lambda \in \Lambda}.
	\end{equation}
	where $X_1,\cdots,X_n$ denote the input random variables, $Y_1,\cdots,Y_n$ denote the output random variables.
\end{definition}
	
	Without loss of generality, in this paper we only consider discrete memoryless uncertain-distribution channel, which is the most commonly used channel model.
	
	To summarize, we newly define several fundamental concepts for nonlinear information theory, presenting a novel framework for understanding the uncertainties of information sources and communication channels by using nonlinear expectation theory. This study extends classical information theory to accommodate the complexities introduced by the uncertainty of probability models, offering new insights into the analysis of information and communication systems.

\section{Main Results of Nonlinear Information Theory}\label{sec4}
In this section we present the main results of our nonlinear information theory relying on a sublinear expectation space $(\Omega,\mathcal{H},\mathbb{E})$, where the uncertainty of the probability measures associated with $\mathbb{E}$ is characterized by $\mathcal{P}=\{P_{\theta}\}_{\theta \in \Theta}$, i.e.,
\begin{equation}
	\mathbb{E}[X]=\sup_{\theta \in \Theta}\int_{\Omega}XdP_{\theta }, \quad \forall X\in \mathcal{H}.
\end{equation}
In Section \ref{sec4.1}, we prove some important properties for nonlinear information entropy, nonlinear joint entropy, nonlinear conditional entropy and nonlinear mutual information. We demonstrate that some properties in the nonlinear information theory are consistent with their counterparts in the classical information theory, such as the data processing inequality, while others diverge, such as the chain rule and Fano inequality. In Section \ref{sec4.2}, we describe the process of source coding in nonlinear information theory and propose the nonlinear source coding theorem. In Section \ref{sec4.3}, we discuss the process of channel coding in nonlinear information theory and propose the nonlinear channel coding theorem. In Section \ref{sec4.4}, we analyze the source coding with distortion in nonlinear information theory and propose the nonlinear rate-distortion source coding theorem.
		
We draw an important conclusion that when conducting source coding research under the framework of nonlinear information theory, the achievable fundamental limit under the minimum error probability criterion must be redefined and no longer coincides with the classical source coding limit. We also make progress in the fundamental limits of channel coding and source coding with distortion under the framework of nonlinear information theory. These discoveries are significant and can be used to explain various phenomena that occur in the complex physical world.
	
\subsection{Properties of New Measures in Nonlinear Information Theory}\label{sec4.1}

In this subsection, we prove some important properties for nonlinear information measures. These properties can be proved by taking appropriate supremum and infimum operations over the underlying set of probability measures, and the proofs are therefore omitted.
	
\begin{theorem}\label{theless}
	Let $X,Y$ be two discrete random variables on a sublinear expectation space $(\Omega,\mathcal{H},\mathbb{E})$. The corresponding uncertain probability distributions of $X$ are $\{p_{\theta}(X)\}_{\theta \in \Theta}$, and the corresponding uncertain transition probability matrices of $Y$ given $X$ are $\{\boldsymbol{P}_{\lambda}(Y|X)\}_{\lambda \in \Lambda}$. Then, we have
	\begin{equation}
		\hat{H}\left(X,Y\right)\leq \hat{H}\left(X\right)+\hat{H}\left(Y|X\right).
	\end{equation}
	The equality holds when the distributions of $X$ and transition probability matrices of $Y$ given $X$ become deterministic. Moreover, when $\hat{H}\left(Y|X\right)=0$, the above inequality degenerates into the equality $\hat{H}\left(X,Y\right)=\hat{H}\left(X\right)$. $\hfill\blacksquare$
\end{theorem}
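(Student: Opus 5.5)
The plan is to reduce the inequality to the classical chain rule, applied for each fixed pair of parameters, and then push the suprema through. Fix $\theta\in\Theta$ and $\lambda\in\Lambda$. The product $p_\theta(x)p_\lambda(y|x)$ is an ordinary joint distribution on $\mathcal{X}\times\mathcal{Y}$, so the classical chain rule gives
\begin{equation*}
\sum_{x,y}p_\theta(x)p_\lambda(y|x)\log\frac{1}{p_\theta(x)p_\lambda(y|x)}
=\sum_x p_\theta(x)\log\frac{1}{p_\theta(x)}+\sum_x p_\theta(x)\sum_y p_\lambda(y|x)\log\frac{1}{p_\lambda(y|x)} .
\end{equation*}
This identity is just the splitting of the logarithm of a product.

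Next I bound each term on the right. The first term is at most $\hat{H}(X)$ by definition. For the second term, each inner sum is at most $\hat{H}(Y|X=x)$, since that quantity is the supremum over $\lambda$ for that fixed $x$. The weights $p_\theta(x)$ are nonnegative, so the outer sum is at most $\sum_x p_\theta(x)\hat{H}(Y|X=x)$, which in turn is at most $\hat{H}(Y|X)$. The right side is therefore at most $\hat{H}(X)+\hat{H}(Y|X)$, uniformly in $(\theta,\lambda)$. Taking $\sup_\theta\sup_\lambda$ on the left then gives $\hat{H}(X,Y)\le\hat{H}(X)+\hat{H}(Y|X)$.

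For the equality case, suppose the families reduce to single members, $\Theta=\{\theta_0\}$ and $\Lambda=\{\lambda_0\}$. Then every supremum is attained trivially and each inequality above becomes an equality, so we recover the classical chain rule. I would note in passing that the loss in general comes from two places. First, the $\lambda$ maximizing $H(Y|X=x)$ may depend on $x$, whereas in $\hat{H}(X,Y)$ a single $\lambda$ is used for all $x$. Second, the $\theta$ maximizing the two terms may differ.

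For the degenerate case $\hat{H}(Y|X)=0$, the upper bound becomes $\hat{H}(X,Y)\le\hat{H}(X)$, and I need the reverse inequality. For each $\theta$ with $p_\theta(x)>0$, the condition forces $\hat{H}(Y|X=x)=0$, so every $p_\lambda(\cdot|x)$ is a point mass. Here a subtlety arises: $x$ may have $p_\theta(x)=0$ for all $\theta$, but such $x$ do not contribute to any sum anyway. On the support, then, the conditional entropy term vanishes for every $\lambda$, and the identity above gives $H_{\theta,\lambda}(X,Y)=H_\theta(X)$ for all $\lambda$. Taking suprema yields equality.

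The main obstacle is this last step, where the argument must avoid assuming the maximizers are attained and must handle zero-probability symbols with care. Otherwise the proof is routine bookkeeping of suprema.
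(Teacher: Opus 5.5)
Your proposal is correct and is essentially the argument the paper has in mind when it omits the proof as following from ``taking appropriate supremum and infimum operations'': apply the classical chain rule for each fixed $(\theta,\lambda)$, bound each term by the corresponding supremum, and then take $\sup_{\theta}\sup_{\lambda}$. One simplification: the reverse inequality $\hat{H}(X,Y)\geq\hat{H}(X)$ holds unconditionally, because $\sum_x p_\theta(x)\sum_y p_\lambda(y|x)\log\frac{1}{p_\lambda(y|x)}\geq 0$ gives $H_{\theta,\lambda}(X,Y)\geq H_\theta(X)$ for every $(\theta,\lambda)$, so the point-mass and zero-probability analysis you flag as the main obstacle is not needed.
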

	
According to this theorem, we can observe that in contrast to nonlinear joint entropy, nonlinear conditional entropy is increased more due to the introduction of the uncertainty of transition probability matrices. Note that the property presented by Theorem \ref{theless} is different from its counterpart in classical information theory, where only the equality holds.
	
Nonlinear mutual information describes the amount of information obtained about a group of random variables by observing the other group of random variables, under the framework of nonlinear expectation theory.  This concept is intimately related to nonlinear information entropy and nonlinear conditional entropy of random variables. Based on the above observation, the relation between nonlinear mutual information, nonlinear information entropy and nonlinear conditional entropy is also different from the counterpart in classical information theory, as demonstrated by the following theorem.
	
	\begin{theorem}\label{thegreat}
		Let $X,Y$ be two discrete random variables on a sublinear expectation space $(\Omega,\mathcal{H},\mathbb{E})$. The corresponding uncertain probability distributions of $X$ are $\{p_{\theta}(X)\}_{\theta \in \Theta}$, and the corresponding uncertain transition probability matrices of $Y$ given $X$ are $\{\boldsymbol{P}_{\lambda}(Y|X)\}_{\lambda \in \Lambda}$. Then, we have
		\begin{equation}
		\overline{I}\left(X;Y\right)\geq \hat{H}\left(X\right)-\hat{H}(X|Y).
		\end{equation}
		The equality holds when the distributions of $X$ and transition probability matrices of $Y$ given $X$ become deterministic. Moreover, when $\hat{H}\left(X|Y\right)=0$, the above inequality degenerates into the equality $\overline{I}\left(X;Y\right)=\hat{H}\left(X\right)$. $\hfill\blacksquare$
	\end{theorem}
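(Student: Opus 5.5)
The plan is to reduce the inequality to the classical identity $I=H(X)-H(X|Y)$, applied separately to each pair of parameters $(\theta,\lambda)$, and then take suprema. First I would fix the meaning of $\hat{H}(X|Y)$. The paper only defines conditional entropy of $Y$ given $X$, so I would read $\hat{H}(X|Y)$ in the same way but with the roles reversed, using the uncertain families that $\{p_\theta\}_{\theta\in\Theta}$ and $\{\boldsymbol{P}_\lambda\}_{\lambda\in\Lambda}$ induce on $Y$ and on $X$ given $Y$.

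For each $(\theta,\lambda)$, the joint law $p_\theta(x)p_\lambda(y|x)$ determines:
\begin{itemize}
\item a marginal $q_{\theta,\lambda}(y)=\sum_x p_\theta(x)p_\lambda(y|x)$;
\item a posterior $p_{\theta,\lambda}(x|y)=p_\theta(x)p_\lambda(y|x)/q_{\theta,\lambda}(y)$.
\end{itemize}
The uncertain distributions of $Y$ are then $\{q_{\theta,\lambda}\}$ and the uncertain transition matrices of $X$ given $Y$ are $\{p_{\theta,\lambda}(\cdot|\cdot)\}$. With this reading,
\[
\hat{H}(X|Y)=\sup_{\theta,\lambda}\sum_y q_{\theta,\lambda}(y)\,\hat{H}(X|Y=y),
\qquad
\hat{H}(X|Y=y)=\sup_{\theta',\lambda'}\sum_x p_{\theta',\lambda'}(x|y)\log\frac{1}{p_{\theta',\lambda'}(x|y)}.
\]

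The key intermediate bound is a comparison with the classical conditional entropy. For every fixed $(\theta,\lambda)$ I want
\[
\hat{H}(X|Y)\ \ge\ H_{\theta,\lambda}(X|Y):=\sum_y q_{\theta,\lambda}(y)\sum_x p_{\theta,\lambda}(x|y)\log\frac{1}{p_{\theta,\lambda}(x|y)}.
\]
This follows in two steps. Pointwise in $y$, $\hat{H}(X|Y=y)$ is at least the entropy of the particular posterior $p_{\theta,\lambda}(\cdot|y)$. Then the outer supremum over $(\theta,\lambda)$ is at least the value obtained with the weights $q_{\theta,\lambda}$.

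Next I apply the classical identity for fixed $(\theta,\lambda)$. The summand in the definition of $\overline{I}$ is exactly the classical mutual information $I_{\theta,\lambda}(X;Y)$ under that joint law, so
\[
I_{\theta,\lambda}(X;Y)=H_\theta(X)-H_{\theta,\lambda}(X|Y)\ \ge\ H_\theta(X)-\hat{H}(X|Y).
\]
Taking the supremum over $\lambda$ and then over $\theta$ gives
\[
\overline{I}(X;Y)\ \ge\ \sup_\theta H_\theta(X)-\hat{H}(X|Y)=\hat{H}(X)-\hat{H}(X|Y).
\]
This step is legitimate because $\hat{H}(X|Y)$ does not depend on $(\theta,\lambda)$.

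For the equality cases:
\begin{itemize}
\item If $\Theta$ and $\Lambda$ are singletons, every supremum is attained at the single element, and the statement reduces to the classical identity.
\item If $\hat{H}(X|Y)=0$, the bound of the previous paragraph forces $H_{\theta,\lambda}(X|Y)=0$ for all $(\theta,\lambda)$, since entropies are nonnegative. Then $I_{\theta,\lambda}(X;Y)=H_\theta(X)$ for every pair, and taking suprema gives $\overline{I}(X;Y)=\hat{H}(X)$.
\end{itemize}

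The main obstacle is the first step: making precise which family governs $X$ given $Y$. The inequality hinges on $\hat{H}(X|Y)$ dominating $H_{\theta,\lambda}(X|Y)$ uniformly in $(\theta,\lambda)$. If one tried to define the uncertain posterior family independently of $\Theta\times\Lambda$, this domination could fail. I would therefore insist on the induced-family interpretation and state it explicitly at the start of the proof.
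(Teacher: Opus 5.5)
Your proposal is correct and is essentially the argument the paper has in mind. The paper omits the proof, saying only that it follows by taking suprema over the underlying family of measures; your route (apply the classical identity for each fixed $(\theta,\lambda)$, dominate $H_{\theta,\lambda}(X|Y)$ by $\hat{H}(X|Y)$, then take suprema) is exactly that, and your explicit induced-family reading of $\hat{H}(X|Y)$ fills in a definition the paper leaves implicit while staying consistent with its later remark that $\hat{H}(X|Y)\le\hat{H}(X)$ can fail.
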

	
\begin{remark}
	Note that $\hat{H}\left(X|Y\right) \leq \hat{H}(X)$ does not hold in nonlinear information theory. However, in classical information theory, we have $H\left(X|Y\right) \leq H(X)$, where $H\left(X|Y\right)$ and $H(X)$ are the conditional entropy and the entropy, respectively.
\end{remark}
	
As the case extends to more than two random variables, the chain rule regarding nonlinear joint entropy and nonlinear conditional entropy exhibits an inequality form, and so does the chain rule of nonlinear mutual information. These insights are shown in Theorem \ref{thechain} and Theorem \ref{theless2}, respectively.
	
\begin{theorem}\label{thechain}
	For a discrete random variable sequence $X_{1},\cdots ,X_{n}$ defined on a sublinear expectation space $(\Omega,\mathcal{H},\mathbb{E})$, we have
	\begin{equation}
		\hat{H}\left(X_{1},\cdots ,X_{n}\right)\leq \sum _{i=1}^{n}\hat{H}\left(X_{i}|X_{i-1},\cdots ,X_{1}\right).
	\end{equation} $\hfill\blacksquare$
\end{theorem}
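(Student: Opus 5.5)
Our plan is to argue by induction on $n$, with Theorem \ref{theless} as the base case and, at the same time, the one-step inequality used in the inductive step. First we fix how the multivariate quantities are read in the paper's framework. The uncertain joint law of $(X_1,\dots,X_n)$ is parametrised by the initial distributions $\{p_{\theta}(X_1)\}_{\theta\in\Theta}$ together with the families of uncertain transition matrices $\{\boldsymbol{P}_{\lambda_i}(X_i\mid X_{i-1},\dots,X_1)\}_{\lambda_i\in\Lambda_i}$, $i\ge2$. Accordingly, $\hat{H}(X_1,\dots,X_n)$ is the supremum over $(\theta,\lambda_2,\dots,\lambda_n)$ of the Shannon entropy of the product law $p_\theta(x_1)\prod_{i\ge2}p_{\lambda_i}(x_i\mid x^{i-1})$. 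The conditional entropy $\hat{H}(X_i\mid X_{i-1},\dots,X_1)$ is obtained from the definition of nonlinear conditional entropy by treating the block $(X_1,\dots,X_{i-1})$ as a single discrete variable: an outer supremum over the uncertain laws of the block, applied to the pointwise $\sup_{\lambda_i}$ conditional entropies.

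For $n=1$ the statement is trivial. For $n=2$ it is exactly Theorem \ref{theless}. To see why, fix $(\theta,\lambda)$. The classical chain rule gives
\[
H_{\theta,\lambda}(X,Y)=H_\theta(X)+\sum_x p_\theta(x)H_\lambda(Y\mid X=x).
\]
The first term is at most $\hat{H}(X)$, and each $H_\lambda(Y\mid X=x)$ is at most $\hat{H}(Y\mid X=x)$. Weighting by $p_\theta(x)$ and taking $\sup_\theta$ bounds the second term by $\hat{H}(Y\mid X)$. Taking $\sup_{\theta,\lambda}$ on the left then gives the claim.

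For the inductive step, we set $Z=(X_1,\dots,X_{n-1})$, a discrete random variable on the same space with a finite or countable alphabet, and $Y=X_n$. Applying Theorem \ref{theless} to the pair $(Z,Y)$ gives
\[
\hat{H}(X_1,\dots,X_n)=\hat{H}(Z,Y)\le \hat{H}(Z)+\hat{H}(X_n\mid Z).
\]
The induction hypothesis bounds $\hat{H}(Z)$ by $\sum_{i=1}^{n-1}\hat{H}(X_i\mid X_{i-1},\dots,X_1)$, and summing the two bounds finishes the proof.

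The main obstacle is the identification step: checking that the uncertain family of laws of the block $Z$ (indexed by $(\theta,\lambda_2,\dots,\lambda_{n-1})$) and the family $\{\boldsymbol{P}_{\lambda_n}\}$ fit the hypotheses of Theorem \ref{theless}. In particular, the joint index set of $(Z,Y)$ must be the product of the index set of $Z$ with $\Lambda_n$, so that $\hat{H}(Z,Y)$ computed via Theorem \ref{theless} coincides with the direct supremum defining $\hat{H}(X_1,\dots,X_n)$. If this product structure is not taken as part of the setup, we would instead bypass the induction and argue directly. Fix one admissible parameter tuple, apply the classical chain rule $H(X_1,\dots,X_n)=\sum_i H(X_i\mid X^{i-1})$, and bound each summand by its nonlinear counterpart exactly as in the $n=2$ computation. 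This uses only that a supremum of a sum is at most the sum of the suprema, which holds under any parametrisation.
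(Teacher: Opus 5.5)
Your proposal is correct and follows essentially the argument the paper intends; the paper omits the proof, saying only that it follows by taking appropriate suprema over the underlying family of measures. Your route is the same in substance: Theorem \ref{theless} is the case $n=2$, and the general case follows by induction on the block $(X_1,\dots,X_{n-1})$, or directly from the classical chain rule for each fixed admissible parameter tuple together with the fact that a supremum of a sum is at most the sum of the suprema. Your explicit choice of the product parametrisation for the block laws is a worthwhile clarification, since the paper never defines $\hat{H}(X_i\mid X_{i-1},\dots,X_1)$ for $i>2$.
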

	
\begin{theorem}\label{theless2}
	For a discrete random variable sequence ${X_{1},\cdots,X_{n}}$ and a discrete random variable $Y$, both defined on a sublinear expectation space $(\Omega,\mathcal{H},\mathbb{E})$, we have
	\begin{equation}
		\overline{I}\left(X_{1},X_{2},\cdots ,X_{n};Y\right)\leq \sum _{i=1}^{n}\overline{I}\left(X_{i};Y|X_{i-1},\cdots ,X_{1}\right).
	\end{equation} $\hfill\blacksquare$
\end{theorem}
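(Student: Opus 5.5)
The plan is to reduce the statement to the classical chain rule for mutual information under each fixed member of the uncertainty family. Then we push the supremum through the sum, using the elementary fact that the supremum of a sum is at most the sum of the suprema. This mirrors the way Theorem \ref{thechain} is obtained from the classical entropy chain rule.

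\textbf{Step 1: parametrize the joint law.} Following the conventions of the nonlinear mutual information and conditional mutual information definitions, parametrize the uncertainty of $(X_1,\dots,X_n,Y)$ by successive conditional laws. These are: a family $\{p_{\theta}(x_1)\}$ for $X_1$; families $\{p_{\lambda_i}(x_i|x_{i-1},\dots,x_1)\}_{\lambda_i\in\Lambda_i}$ for $X_i$ given its past; and $\{p_{\mu}(y|x_1,\dots,x_n)\}_{\mu\in M}$ for $Y$. Then, by definition,
\begin{equation*}
\overline{I}(X_1,\dots,X_n;Y)=\sup_{\theta,\lambda_2,\dots,\lambda_n}\sup_{\mu} I_{\theta,\lambda,\mu}(X_1,\dots,X_n;Y),
\end{equation*}
where $I_{\theta,\lambda,\mu}$ is the classical mutual information under the single joint law generated by the indices.

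\textbf{Step 2: apply the classical chain rule.} For each fixed index tuple, the classical chain rule gives
\begin{equation*}
I_{\theta,\lambda,\mu}(X_1,\dots,X_n;Y)=\sum_{i=1}^n I_{\theta,\lambda,\mu}(X_i;Y|X_{i-1},\dots,X_1).
\end{equation*}

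\textbf{Step 3: bound each summand.} Each term $I_{\theta,\lambda,\mu}(X_i;Y|X^{i-1})$ is determined by two pieces of data: the conditional law of $X_i$ given the past, which belongs to the family indexed by $\lambda_1$ in the definition of $\overline{I}[X_i;Y|X^{i-1}]$; and the induced conditional law of $Y$ given $(X_i,X^{i-1})$, which belongs to the family indexed by $\lambda_2$. Hence each summand is at most $\overline{I}(X_i;Y|X_{i-1},\dots,X_1)$.

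\textbf{Step 4: conclude.} Summing the bounds from Step 3 and then taking the supremum over all index tuples on the left yields the inequality. The supremum over a common index of a sum is no larger than the sum of separate suprema, which is also why only an inequality, not an equality, survives.

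The main obstacle is Step 3. The conditional law of $Y$ given $(X_i,X^{i-1})$ is a mixture: it averages $p_\mu(y|x_1,\dots,x_n)$ over the future inputs $X_{i+1},\dots,X_n$ under the $\lambda$-laws. One must check that the family $\Lambda_2$ of transition matrices appearing in $\overline{I}(X_i;Y|X^{i-1})$ is understood to contain all such induced kernels. Concretely, the uncertain family of $Y$ given $(X_i,X^{i-1})$ should be interpreted as the set of kernels generated from the joint uncertainty set, exactly as the paper implicitly does for Theorem \ref{thechain}.

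A second point is that the definition of conditional mutual information is stated pointwise in $z$. I would read $\overline{I}(X_i;Y|X^{i-1})$ as the supremum over the same index families of the average over $x^{i-1}$ of the pointwise quantity. That average is dominated by the supremum, so the bound in Step 3 holds for either reading, provided the pointwise form is averaged with weights no larger than one summing to one. If needed, I would state this interpretation explicitly before running Steps 1–4.
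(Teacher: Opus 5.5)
You have located the weak point correctly, and your route (the classical chain rule for each fixed joint law, then bounding the supremum of a sum by the sum of the suprema) is the one the paper has in mind when it omits the proof. But Step 3 is a genuine gap, not a matter of reading, and the appeal to Theorem \ref{thechain} does not close it.

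The paper's $\overline{I}(X_1,\dots,X_n;Y)$ takes \emph{separate} suprema over the laws of $X^n=(X_1,\dots,X_n)$ and over the kernels of $Y$ given $X^n$. So it pairs the input law of one member of $\mathcal{P}$ with the output kernel of another. The paper's conditional families, by contrast, are just the conditionals of the members of $\mathcal{P}$ (cf.\ the memoryless-channel definition and the remark that all families are derived from $\mathcal{P}$). For such a cross pair, the kernel of $Y$ given $X^i$ averages the law of $X_{i+1},\dots,X_n$ under one measure against the $Y$-kernel of another. Nothing puts this kernel in the family used by $\overline{I}(X_i;Y|X_{i-1},\dots,X_1)$. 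In Theorem \ref{thechain} every conditional is that of $X_i$ given its own past, so no kernel that averages out later variables ever arises; that is why the analogy fails.

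Without an extra hypothesis the inequality is actually false. Take $n=2$, all variables binary, and $\mathcal{P}=\{P_a,P_b\}$ with $X_1$ uniform under both. Under $P_a$ let $X_2=X_1$, and under $P_b$ let $X_2=1-X_1$, each except with slip probability $\delta\in(0,\tfrac12)$ so that all conditionals are defined. Let $Y=0$ under $P_a$ and $Y=X_1\wedge X_2$ under $P_b$.
\begin{itemize}
\item Left side: pairing the law of $(X_1,X_2)$ under $P_a$ with the kernel of $P_b$ makes $Y$ deterministic with $P(Y=1)=\tfrac{1-\delta}{2}$. Hence $\overline{I}(X_1,X_2;Y)\ge h(\tfrac{1-\delta}{2})$, where $h$ is the binary entropy.
\item First term on the right: $P_a(Y=1|X_1)=0$, $P_b(Y=1|X_1=0)=0$ and $P_b(Y=1|X_1=1)=\delta$ give $\overline{I}(X_1;Y)=h(\tfrac{\delta}{2})-\tfrac12 h(\delta)$.
\item Second term on the right: the pointwise conditional terms are $0$ at $x_1=0$ and $h(\delta)$ at $x_1=1$, so $\overline{I}(X_2;Y|X_1)\le h(\delta)$ under any aggregation over $x_1$.
\end{itemize}
For small $\delta$ the left side is near $1$ and the right side near $0$.

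So what you call an interpretation must be added as a hypothesis, and the paper's statement tacitly needs it too. One option is to assume the set of joint laws of $(X_1,\dots,X_n,Y)$ induced by $\mathcal{P}$ is stable under pasting: the law of $X^n$ under one member, combined with the kernel of $Y$ given $X^n$ under another, is again the joint law under some member. The other option is to \emph{define} the conditional families in $\overline{I}(X_i;Y|X_{i-1},\dots,X_1)$ as all kernels induced by the joint laws over which the left side ranges. With either stated, your Steps 1--4 form a complete proof, together with your averaging reading of the conditional mutual information (which is needed anyway, since the paper's definition is pointwise in the conditioning value).
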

	
Let $X,\hat{X}$ be two discrete random variable defined on a sublinear expectation space $(\Omega,\mathcal{H},\mathbb{E})$. $\mathcal{X}$ denotes the set of the states of $X$. Consider the signal transmission process $X \rightarrow \hat{X}$, where $X$ and $\hat{X}$ represent input and output messages, respectively. Because of the uncertainty of the input probability distributions and the uncertainty of the transition probability matrices, the probability of the communication error event $\{\hat{X}\neq X\}$ is also uncertain. That is, it obeys a family of probabilities $\{P_{\theta}(\hat{X}\neq X)\}_{\theta \in \Theta}$. For notational convenience, in what follows we shall simply denote $P_{\theta}(\hat{X}\neq X)$ by $P_{e,\theta}$ and denote the set $\{P_{\theta}(\hat{X}\neq X)\}_{\theta \in \Theta}$ by $\mathcal{P}_{e}$. The Fano inequality, which describes the relation between conditional entropy and error probability, is an important property in classical information theory. In the framework of nonlinear communication models, we can also derive a relation between nonlinear conditional entropy and a family of error probabilities, as stated below.
	
	\begin{theorem}\label{thefano}
		Let $X,\hat{X}$ be two discrete random variables on a sublinear expectation space $(\Omega,\mathcal{H},\mathbb{E})$. $\mathcal{X}$ denotes the set of the states of $X$, $P_{e,\theta} = P_{\theta}(\hat{X}\neq X)$, and $\mathcal{P}_{e} = \{P_{\theta}(\hat{X}\neq X)\}_{\theta \in \Theta}$. Then we have
		\begin{equation}
		\hat{H}\left(X|\hat{X}\right)\leq \sup_{P_{e,\theta}\in \mathcal{P}_{e}} \left[P_{e,\theta}\log \frac{1}{P_{e,\theta}}+\left(1-P_{e,\theta}\right)\log \frac{1}{1-P_{e,\theta}}\right]+\sup_{P_{e,\theta}\in \mathcal{P}_{e}} P_{e,\theta } \log \left(\parallel \mathcal{X} \parallel -1\right),
		\end{equation}
		where $\parallel \mathcal{X} \parallel$ indicates the cardinality of the set $\mathcal{X}$. $\hfill\blacksquare$
	\end{theorem}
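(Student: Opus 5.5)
The plan is to reduce the statement to the classical Fano inequality applied separately to each pair of probability models, and then take suprema. By the definition of nonlinear conditional entropy,
\begin{equation*}
\hat{H}(X|\hat{X})=\sup_{\theta}\sum_{\hat{x}}p_{\theta}(\hat{x})\sup_{\lambda}\sum_{x}p_{\lambda}(x|\hat{x})\log\frac{1}{p_{\lambda}(x|\hat{x})}.
\end{equation*}
Fix a parameter choice: a marginal $p_{\theta}$ of $\hat{X}$ and, for each $\hat{x}$, a conditional law $p_{\lambda_{\hat{x}}}(\cdot|\hat{x})$. The mixed family $\{\lambda_{\hat{x}}\}$ is harmless, since taking the inner supremum separately for each $\hat{x}$ is the same as taking the supremum over such selections. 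Each such selection determines an ordinary joint distribution $Q$ of $(X,\hat{X})$. I will assume, as the paper implicitly does, that this $Q$ is one of the measures $P_{\theta'}$ in the representation of Theorem \ref{thm:representation theorem}, so that its error probability $Q(\hat{X}\neq X)$ is a member $P_{e,\theta'}$ of $\mathcal{P}_{e}$. If the family is not closed under this recombination, one would simply enlarge $\mathcal{P}_{e}$ to the error probabilities of all joint laws generated this way.

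For the fixed joint law $Q$, the classical Fano argument goes through verbatim. Introduce $E=I_{\{\hat{X}\neq X\}}$ and expand $H_Q(E,X|\hat{X})$ in two ways, using $H_Q(E|X,\hat{X})=0$. This gives
\begin{equation*}
H_Q(X|\hat{X})\le H_Q(E)+Q(E=1)\log(\parallel\mathcal{X}\parallel-1),
\end{equation*}
where $H_Q(E)$ is the binary entropy of $P_{e,\theta'}$ and the term for $E=0$ vanishes. Here $H_Q(X|\hat{X})$ is exactly the expression inside the suprema for the chosen parameters.

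Next, bound the right-hand side by the supremum over $\mathcal{P}_{e}$ of each term separately. This uses $\sup(a+b)\le\sup a+\sup b$, which explains why the statement has two separate suprema. Since the resulting bound is uniform in the choice of $(\theta,\{\lambda_{\hat{x}}\})$, taking the supremum on the left yields the theorem.

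The main obstacle is the first step: justifying that every combination of marginal and conditional laws appearing in $\hat{H}(X|\hat{X})$ is a genuine joint law from the family, with error probability lying in $\mathcal{P}_{e}$. I would handle it by interpreting $\mathcal{P}_{e}$ as indexed by all such combinations, in keeping with the product-form definitions of joint entropy and mutual information. The remaining steps are routine.
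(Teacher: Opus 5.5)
Your route is the one the paper has in mind. The paper omits the proof, saying such properties follow ``by taking appropriate supremum and infimum operations over the underlying set of probability measures''; that is, classical Fano for each fixed joint law, followed by $\sup(a+b)\le\sup a+\sup b$. Granting the closure assumption you state, every step you give is correct. This includes your remark that the inner supremum in $\hat{H}(X|\hat{X})$, taken separately for each $\hat{x}$, equals a supremum over selections $\hat{x}\mapsto\lambda_{\hat{x}}$.

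However, the step you flag is a genuine gap, and no cleverer argument can close it: without that closure the stated inequality is false. It fails even when the joint family has exactly the product form $\{p_\theta\otimes\boldsymbol{P}_\lambda\}$ that you invoke via the paper's definitions of joint entropy and mutual information. Here is a counterexample.

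Take $\mathcal{X}=\{0,1\}$, so the second term vanishes, and let $\hat{X}$ be uniform on $\{0,1\}$ under every measure. Use two transition matrices:
under $\boldsymbol{P}_1$, $X=0$ when $\hat{X}=0$, and $X$ is uniform when $\hat{X}=1$;
under $\boldsymbol{P}_2$, $X$ is uniform when $\hat{X}=0$, and $X=1$ when $\hat{X}=1$.

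Both joint laws give $P_{e,\theta}=1/4$, and so does every mixture of them, so the choice of representing family does not matter. The right-hand side is therefore the binary entropy of $1/4$, about $0.81$ bits. But each row family contains the uniform law, so $\hat{H}(X|\hat{X}=0)=\hat{H}(X|\hat{X}=1)=1$, and hence $\hat{H}(X|\hat{X})=1$. The selection your argument needs (row $0$ from $\boldsymbol{P}_2$, row $1$ from $\boldsymbol{P}_1$) has error probability $1/2\notin\mathcal{P}_{e}$.

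Two consequences follow. The product structure alone does not suffice: the transition family must be closed under choosing its rows independently. And your fallback of enlarging $\mathcal{P}_{e}$ proves a weaker inequality with a larger right-hand side, not the stated one.

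The correct repair is to add your assumption as an explicit hypothesis of the theorem: the family of joint laws is closed under pasting any marginal $p_\theta(\hat{x})$ with any row-wise choice of conditionals $p_{\lambda_{\hat{x}}}(\cdot|\hat{x})$. Under that hypothesis your proof is complete, and it is exactly the assumption the paper's omitted ``take suprema'' argument silently relies on.
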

	
The following theorem shows that the data processing inequality in nonlinear information theory takes the same form as in the classical situation.
	
\begin{theorem}\label{thedata}
	Consider the nonlinear Markov process $X\rightarrow Y\rightarrow Z$, where $X,Y,Z$ are discrete random variables defined on a sublinear expectation space $(\Omega,\mathcal{H},\mathbb{E})$. We have
	\begin{equation}
		\overline{I}(X;Z)\leq \min(\overline{I}(X;Y),\overline{I}(Y;Z)).
	\end{equation} $\hfill\blacksquare$
\end{theorem}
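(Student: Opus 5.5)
The plan is to reduce the statement to the classical data processing inequality, applied separately for each fixed choice of distribution parameters, and then take suprema. The paper says these properties follow by "appropriate supremum and infimum operations," so that is the route.

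\textbf{Setting up the parametrization.} The nonlinear Markov chain $X\rightarrow Y\rightarrow Z$ is described by three uncertain families:
\begin{itemize}
\item input distributions $\{p_\theta(x)\}_{\theta\in\Theta}$;
\item first-stage transition matrices $\{\boldsymbol{P}_{\lambda_1}(Y|X)\}_{\lambda_1\in\Lambda_1}$;
\item second-stage transition matrices $\{\boldsymbol{P}_{\lambda_2}(Z|Y)\}_{\lambda_2\in\Lambda_2}$.
\end{itemize}
The Markov property means that $Z$ depends on $X$ only through $Y$. Hence the uncertain transition matrices of $Z$ given $X$ are exactly the compositions
\[
p_{\lambda_1,\lambda_2}(z|x)=\sum_y p_{\lambda_1}(y|x)\,p_{\lambda_2}(z|y),\qquad (\lambda_1,\lambda_2)\in\Lambda_1\times\Lambda_2 .
\]
The first step is to make this identification explicit, since it fixes the index set over which $\overline{I}(X;Z)$ takes its supremum.

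\textbf{Comparing $\overline{I}(X;Z)$ with $\overline{I}(X;Y)$.} Fix $(\theta,\lambda_1,\lambda_2)$. This gives an ordinary Markov chain $X\to Y\to Z$ with joint law $p_\theta(x)p_{\lambda_1}(y|x)p_{\lambda_2}(z|y)$. The classical data processing inequality then gives
\[
I_{\theta,\lambda_1,\lambda_2}(X;Z)\le I_{\theta,\lambda_1}(X;Y).
\]
The right-hand side does not depend on $\lambda_2$, so it is bounded by $\sup_{\theta,\lambda_1}I_{\theta,\lambda_1}(X;Y)=\overline{I}(X;Y)$. Taking the supremum over all $(\theta,\lambda_1,\lambda_2)$ yields $\overline{I}(X;Z)\le\overline{I}(X;Y)$.

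\textbf{Comparing $\overline{I}(X;Z)$ with $\overline{I}(Y;Z)$.} For the same fixed triple, the classical inequality gives
\[
I_{\theta,\lambda_1,\lambda_2}(X;Z)\le I(Y;Z),
\]
where $Y$ has marginal $q_{\theta,\lambda_1}(y)=\sum_x p_\theta(x)p_{\lambda_1}(y|x)$ and the channel is $\boldsymbol{P}_{\lambda_2}$. This is the main obstacle. $\overline{I}(Y;Z)$ is defined as a supremum over the uncertain distributions of $Y$, so I must check that every induced marginal $q_{\theta,\lambda_1}$ belongs to that family. Under the paper's framework this holds: the uncertain distributions of $Y$ come from the same measure family $\{P_\theta\}$ on $(\Omega,\sigma(\mathcal{H}))$, and the law of $Y$ under each admissible measure is such a mixture. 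I will state this as the natural interpretation of the model, namely that the family of $Y$-laws is $\{q_{\theta,\lambda_1}\}$. Granting it, $I(Y;Z)\le\overline{I}(Y;Z)$, and taking the supremum over the triple gives $\overline{I}(X;Z)\le\overline{I}(Y;Z)$.

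\textbf{Conclusion.} Combining the two bounds gives $\overline{I}(X;Z)\le\min(\overline{I}(X;Y),\overline{I}(Y;Z))$. If needed, I will also spell out the classical inequality for each fixed triple through the chain rule
\[
I(X;Y,Z)=I(X;Z)+I(X;Y|Z)=I(X;Y)+I(X;Z|Y),
\]
with $I(X;Z|Y)=0$ for a fixed Markov law.
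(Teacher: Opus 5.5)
Your argument is correct under the identifications you make explicit, and it is the route the paper intends (the paper omits the proof, saying these properties follow by taking suprema over the underlying family): apply classical data processing for each fixed $(\theta,\lambda_1,\lambda_2)$, then take suprema. One correction in the membership step: your remark that each admissible law of $Y$ is some mixture $q_{\theta,\lambda_1}$ proves the reverse of the inclusion you need, and the needed inclusion does not follow from the representation theorem, because $\overline{I}$ takes the suprema over $\theta$ and $\lambda$ independently (e.g.\ if the measures realize only $X\sim(0.99,0.01)$ through the identity channel and $X\sim(\tfrac12,\tfrac12)$ through a constant channel, with $Z=Y$, then $\overline{I}(X;Z)=1$ bit while $\overline{I}(Y;Z)\approx 0.08$), so this inclusion has to be part of the definition of the nonlinear Markov chain, as you ultimately stipulate.
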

	
The insights revealed by the above theorems are crucial in advancing the nonlinear information theory. They demonstrate the conclusion that the uncertainties inherent in the distributions of information sources and communication channels can lead to a variety of novel information-theoretic results.
	
\subsection{Source Coding in Nonlinear Information Theory}\label{sec4.2}
	
In the framework of nonlinear information theory, in order to reduce information redundancy and improve communication efficiency, it is necessary to encode and decode for the uncertain-distribution information sources, similar to the case in classical information theory. Source coding in nonlinear information theory refers to the process of encoding and decoding for uncertain-distribution sources that is incorporated in a nonlinear communication model. The processes of encoding and decoding for uncertain-distribution sources are similar to those of source coding in classical information theory. However, it is important to note that information sources are defined on a sublinear expectation space $(\Omega,\mathcal{H},\mathbb{E})$, which implies that the sources have distribution uncertainty. Therefore, the probability models for analyzing the associated problems are uncertain as well, and the performance measure of the transmission process need to be redefined.
	
The encoding and decoding for source coding processes in nonlinear information theory can still be represented as functions. Specifically, a $(||\mathcal{W}||,n,f_n,g_n)$ nonlinear source code consists of:
\begin{itemize}
	\item[$\bullet$] A source encoding function $f_n\colon \mathcal{X}^{n}\rightarrow \mathcal{W}$ that assigns an index $W \in \mathcal{W}$ to each $\boldsymbol{X}^n\in \mathcal{X}^n$.
	\item[$\bullet$] A source decoding function $g_n\colon \mathcal{W}\rightarrow \mathcal{X}^{n}$ that assigns an estimate $\hat{\boldsymbol{X}}^n\in \mathcal{X}^n$ to each index $W\in \mathcal{W}$.
\end{itemize}
For the above nonlinear source code, the source coding rate $R_{\textrm{s}}$ is defined as $\frac{\log||\mathcal{W}||}{n}$. After the uncertain-distribution information source generates source messages $\boldsymbol{X}^n=(X_1,\cdots,X_n)$, the source messages are encoded into an index $W\in \mathcal{W}$ by the encoding function $f_n \colon \mathcal{X}^{n}\rightarrow \mathcal{W}$. After determining the index $W'$ belonging to $\mathcal{W}$, $W'$ is processed by the decoding function $g_n \colon \mathcal{W}\rightarrow \mathcal{X}^{n}$, to estimate the original message as $\hat{\boldsymbol{X}}^{n}=(\hat{X}_1,\cdots,\hat{X}_n)=g(W')$. Within the framework of nonlinear expectation theory, due to the existence of the family of uncertain probability measures $\{P_{\theta}\}_{\theta \in \Theta}$  associated with the sublinear expectation, the performance of the above nonlinear source code is measured by the maximum and the minimum probabilities of the event that the estimate of the message is different from the message actually sent. Here, these probabilities are calculated as
\begin{equation}
	\mathbb{E}\left[I_{\{\hat{\boldsymbol{X}}^n\neq \boldsymbol{X}^n\}}\right]=\sup_{\theta \in \Theta } P_{\theta}(\hat{\boldsymbol{X}}^n\neq \boldsymbol{X}^n)=\sup_{\theta \in \Theta } P_{e,\theta}^{(n)},
\end{equation}
		\begin{equation}
		\mathcal{E}\left[I_{\{\hat{\boldsymbol{X}}^n\neq \boldsymbol{X}^n\}}\right]=\inf_{\theta \in \Theta } P_{\theta}(\hat{\boldsymbol{X}}^n\neq \boldsymbol{X}^n)=\inf_{\theta \in \Theta } P_{e,\theta}^{(n)}, 
		\end{equation}
where $\mathcal{E}$ represents the conjugate expectation of $\mathbb{E}$ and is defined by \eqref{eq:conjugate_expectation}. Intuitively, the above mentioned terms ``maximum probability'' and ``minimum probability'' correspond to the conservative and the aggressive strategies of designing nonlinear source code, respectively. Based on this insight, there are two criteria for evaluating the performance of nonlinear source code, namely the maximum error probability criterion and the minimum error probability criterion. The former refers to ensuring $\mathbb{E}\left[I_{\{\hat{\boldsymbol{X}}^n\neq \boldsymbol{X}^n\}}\right]$ to be sufficiently small, while the latter refers to ensuring $\mathcal{E}\left[I_{\{\hat{\boldsymbol{X}}^n\neq \boldsymbol{X}^n\}}\right]$ to be sufficiently small.
	
The notions of nonlinear identically distributed in Definition \ref{def:identically distributed} and nonlinear independence in Definition \ref{def:independent} play a key role in nonlinear expectation theory, since they reasonably characterize the correlation of data in the real world. The main purpose of nonlinear information theory is to handle actual messages that exhibit high uncertainty, hence it is crucial to characterize message sequences using IID random variables on sublinear expectation space.
	
\begin{definition}\label{def:IID source}
	A sequence of messages denoted by $X_{1}, X_{2}, \cdots, X_{n},\cdots$ is said to be an \textit{IID source} if $X_{1}, X_{2}, \cdots, X_{n}, \cdots$ are IID.
\end{definition}
	
\begin{remark}
	It is frequently favored in classical information theory to assume that the sequence of random variables are i.i.d. in the sense as defined in classical probability theory. However, if actual messages in complex physical world are considered, generally one cannot ensure that the messages strictly satisfy the traditional i.i.d. requirements. On the other hand, it is demonstrated that in most cases actual messages can more easily satisfy the IID requirements under the framework of nonlinear expectation theory \cite{peng2017theory}. 
\end{remark}
	
For a discrete IID source sequence $X_{1},X_{2}, \cdots,X_{n},\cdots$ on a sublinear expectation space, it is important to study the fundamental limit of its source coding rate. Note that $\hat{H}(X_1) = \hat{H}(X_2) = \cdots = \hat{H}(X_n) = \cdots$ holds true, since we consider an IID source sequence. 
\begin{theorem}\label{thesource}
	Let $X_{1},X_{2},\cdots,X_{n},\cdots$ be a discrete IID source sequence defined on a sublinear expectation space $(\Omega,\mathcal{H},\mathbb{E})$ where the uncertain probability measures associated with sublinear expectation $\mathbb{E}$ are denoted as $\{P_{\theta} \}_{\theta\in \Theta}$, i.e., $\mathbb{E}[\cdot]=\underset{\theta \in \Theta}{\sup} E_{P_{\theta}}[\cdot]$. The corresponding uncertain probability distributions of $X_1$ are $\left\{p_{\theta }(X_1)\right\}_{\theta \in \Theta }$. $(V,v)$ is the pair of capacities generated by $\mathbb{E}$ (see Definition \ref{def:capacity}). Suppose that $\{P_{\theta}\}_{\theta\in \Theta}$ is a countably-dimensional weakly compact family of probability measures on $(\Omega,\sigma(\mathcal{H}))$ (see Theorem \ref{thm:SLLN}). Then, we have
	\begin{enumerate}
		\item {for any $R_{\textrm{s}} \geq \underset{\theta \in \Theta}{\inf} \underset{x}{\sum}p_{\theta }(x)\log \frac{1}{V(x)}$ and any $\epsilon$ of positive value, there is a sufficiently large $n$ and a $(||\mathcal{W}||,n,f_n,g_n)$ nonlinear source code with source coding rate $R_{\textrm{s}}$ such that $\mathcal{E}\left[I_{\{\hat{\boldsymbol{X}}^n\neq \boldsymbol{X}^n\}}\right]< \epsilon$;}
		\item {for any $R_{\textrm{s}} \geq \hat{H}(X_{1})$ and any $\epsilon$ of positive value, there is a sufficiently large $n$ and a $(||\mathcal{W}||,n,f_n,g_n)$ nonlinear source code with source coding rate $R_{\textrm{s}}$ such that $\mathbb{E}\left[I_{\{\hat{\boldsymbol{X}}^n\neq \boldsymbol{X}^n\}}\right]< \epsilon$.}
	\end{enumerate}
\end{theorem}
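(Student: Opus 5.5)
The plan is to follow the classical typical-set argument for the source coding theorem, with the ordinary law of large numbers replaced by Theorem \ref{thm:SLLN}. Both parts will be reduced to controlling normalized sums of the form $\frac{1}{n}\sum_{i=1}^n \phi(X_i)$ for a suitable bounded $\phi$. Since the $X_i$ are IID under $\mathbb{E}$, so are the $\phi(X_i)$. The source is discrete; assume a finite alphabet $\mathcal{X}$, or truncate to a finite set of probability close to one under every $P_\theta$.

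For part 2 (maximum error criterion), I would not work with a single distribution. Instead I would code every sequence whose empirical type is close to some member of the family. Concretely, let $T_\delta^{(n)}$ be the set of $\boldsymbol{x}^n$ whose empirical distribution $\hat{P}_{\boldsymbol{x}^n}$ lies within $\delta$ in total variation of $\{p_\theta\}$, or of its closure.

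1. \emph{Size of $T_\delta^{(n)}$.} Counting types gives $|T_\delta^{(n)}|\le (n+1)^{|\mathcal{X}|}\,2^{n\sup_{\theta}H(p_\theta)+n\eta(\delta)}$, with $\eta(\delta)\to 0$ by uniform continuity of entropy on the simplex. So any rate $R_{\textrm{s}}>\hat{H}(X_1)$ suffices to index $T_\delta^{(n)}$ injectively; the boundary case $R_{\textrm{s}}=\hat{H}(X_1)$ is handled by letting $\delta\to0$ slowly with $n$. Set $f_n$ to be this index map (sending everything outside $T_\delta^{(n)}$ to a fixed index) and $g_n$ its inverse. An error then occurs only if $\boldsymbol{X}^n\notin T_\delta^{(n)}$.

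2. \emph{Error bound, the main obstacle.} I need $\sup_\theta P_\theta(\boldsymbol{X}^n\notin T_\delta^{(n)})\to0$. Theorem \ref{thm:SLLN} only gives cluster points with $V$-capacity one, not a uniform upper bound. I would instead use the upper-bound half of the sublinear SLLN/weak LLN (Peng's LLN): for each $a$, the average of $I_{\{X_i=a\}}$ eventually lies in $[v(X_1=a)-\delta,\,V(X_1=a)+\delta]$ with upper capacity tending to one. I would then try to strengthen this to joint closeness of the whole empirical vector to the family, applying the nonlinear LLN to the vector $(I_{\{X_i=a\}})_{a\in\mathcal{X}}$, whose limit set is the convex set $\{p_\theta\}$. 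If convexity fails, enlarge $T_\delta^{(n)}$ to the convex hull, which costs entropy since $\hat H$ of the hull can exceed $\hat H(X_1)$. This is where the argument is most delicate.

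For part 1 (minimum error criterion), choose $\phi(x)=\log\frac{1}{V(X_1=x)}$, which is bounded when every $V(x)>0$. Since $\mathcal{E}[\phi(X_1)]=\inf_\theta\sum_x p_\theta(x)\log\frac1{V(x)}=:b$, Theorem \ref{thm:SLLN} with $b=-\mathbb{E}[-\phi(X_1)]$ shows that $b$ is a cluster point of $\frac1n\sum_i\phi(X_i)$ with capacity one. Define $A_n=\{\boldsymbol{x}^n:\prod_i V(x_i)\ge 2^{-n(b+\delta)}\}$. I would bound $|A_n|$ by viewing $V$ as a weight. Because $V(x)\ge p_\theta(x)$, the weights are not normalized and $\sum_x V(x)$ may exceed one, so $\sum_{\boldsymbol{x}^n}\prod_i V(x_i)$ grows like $(\sum_x V(x))^n$. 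This looks like a genuine gap. I would first test whether the intended statement only claims the rate as a cluster point along a subsequence, as the introduction suggests. In that case the plan is to use $V(b\in C)=1$ to find some $P_\theta$ and infinitely many $n$ with $P_\theta(\boldsymbol{X}^n\in A_n)$ close to one. Coding $A_n$ then gives $\inf_\theta P_{e,\theta}^{(n)}<\epsilon$ for those $n$, which is all that ``there is a sufficiently large $n$'' requires. The cardinality bound is the step I would double-check first, possibly replacing $V$ by a normalized version.
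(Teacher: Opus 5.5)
The gaps you flag are real, and the paper does not close them validly either. In fact the statement is false as written, so no proof can close them.

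\textbf{Part~1.} The paper uses essentially your set: its $\mathcal{M}^{(k)}_\epsilon$ is a two-sided $A_n$ along a subsequence $n_k$ from Theorem~\ref{thm:SLLN}. It bounds the size through $\lVert\mathcal{M}^{(k)}_\epsilon\rVert\,2^{-n_k(\mu+\epsilon)}\le V(\mathcal{M}^{(k)}_\epsilon)$, which amounts to $\sum_{\boldsymbol{x}\in\mathcal{M}^{(k)}_\epsilon}V(\boldsymbol{x})\le V(\mathcal{M}^{(k)}_\epsilon)$. The cited product rule $V(\boldsymbol{x}^{n},\tilde{\boldsymbol{x}}^{n})=V(\boldsymbol{x}^{n})V(\tilde{\boldsymbol{x}}^{n})$ is true, but it does not yield this superadditivity. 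Superadditivity fails for an upper capacity exactly as you say.

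Take the paper's own example $q\in[\frac13,\frac12]$. Then $V(0)=\frac12$ and $V(1)=\frac23$, so $b=\frac13+\frac23\log_2\frac32\approx0.723$. Your $A_n$ is then the set of sequences with at most $(\frac13+\delta')n$ zeros, of size about $2^{nH(1/3)}\approx2^{0.918n}$ for every $n$. So passing to a subsequence changes nothing.

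Worse, no code can work. By the independence recursion of Definition~\ref{def:IID}, $V(\boldsymbol{X}^n\in A)$ is the supremum of $P(A)$ over laws on $\mathcal{X}^n$ whose conditional law of $X_i$ given $\boldsymbol{X}^{i-1}$ lies in the family. Under any such $P$, $-\log_2 P(\boldsymbol{X}^n)$ is a sum of terms in $[\log_2\frac32,\log_2 3]$ with conditional means at least $H(\frac13)$. Azuma--Hoeffding then gives $P(A)\le|A|\,2^{-n(H(1/3)-\delta)}+e^{-2n\delta^2}$ uniformly in $P$. Hence for $R_{\textrm{s}}\in[b,H(\frac13))$ the capacity of any correctly decoded set tends to $0$, and it is $<1$ for each $n$. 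So $\mathcal{E}[I_{\{\hat{\boldsymbol{X}}^n\neq\boldsymbol{X}^n\}}]$ stays bounded away from $0$. Any normalization of $V$ that makes your counting work can only give a threshold of at least $\inf_\theta H(p_\theta)$, which is a different theorem. Rates above that are indeed achievable, using the i.i.d.\ law of a near-minimizing $p_\theta$.

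\textbf{Part~2.} The paper takes $\mathcal{W}^{(n)}_\epsilon$, the union over $\theta$ of the weakly $p_\theta$-typical sets, and bounds its size by the same faulty superadditivity. Your type count is the valid replacement, up to a polynomial factor. The paper then simply asserts $V((\mathcal{W}^{(n)}_\epsilon)^c)<\epsilon$, which is precisely your ``main obstacle'', left unproved there.

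Your route through Peng's law of large numbers for the indicator vector $(I_{\{X_i=a\}})_{a\in\mathcal{X}}$ does prove part~2 for $R_{\textrm{s}}>\hat{H}(X_1)$ when the closure of $\{p_\theta\}$ is convex. What you need there is $V$ of the bad event tending to $0$, i.e.\ the lower capacity of the good event tending to $1$. Upper capacity tending to one, as you wrote, is not enough.

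Without convexity the statement fails. Let $\Omega=\{0,1\}^{\mathbb{N}}$ and let $\{P_\theta\}$ be the point masses at all $\omega$. This is a weakly compact family, and the coordinates are IID under the pointwise-supremum expectation. The marginal family is the two point masses on $\{0,1\}$, so $\hat{H}(X_1)=0$. Yet every set of fewer than $2^n$ sequences has complement of capacity $1$, so the maximum error is $1$ at every rate below $1$.

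The boundary case $R_{\textrm{s}}=\hat{H}(X_1)$ also fails, already for a single non-uniform Bernoulli source. There, at rate exactly $H$ the optimal error tends to $\frac12$ and is positive for every $n$. Letting $\delta\to0$ slowly cannot absorb the $(n+1)^{|\mathcal{X}|}2^{n\eta(\delta)}$ excess over $2^{n\hat{H}(X_1)}$.
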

	
\textit{Proof}: See Appendix \ref{app3}. $\hfill\blacksquare$
	
We call Theorem \ref{thesource} the nonlinear source coding theorem. It demonstrates that $\underset{\theta \in \Theta}{\inf} \sum _{x}p_{\theta }(x)\log \frac{1}{V(x)}$ is a cluster point of the achievable coding rate of uncertain-distribution sources under the minimum error probability criterion, with the condition that $\{P_{\theta}\}_{\theta\in \Theta}$ is a countably-dimensional weakly compact family of probability measures on $(\Omega,\sigma(\mathcal{H}))$. This condition is imposed only to ensure the validity of the strong law of large numbers under sublinear expectations, as described in Theorem \ref{thm:SLLN}. It also demonstrates that $\hat{H}(X_1)$ is closely related to the limit of the source coding rate $R_{\textrm{s}}$ of uncertain-distribution sources in our nonlinear communication model. In other words, $\hat{H}(X_1)$ is the upper bound of the achievable coding rate of uncertain-distribution sources under the maximum error probability criterion.
	
It is worth noting that the source coding rate limit in the first statement of Theorem \ref{thesource} can achieve or improve upon the classical source coding rate limit. In classical Shannon information theory, for an information source $X$ with probability distribution $p(X)$, the limit of source coding rate is $\underset{x}{\sum}p(x)\log \frac{1}{p(x)}$, i.e., Shannon entropy. Furthermore, if the random variable is assumed to follow a specific probability distribution from a family of distributions $\{p_{\theta}(X)\}_{\theta\in \Theta}$ but the exact one is unknown, then the limit of source coding rate can be optimized to $\underset{\theta \in \Theta}{\inf} \underset{x}{\sum}p_{\theta }(x)\log \frac{1}{p_{\theta }(x)}$ under the minimum error probability criterion. By contrast, in the nonlinear information theory developed in this paper, we consider a more general case that the random variable is described by a family of distributions $\{p_{\theta}(X)\}_{\theta\in \Theta}$ but it does not follow any specific distribution within the family. Instead, the family of distributions is used as a conceptual tool to characterize the random variable's behavior without assuming it follows any particular distribution. This more general scenario and the relaxation of the traditional i.i.d. assumption make it possible to exceed the traditional source coding rate limit. Because $V(x)$ is greater than $p_{\theta}(x)$ for any $\theta\in \Theta$, we know that the value of $\underset{\theta \in \Theta}{\inf} \underset{x}{\sum} p_{\theta }(x)\log \frac{1}{V(x)}$ in the first statement of Theorem \ref{thesource} is smaller than the value of $\underset{\theta \in \Theta}{\inf} \underset{x}{\sum} p_{\theta }(x)\log \frac{1}{p_{\theta }(x)}$.
	
\begin{remark}\label{remark:strong_assumption}
	Note that Theorem \ref{thesource} does not specify the limit of source coding rate in nonlinear information theory. The proof of Theorem \ref{thesource} is based significantly on the property of the cluster point of a sequence that there always exists a subsequence converging to the cluster point. As a result, this theorem is neither an asymptotic nor a finite-length sequence conclusion. It only shows that there is a sufficiently large $n$ to satisfy the conclusion that $\mathbb{E}[I_{\{\hat{\boldsymbol{X}}^n\neq \boldsymbol{X}^n\}}]< \epsilon$ or $\mathcal{E}[I_{\{\hat{\boldsymbol{X}}^n\neq \boldsymbol{X}^n\}}]< \epsilon$. That is, if $n_0$ can fulfill the conclusion, it does not automatically mean that $n_0+1$ can satisfy this conclusion.
\end{remark}
	
Despite the potential gap, theoretically, a more accurate representation of uncertain-distribution sources can be made by using Theorem \ref{thesource}, compared with traditional representation of information sources. Although we have not formulated the precise asymptotic bounds of the source coding rate in nonlinear information theory, the methodology we have introduced to study nonlinear source coding is also valuable for studying the other parts of nonlinear information theory.
	
\subsection{Channel Coding in Nonlinear Information Theory}\label{sec4.3}
In this subsection, we first consider the process of channel coding under the nonlinear information theory framework as data passes through the uncertain-distribution channels, i.e. communication channels with uncertainty inherent in transition probability matrices. A $(M,n,\varphi_n,\psi_n)$ nonlinear channel code consists of:
\begin{itemize}
	\item[$\bullet$] A message set $\mathcal{S}=\{1,2,\cdots,M\}$.
	\item[$\bullet$] A channel encoding function $\varphi_n \colon \mathcal{S}\rightarrow \mathcal{X}^{n}$ that assigns a codeword $\boldsymbol{X}^n\in\mathcal{X}^n$ to each message $S\in \mathcal{S}$.
	\item[$\bullet$] A channel decoding function $\psi_n \colon \mathcal{Y}^{n}\rightarrow \mathcal{S}$ that assigns an estimate $\hat{S}\in \mathcal{S}$ to each received sequence $\boldsymbol{Y}^n\in \mathcal{Y}^n$.
\end{itemize}
The channel coding rate of the $(M,n,\varphi_n,\psi_n)$ nonlinear channel code is defined as $\frac{\log M}{n}$.
	
Suppose we want to transmit a message $S$ from the message set $\mathcal{S}$. For the above $(M,n,\varphi_n,\psi_n)$ nonlinear channel code, the encoding function $\varphi_n\colon \mathcal{S}\rightarrow \mathcal{X}^{n}$ is applied to convert the message $S$ into a codeword $\boldsymbol{X}^n \in \mathcal{X}^n$. This codeword is then transmitted over the uncertain-distribution channel $[\mathcal{X},\left\{\boldsymbol{P}_{\lambda } \in [0,1]^{\mathcal{X}\times \mathcal{Y}}\right\}_{\lambda \in \Lambda },\mathcal{Y} ]$. The uncertain-distribution channel processes the transmitted codeword $\boldsymbol{X}^n$ and produces an output sequence $\boldsymbol{Y}^{n}$. The decoding function $\psi_n\colon \mathcal{Y}^{n}\rightarrow \mathcal{S}$ is subsequently applied to the received sequence $\boldsymbol{Y}^n$ to estimate the original message, yielding the decoded message $\hat{S}=\psi_n(\boldsymbol{Y}^{n})$. The above transmission process using the uncertain-distribution channel coding scheme can be characterized as:
\begin{equation}\label{eq:trans_process}
	S\xrightarrow{\varphi_n}\boldsymbol{X}^{n}\xrightarrow{\left\{\boldsymbol{P}_{\lambda } \in [0,1]^{\mathcal{X}\times \mathcal{Y}}\right\}_{\lambda \in \Lambda }}\boldsymbol{Y}^{n}\xrightarrow{\psi_n}\hat{S}.
\end{equation}
Note that $S$ is a random variable defined on a sublinear expectation space $(\Omega,\mathcal{H},\mathbb{E})$. Based on the nonlinear expectation theory, the error performance in the above message transmission process \eqref{eq:trans_process} is characterized as the maximum and the minimum probabilities of the event that the estimate of the message is different from the message actually sent. Here, these probabilities are calculated as
	\begin{equation}\label{eq:maximum pro}
		\mathbb{E}\left[I_{\{\hat{S}\neq S\}}\right]=\sup_{\theta \in \Theta} P_{\theta}\left(\hat{S}\neq S\right)=\sup_{\theta \in \Theta} P_{e, \theta}^{(n)},
	\end{equation}
	\begin{equation}\label{eq:minimum pro}
		\mathcal{E}\left[I_{\{\hat{S}\neq S\}}\right]=\inf_{\theta \in \Theta} P_{\theta}\left(\hat{S}\neq S\right)=\inf_{\theta \in \Theta} P_{e, \theta}^{(n)}.
	\end{equation}
Intuitively, the above mentioned terms ``maximum probability'' and ``minimum probability'' correspond to the conservative and the aggressive strategies of designing nonlinear channel code, respectively. Therefore, similar to the research of source coding in Section \ref{sec4.2}, there are also two criteria for evaluating the performance of nonlinear channel code, namely the maximum error probability criterion and the minimum error probability criterion. The former refers to ensuring $\mathbb{E}\left[I_{\{\hat{\boldsymbol{X}}^n\neq \boldsymbol{X}^n\}}\right]$ to be sufficiently small, while the latter refers to ensuring $\mathcal{E}\left[I_{\{\hat{\boldsymbol{X}}^n\neq \boldsymbol{X}^n\}}\right]$ to be sufficiently small. The purpose of channel coding in our nonlinear information theory is to make the maximum probability or the minimum probability of errors occurring as small as possible, while making the channel coding rate as large as possible.
		
\begin{remark}
	The decoded message $\hat{S}$ is related to the output sequence of the uncertain-distribution channel, which is characterized by $\left\{\boldsymbol{P}_{\lambda } \in [0,1]^{\mathcal{X}\times \mathcal{Y}}\right\}_{\lambda \in \Lambda }$. Therefore, the calculations of (\ref{eq:maximum pro}) and (\ref{eq:minimum pro}) are also affected by the family of transition probability matrices $\left\{\boldsymbol{P}_{\lambda } \in [0,1]^{\mathcal{X}\times \mathcal{Y}}\right\}_{\lambda \in \Lambda }$. However, in (\ref{eq:maximum pro}) and (\ref{eq:minimum pro}), only $\theta$ and $\Theta$ are present. This is because $\lambda$ and $\Lambda$ are implicitly but closely related to $\theta$ and $\Theta$. Essentially, all distribution families considered in this paper are derived from the family of probability measures $\mathcal{P}=\{P_{\theta}\}_{\theta \in \Theta}$, which is associated with sublinear expectation. For the sake of notational convenience, in this paper we generally use $\theta,\Theta$ to represent the distributions of random variables and $\lambda,\Lambda$ to represent the transition probability matrices.
\end{remark}
	
The number $R_{\textrm{c}}$ $\geq 0$ represents an achievable coding rate for the uncertain-distribution channel under the maximum error probability criterion or the minimum error probability criterion, if there exist a sequence of $(M, n, \varphi_n, \psi_n)$ nonlinear channel codes with coding rate $R_{\textrm{c}}$ such that $\underset{\theta \in \Theta}{\sup} P_{e,\theta}^{(n)}$ or $\underset{\theta \in \Theta}{\inf} P_{e,\theta}^{(n)} $ tends to $0$ as $n\rightarrow \infty$. For any uncertain-distribution channel $[\mathcal{X},\left\{\boldsymbol{P}_{\lambda } \in [0,1]^{\mathcal{X}\times \mathcal{Y}}\right\}_{\lambda \in \Lambda },\mathcal{Y} ]$, one of fundamental research problems is to determine the upper and the lower bounds of achievable channel coding rate under the maximum error probability criterion or the minimum error probability criterion. Let 
\begin{equation}
	\begin{split}
	\overline{C}:=&\sup_{\{p_{\theta }\}\subset \{ p(x),x\in \mathcal{X} \} } \overline{I}\left[\{p_{\theta}\}_{\theta \in \Theta};\{\boldsymbol{P}_{\lambda } \in [0,1]^{\mathcal{X}\times \mathcal{Y}}\}_{\lambda \in \Lambda}\right]\\
	=&\sup_{p\left(x\right)} \sup_{\lambda \in \Lambda} \sum _{x\in\mathcal{X},y\in\mathcal{Y}}p\left(x\right)p_{\lambda }\left(y|x\right)\log \frac{p_{\lambda}\left(y|x\right)}{\sum_{x}p\left(x\right)p_{\lambda}\left(y|x\right)}.
	\end{split}
\end{equation}
Then, we derive the upper bound of achievable coding rate for uncertain-distribution channels under the maximum error probability criterion in the following theorem, which guarantees that, if the probability model itself has uncertainty, $\overline{C}$ represents the upper bound of the achievable channel coding rate under the maximum error probability criterion.
	
\begin{theorem}\label{thechannel}
	Define a discrete memoryless uncertain-distribution channel $[\mathcal{X},\left\{\boldsymbol{P}_{\lambda } \in [0,1]^{\mathcal{X}\times \mathcal{Y}}\right\}_{\lambda \in \Lambda },\mathcal{Y} ]$ with a sublinear expectation space $(\Omega,\mathcal{H},\mathbb{E})$. For any sequence of $(M,n,\varphi_n,\psi_n)$ nonlinear channel codes with coding rate $R_\textrm{c}$, if the maximum probability of errors occurring, i.e.,  $\underset{\theta \in \Theta}{\sup } P_{e,\theta}^{(n)}$, tends to 0 as $n\rightarrow \infty $, then $R_{\textrm{c}}$ $\leq \overline{C}$ must hold.
\end{theorem}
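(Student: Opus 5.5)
The plan is to reduce the statement to the classical weak converse for a single channel, by fixing one member of the uncertainty family. Since $\sup_{\theta\in\Theta}P_{e,\theta}^{(n)}\to 0$, for every fixed $\theta$ (equivalently, every fixed pair consisting of an input law and a transition matrix $\boldsymbol{P}_{\lambda}$ realized under $P_\theta$) we have $P_{e,\theta}^{(n)}\to 0$. So it suffices to pick a single $\lambda\in\Lambda$ and a single measure $P_{\theta}$ under which the message $S$ is uniform on $\mathcal{S}$ and the channel acts memorylessly with $\boldsymbol{P}_\lambda$. Under that measure we have an ordinary probability space and a classical DMC. The memoryless definition gives the product form $\prod_i p_\lambda(y_i|x_i)$ for such a member.

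Under this fixed $P_\theta$, I would run Fano's argument in its classical form. Theorem \ref{thefano}, restricted to a single distribution, reduces to the classical Fano inequality. Write $P_e^{(n)}=P_{\theta}(\hat S\neq S)$. Then
\begin{equation*}
\log M = H_\theta(S)= H_\theta(S|\boldsymbol{Y}^n)+I_\theta(S;\boldsymbol{Y}^n)\le 1+P_e^{(n)}\log M + I_\theta(\boldsymbol{X}^n;\boldsymbol{Y}^n),
\end{equation*}
using the classical data processing inequality along $S\to\boldsymbol{X}^n\to\boldsymbol{Y}^n$. This is the single-measure case of Theorem \ref{thedata}. Memorylessness under $\boldsymbol{P}_\lambda$ gives $I_\theta(\boldsymbol{X}^n;\boldsymbol{Y}^n)\le\sum_{i=1}^n I(X_i;Y_i)$, where $X_i$ has its induced marginal $p_i(x)$.

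Each term satisfies
\begin{equation*}
I(X_i;Y_i)=\sum_{x,y}p_i(x)p_\lambda(y|x)\log\frac{p_\lambda(y|x)}{\sum_{x'}p_i(x')p_\lambda(y|x')}\le \sup_{p(x)}\sup_{\lambda\in\Lambda}(\cdots)=\overline{C}.
\end{equation*}
Hence $nR_{\textrm c}=\log M\le 1+P_e^{(n)}nR_{\textrm c}+n\overline{C}$. Dividing by $n$ gives $R_{\textrm c}(1-P_e^{(n)})\le \frac1n+\overline{C}$. Letting $n\to\infty$ and using $P_e^{(n)}\le\sup_\theta P_{e,\theta}^{(n)}\to0$ yields $R_{\textrm c}\le\overline{C}$. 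In fact this only needs the bound for the single chosen $\lambda$, so one even gets $R_{\textrm c}\le\max_{p}I(p;\boldsymbol{P}_\lambda)$ for every $\lambda$. That is stronger than, and in particular implies, the stated bound.

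The main obstacle is justifying the first step: exhibiting within $\{P_\theta\}$ a single measure under which $S$ is uniform (or at least has full entropy $\log M$) and the channel is a product channel with one fixed $\boldsymbol{P}_\lambda$. The paper's memoryless definition identifies the families $\{P_\theta(\cdot|\cdot)\}$ and $\{\prod_i p_\lambda(y_i|x_i)\}$ only as sets, so I would read it as guaranteeing some $\theta$ realizing each product kernel.

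For the message law, the cleanest fallback avoids uniformity entirely. I would apply the argument to a fixed message $s$: the error probability $P_\theta(\hat S\ne S\mid S=s)\to0$ uniformly in $s$ follows from the sup over $\theta$, which allows any input law including point masses. I would then average over uniform $s$ inside the classical computation. This defines an auxiliary classical experiment with a uniform message and the fixed channel $\boldsymbol{P}_\lambda$, whose error probability is an average of quantities that each tend to zero. Fano then applies there without needing the uniform law to lie in $\Theta$.
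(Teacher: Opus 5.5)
Your proof is correct, given a richness hypothesis of the same kind the paper itself adds. Your route differs from the paper's in where the uncertainty is handled.

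The paper stays inside the nonlinear formalism throughout. With $U=I_{\{\hat S\neq S\}}$, it bounds $\hat H(S|\boldsymbol{Y}^n)\le 1+\sup_{\theta}P^{(n)}_{e,\theta}\,nR_{\textrm{c}}$ via the inequality-form chain rule (Theorem \ref{theless}). It then writes $nR_{\textrm{c}}=\hat H(S)\le \hat H(S|\boldsymbol{Y}^n)+\overline{I}(\boldsymbol{X}^n;\boldsymbol{Y}^n)$ using Theorems \ref{thegreat} and \ref{thedata}. Here $\hat H(S)=nR_{\textrm{c}}$ comes from an assumption inserted mid-proof that the uniform law belongs to the family of $S$. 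Finally it single-letterizes $\overline{I}(\boldsymbol{X}^n;\boldsymbol{Y}^n)\le n\overline{C}$.

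You instead freeze one member $(P_\theta,\boldsymbol{P}_\lambda)$ first and run the classical weak converse. Both routes end at $R_{\textrm{c}}\le \frac1n+P_e^{(n)}R_{\textrm{c}}+\overline{C}$, with your $P_e^{(n)}$ in place of $\sup_\theta P^{(n)}_{e,\theta}$.

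Your route buys three things:
\begin{enumerate}
\item It rests only on classical Fano, classical data processing and $I(\boldsymbol{X}^n;\boldsymbol{Y}^n)\le\sum_i I(X_i;Y_i)$. It does not depend on the nonlinear properties whose proofs the paper omits.
\item It avoids the paper's identification of $p(\boldsymbol{y}^n)$ with $\prod_i p(y_i)$.
\item It gives a sharper bound the paper's route cannot reach. Whenever every $\lambda$ is realized together with a uniform message, you get $R_{\textrm{c}}\le\inf_{\lambda}\max_p I(p;\boldsymbol{P}_\lambda)$. By contrast, $\overline{C}=\sup_\lambda\max_p I(p;\boldsymbol{P}_\lambda)$, and every nonlinear quantity in the paper takes a supremum over $\lambda$, so it can only produce this best-channel bound.
\end{enumerate}
The paper's version nominally requires only that the uniform law lie in the marginal family of $S$. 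However, its definitions pair any $\theta$ with any $\lambda$ (the product $p_\theta(x)p_\lambda(y|x)$ under a double supremum). That is exactly the joint realizability you assume in your first step.

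Your fallback replaces this hypothesis rather than removing it. ``The sup over $\theta$ allows point masses'' is itself the assumption that for each $s$ some $\theta$ makes $S=s$ almost surely together with the fixed kernel $\prod_i p_\lambda(y_i|x_i)$. Given that assumption, averaging into an auxiliary uniform-message experiment is valid. Some hypothesis of this type is unavoidable for the statement. If the family of $S$ were a single point mass $\delta_{s_0}$, the decoder $\psi_n\equiv s_0$ would have zero error at every rate.
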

	
\textit{Proof}: See Appendix \ref{app4}. $\hfill\blacksquare$
	
By using the strong law of large numbers under sublinear expectations (Theorem \ref{thm:SLLN}), we obtain the following theorem. However, a similar issue as pointed out in Remark \ref{remark:strong_assumption} exists, i.e., Theorem \ref{thechannel1} is neither an asymptotic nor a finite-length sequence conclusion.
	
\begin{theorem}\label{thechannel1}
	Define a discrete memoryless uncertain-distribution channel $[\mathcal{X},\left\{\boldsymbol{P}_{\lambda } \in [0,1]^{\mathcal{X}\times \mathcal{Y}}\right\}_{\lambda \in \Lambda },\mathcal{Y} ]$ with a sublinear expectation space $(\Omega,\mathcal{H},\mathbb{E})$ where the uncertain probability measures associated with sublinear expectation $\mathbb{E}$ are $\{P_{\theta} \}_{\theta\in \Theta}$, i.e., $\mathbb{E}[\cdot]=\underset{\theta \in \Theta}{\sup} E_{P_{\theta}}[\cdot]$. $(V,v)$ is the pair of capacities generated by $\mathbb{E}$ (see Definition \ref{def:capacity}). Suppose that $\{P_{\theta}\}_{\theta\in \Theta}$ is a countably-dimensional weakly compact family of probability measures on $(\Omega,\sigma(\mathcal{H}))$ (see Theorem \ref{thm:SLLN}). Then, for any $R_{\textrm{c}} < \overline{C}$ and any $\epsilon$ of positive value, there is a sufficiently large $n$ and a $(M,n,\varphi_n,\psi_n)$ nonlinear channel code with coding rate $R_{\textrm{c}}$ such that
	\begin{equation}\label{eq:channel theorem}
		\mathcal{E}\left[I_{\{\hat{S}\neq S\}}\right]=\inf_{\theta \in \Theta} P_{e,\theta}^{(n)} < \epsilon.
	\end{equation}
\end{theorem}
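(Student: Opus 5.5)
We prove Theorem \ref{thechannel1} with a random-coding argument based on joint typicality. The key difference from the classical proof is that the typicality step uses the cluster-point strong law of Theorem \ref{thm:SLLN} instead of the ordinary law of large numbers. Fix $R_{\textrm{c}}<\overline{C}$ and choose $\delta>0$ with $R_{\textrm{c}}+3\delta<\overline{C}$. By the definition of $\overline{C}$ there are an input distribution $p^*(x)$ and a channel index $\lambda^*\in\Lambda$ such that
\begin{equation*}
I^*:=\sum_{x,y}p^*(x)p_{\lambda^*}(y|x)\log\frac{p_{\lambda^*}(y|x)}{\sum_{x'}p^*(x')p_{\lambda^*}(y|x')}>R_{\textrm{c}}+3\delta .
\end{equation*}
We generate $M=2^{nR_{\textrm{c}}}$ codewords with i.i.d.\ components drawn from $p^*$, and we use a threshold decoder built from the pair $(p^*,\lambda^*)$. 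The decoder outputs $\hat S=s$ if $s$ is the unique message whose codeword $\boldsymbol{x}^n(s)$ satisfies
\begin{equation*}
\frac1n\sum_{i=1}^n i_{\lambda^*}(x_i(s);Y_i)\ge I^*-\delta,
\end{equation*}
where $i_{\lambda^*}(x;y)=\log\big(p_{\lambda^*}(y|x)/q^*(y)\big)$ and $q^*$ is the output law induced by $p^*$ and $\lambda^*$.

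\textbf{Step 1: bound on wrong codewords.} Once $q^*$ is fixed, this bound does not depend on the uncertainty. Under the random-coding ensemble, a codeword independent of $\boldsymbol{Y}^n$ passes the threshold with probability at most $2^{-n(I^*-\delta)}$ under any output law. To see this, note that $\sum_{\boldsymbol{x}^n}p^*(\boldsymbol{x}^n)\prod_i p_{\lambda^*}(y_i|x_i)/q^*(y_i)=1$ for each $\boldsymbol{y}^n$, and apply Markov's inequality pointwise in $\boldsymbol{y}^n$. A union bound over the $M-1$ wrong messages gives a contribution of at most $2^{-n(I^*-R_{\textrm{c}}-\delta)}$, uniformly in $\theta$. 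This term is therefore small for every $P_\theta$, so it is small for the infimum as well.

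\textbf{Step 2: the true codeword passes the threshold.} This is the main obstacle, and it is where Theorem \ref{thm:SLLN} enters. Consider the IID sequence $Z_i=i_{\lambda^*}(X_i;Y_i)$ on $(\Omega,\mathcal H,\mathbb E)$. Because the channel is memoryless and the codeword entries are IID, $\{Z_i\}$ is IID in the sense of Definition \ref{def:IID}. It is bounded after truncation, which is harmless for finite alphabets if zero-probability pairs are excluded. The representation gives $\mathbb{E}[Z_1]=\sup_\lambda E_{\lambda}[Z_1]\ge I^*$ and $-\mathbb{E}[-Z_1]\le I^*$, so $b=I^*$ lies in $[-\mathbb E[-Z_1],\mathbb E[Z_1]]$. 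Since $\{P_\theta\}$ is countably-dimensional weakly compact, Theorem \ref{thm:SLLN} yields
\begin{equation*}
V\big(I^*\in C(\{\tfrac1n\textstyle\sum_{i\le n} Z_i\})\big)=1 .
\end{equation*}
By the representation $V=\sup_\theta P_\theta$, for any $\eta>0$ there is a $\theta_0$ with $P_{\theta_0}\big(I^*\in C(\cdot)\big)>1-\eta$. Under this single measure, with high probability the empirical averages come within $\delta$ of $I^*$ along infinitely many $n$. From this we extract one sufficiently large $n$ at which $P_{\theta_0}\big(\frac1n\sum Z_i\ge I^*-\delta\big)>1-2\eta$ and Step 1 already gives a bound below $\eta$.

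This extraction from ``infinitely often on a set of probability close to one'' to ``at a single fixed $n$ with high probability'' is the delicate point. I would handle it by a Fatou-type argument on $\limsup_n P_{\theta_0}\big(|\frac1n\sum Z_i-I^*|<\delta\big)$, using that the events $\{|\frac1n\sum Z_i-I^*|<\delta \text{ i.o.}\}$ have probability at least $1-\eta$. This is exactly why the theorem claims only existence of some large $n$ and not a statement for all large $n$, as in Remark \ref{remark:strong_assumption}.

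\textbf{Step 3: combining the bounds.} At that $n$, the ensemble-averaged error probability under $P_{\theta_0}$ is at most $2\eta+2^{-n(I^*-R_{\textrm{c}}-\delta)}<\epsilon$. Hence some deterministic codebook $(\varphi_n,\psi_n)$ achieves $P_{e,\theta_0}^{(n)}<\epsilon$. Since $\inf_\theta P^{(n)}_{e,\theta}\le P^{(n)}_{e,\theta_0}$, this establishes \eqref{eq:channel theorem}.

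There is one subtlety in the order of choices. The measure $\theta_0$ is chosen using the random variables $Z_i$ generated by the random code, so we must derandomize after fixing $\theta_0$ and averaging over the ensemble. Alternatively, the codebook randomness can be embedded into $\Omega$, so that the code choice is part of the IID structure and the argument is carried out directly for the ensemble average.
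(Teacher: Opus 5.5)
Your Step 2 fails at the point you flag as delicate. Write $A_n=\{|\frac1n\sum_{i\le n}Z_i-I^*|<\delta\}$. Theorem \ref{thm:SLLN} gives you only $P_{\theta_0}(\limsup_n A_n)>1-\eta$, and Fatou's lemma points the wrong way for what you need. The reverse Fatou inequality is $\limsup_n P_{\theta_0}(A_n)\le P_{\theta_0}(\limsup_n A_n)$. So a large probability of ``$A_n$ infinitely often'' gives no lower bound on $P_{\theta_0}(A_n)$ at any fixed $n$. The sliding-interval ``typewriter'' sequence on $[0,1]$ shows this: it has $P(\limsup_n A_n)=1$ while $P(A_n)\to0$.

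This is not an exotic worry here. For a sequence that is IID under $\mathbb{E}$, a single measure in the family may choose the law of each $Z_i$ depending on the past. Then $\frac1n\sum_{i\le n}Z_i$ can swing between $-\mathbb{E}[-Z_1]$ and $\mathbb{E}[Z_1]$ with a random phase. Every path passes near $I^*$ infinitely often, yet $P_{\theta_0}(A_n)$ can stay far below $1-2\eta$ for every $n$. Your $\theta_0$ is an arbitrary measure with $P_{\theta_0}(I^*\in C)>1-\eta$, so the cluster-point statement cannot deliver the fixed-$n$ bound that Step 3 relies on.

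There is also a secondary issue. The claim that $\{Z_i\}$ is IID in the sense of Definition \ref{def:IID} does not follow from the paper's memoryless definition. That definition describes the family of product channels $\prod_i p_\lambda(y_i|x_i)$ with one common $\lambda$. Such a family is not closed under history-dependent choices of $\lambda$ and generally violates the nonlinear independence identity, so the hypothesis of Theorem \ref{thm:SLLN} is not verified either.

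The repair is to take Theorem \ref{thm:SLLN} out of Step 2, which is essentially how the paper handles its term $I_{1,1}$. There it fixes one probability measure tied to a pair $(p,\lambda)$ whose mutual information exceeds $R'$ and uses the ordinary law of large numbers. The nonlinear strong law enters the paper's proof only through the set $\mathcal{M}_{\epsilon}^{(k)}$ borrowed from the source-coding proof.

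In your notation, let $P_{\theta^*}$ be a measure under which the channel acts as $\prod_i p_{\lambda^*}(y_i|x_i)$; the memoryless definition provides this. With codewords drawn i.i.d.\ from $p^*$ independently of the channel, the $Z_i$ are classically i.i.d.\ with mean $I^*$ under $P_{\theta^*}$. Hence $P_{\theta^*}(\frac1n\sum_{i\le n}Z_i\ge I^*-\delta)\to1$ for all large $n$, not merely along a subsequence. Since $\theta^*$ does not depend on the codebook, your order-of-choices subtlety also disappears. Derandomizing the ensemble average then gives a code with $\inf_\theta P^{(n)}_{e,\theta}\le P^{(n)}_{e,\theta^*}<\epsilon$.

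Your Step 1 is sound as it stands. Because your threshold is built from the fixed pair $(p^*,\lambda^*)$, the Markov bound holds pointwise in $\boldsymbol{y}^n$ for any output law. This is more transparent than the paper's bound on $I_2$, whose decoding set involves a supremum over input laws and channels.
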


\textit{Proof}: See Appendix \ref{app4}. $\hfill\blacksquare$
	
Theorem \ref{thechannel1} shows that $\overline{C}$ is a cluster point of the channel coding rate of uncertain-distribution channels under the minimum error probability criterion, with the condition that $\{P_{\theta}\}_{\theta\in \Theta}$ is a countably-dimensional weakly compact family of probability measures on $(\Omega,\sigma(\mathcal{H}))$. This condition is imposed only to ensure the validity of the strong law of large numbers under sublinear expectations, as described in Theorem \ref{thm:SLLN}. Furthermore, Theorem \ref{thechannel} and Theorem \ref{thechannel1} are collectively called the nonlinear channel coding theorems. These theorems provide a theoretical foundation for better characterizing the performance of communications systems. Although we have not established the supremum of the achievable coding rate for uncertain-distribution channels, it offers a fresh perspective on exploring the performance of channel coding under uncertain-distribution channel models.

\subsection{Source Coding with Distortion in Nonlinear Information Theory}\label{sec4.4}
In this subsection we discuss source coding model with a non-negligible distortion under the nonlinear information theory framework.

We first consider the transmission process denoted by $ X \rightarrow \hat{X}$. Suppose $X$ and $\hat{X}$ are two random variables defined on a sublinear expectation space $(\Omega,\mathcal{H},\mathbb{E})$. The corresponding uncertain probability distributions of $X$ are $\{p_{\theta}(X)\}_{\theta \in \Theta}$, and the corresponding uncertain transition probability matrices of $\hat{X}$ given $X$ are $\{\boldsymbol{Q}_{\lambda}(\hat{X}|X)\}_{\lambda \in \Lambda}$. For an uncertain-distribution source encoding process $f\colon \mathcal{X}^{n}\rightarrow \mathcal{W}$ and decoding process $g\colon \mathcal{W}\rightarrow \hat{\mathcal{X}}^{n}$, the source coding rate $R_{\textrm{s}}$ has already been defined in Section \ref{sec4.2} as $\frac{\log||\mathcal{W}||}{n}$. In reality, the information sources, especially the uncertain-distribution sources, cannot always communicate fully error-free, and a certain distortion typically exists. Therefore, it is necessary to specify a distortion measure $d(X,\hat{X})$ on $\mathcal{X} \times \hat{\mathcal{X}}$, and use sublinear expectation to describe the distortion.
	
\begin{definition}
Let $X$ be a discrete random variable on a sublinear expectation space $(\Omega,\mathcal{H},\mathbb{E})$. The distortion measure is denoted as $d(X,\hat{X})$. For a transmission process $X\rightarrow \hat{X}$, the \textit{maximum expected distortion} and the \textit{minimum expected distortion} are respectively defined as:
\begin{equation}
	\mathbb{E}[d(X,\hat{X})]=\sup_{\theta \in \Theta}\sup_{\lambda \in \Lambda }\sum_{x,\hat{x}}p_{\theta}(x)q_{\lambda}(\hat{x}|x)d(x,\hat{x}),
\end{equation}
\begin{equation}
	\mathcal{E}[d(X,\hat{X})]=\inf_{\theta \in \Theta}\inf_{\lambda \in \Lambda }\sum_{x,\hat{x}}p_{\theta}(x)q_{\lambda}(\hat{x}|x)d(x,\hat{x}).
\end{equation}
\end{definition}

\begin{remark}
Similar to the methodology used in Section \ref{sec4.3}, in the above description, we have introduced a family of transition probability matrices $\{\boldsymbol{Q}_{\lambda}(\hat{X}|X)\}_{\lambda \in \Lambda}$ to show that the maximum and minimum expected distortions can still be calculated in the general case where both the  source distribution and the transition probability matrix are uncertain. In this formulation, two families of distributions are involved, which implies that in a scenario where the distribution of the source is uncertain and the behavior of the source encoder is not fully known (e.g., in non-cooperative games of military applications), the uncertainty in both the source and the transition probability matrix can be significant and should not be overlooked. Then, it is feasible to only find out the family of possible source encoders that can potentially satisfy the design constraints imposed by the principles of source coding with distortion under our nonlinear information theory framework. On the other hand, incorporating such uncertainty would substantially increase the analytical complexity, and in most cases the goal of source coding with distortion is to find a source compression scheme, i.e., a transition probability matrix $\boldsymbol{Q}(\hat{X}|X)$, that maximizes the coding efficiency under a given distortion. Hence, in the subsequent formulations, we do not assume uncertainty in the transition probability matrix and simplify the problem to consider only a single transition probability matrix. 
\end{remark}
	
For notational convenience, we write $\mathbb{E}_{\boldsymbol{Q}}[d(X,\hat{X})]$ and $\mathcal{E}_{\boldsymbol{Q}}[d(X,\hat{X})]$ as the value of $\mathbb{E}[d(X,\hat{X})]$ and $\mathcal{E}[d(X,\hat{X})]$ when the transition probability matrices of $\hat{X}$ given $X$ become a deterministic transition probability matrix $\boldsymbol{Q}(\hat{X}|X)$. Then, the rate distortion function based on the nonlinear mutual information is defined as follows.
	
\begin{definition}
	Let $X$ be a discrete random variable on a sublinear expectation space $(\Omega,\mathcal{H},\mathbb{E})$ and the distortion measure is denoted as $d(X,\hat{X})$. The \textit{rate distortion function} based on the nonlinear mutual information is defined as:
	\begin{equation}
		\hat{R}^{I}\left(D\right):=\inf_{\boldsymbol{Q}\left(\hat{X}|X\right)\colon \mathbb{E}_{\boldsymbol{Q}}\left[d\left(X,\hat{X}\right)\right]\leq D} \overline{I}\left[\{p_{\theta}(X)\}_{\theta \in \Theta};\boldsymbol{Q}(\hat{X}|X)\right].
	\end{equation}
\end{definition}
	
It is easy to check that the properties described in Theorem \ref{thelim1} hold for the rate distortion function based on the nonlinear mutual information.
\begin{theorem}\label{thelim1}
	The rate distortion function based on the nonlinear mutual information, i.e., $\hat{R}^{I}\left(D\right)$, satisfies:
		\begin{enumerate}
			\item {$\hat{R}^{I}\left(D\right)$ decreases with respect to $D$.}
			\item {$\hat{R}^{I}\left(D\right)$ is a convex function on $[0,+\infty )$.}
		\end{enumerate}
\end{theorem}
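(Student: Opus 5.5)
The plan is to work directly from the definition of $\hat{R}^{I}(D)$. The argument reduces to two facts: for each fixed $\theta$, the distortion constraint is linear in $\boldsymbol{Q}$, and the mutual information is convex in $\boldsymbol{Q}$. Both properties survive taking $\sup_{\theta\in\Theta}$. Throughout, write
\begin{equation*}
I_\theta(\boldsymbol{Q}):=\sum_{x,\hat{x}}p_\theta(x)q(\hat{x}|x)\log\frac{q(\hat{x}|x)}{\sum_{x'}p_\theta(x')q(\hat{x}|x')},
\qquad
D_\theta(\boldsymbol{Q}):=\sum_{x,\hat{x}}p_\theta(x)q(\hat{x}|x)d(x,\hat{x}).
\end{equation*}
With a single transition matrix we have $\overline{I}[\{p_\theta\};\boldsymbol{Q}]=\sup_\theta I_\theta(\boldsymbol{Q})$ and $\mathbb{E}_{\boldsymbol{Q}}[d(X,\hat{X})]=\sup_\theta D_\theta(\boldsymbol{Q})$. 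I adopt the convention $\inf\emptyset=+\infty$. Since $I_\theta(\boldsymbol{Q})\le\log\|\mathcal{X}\|$, $\hat{R}^I(D)$ is finite wherever the feasible set is nonempty. With the value $+\infty$ allowed, both claims are then meaningful on all of $[0,+\infty)$.

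For property 1), take $D\le D'$. Any $\boldsymbol{Q}$ with $\mathbb{E}_{\boldsymbol{Q}}[d]\le D$ also satisfies $\mathbb{E}_{\boldsymbol{Q}}[d]\le D'$. The infimum defining $\hat{R}^I(D')$ is therefore over a larger set, so $\hat{R}^I(D')\le\hat{R}^I(D)$.

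For property 2), fix $D_1,D_2\ge 0$, $t\in[0,1]$ and $\varepsilon>0$. If either $\hat{R}^I(D_i)=+\infty$ there is nothing to prove, so assume both are finite. Choose feasible $\boldsymbol{Q}_1,\boldsymbol{Q}_2$ with $\sup_\theta D_\theta(\boldsymbol{Q}_i)\le D_i$ and $\sup_\theta I_\theta(\boldsymbol{Q}_i)\le \hat{R}^I(D_i)+\varepsilon$. Set $\boldsymbol{Q}_t=t\boldsymbol{Q}_1+(1-t)\boldsymbol{Q}_2$, which is again a transition matrix.

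\emph{Feasibility of $\boldsymbol{Q}_t$.} For each $\theta$, $D_\theta$ is linear in $\boldsymbol{Q}$. Hence
\begin{equation*}
D_\theta(\boldsymbol{Q}_t)=tD_\theta(\boldsymbol{Q}_1)+(1-t)D_\theta(\boldsymbol{Q}_2)\le tD_1+(1-t)D_2 .
\end{equation*}
Taking $\sup_\theta$ gives $\mathbb{E}_{\boldsymbol{Q}_t}[d]\le tD_1+(1-t)D_2$, so $\boldsymbol{Q}_t$ is feasible at the mixed distortion level. Note that we only get $\le$ after the supremum, not equality, but $\le$ is all the constraint requires.

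\emph{Rate of $\boldsymbol{Q}_t$.} For fixed $p_\theta$, the classical mutual information $I_\theta(\boldsymbol{Q})$ is convex in $\boldsymbol{Q}$. This follows from the log-sum inequality, or by writing $I_\theta(\boldsymbol{Q})=D(p_\theta \boldsymbol{Q}\,\|\,p_\theta\otimes p_\theta\boldsymbol{Q})$, where both arguments are linear in $\boldsymbol{Q}$ and relative entropy is jointly convex. Therefore
\begin{equation*}
I_\theta(\boldsymbol{Q}_t)\le tI_\theta(\boldsymbol{Q}_1)+(1-t)I_\theta(\boldsymbol{Q}_2)\le t\sup_{\theta'}I_{\theta'}(\boldsymbol{Q}_1)+(1-t)\sup_{\theta'}I_{\theta'}(\boldsymbol{Q}_2).
\end{equation*}
Taking $\sup_\theta$ on the left shows that a pointwise supremum of convex functions is convex. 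This yields
\begin{equation*}
\hat{R}^I(tD_1+(1-t)D_2)\le \overline{I}[\{p_\theta\};\boldsymbol{Q}_t]\le t\hat{R}^I(D_1)+(1-t)\hat{R}^I(D_2)+\varepsilon .
\end{equation*}
Letting $\varepsilon\to0$ gives convexity.

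The only step needing real care is the second one: the nonlinear mutual information must stay convex in $\boldsymbol{Q}$ after the supremum over $\Theta$. This is exactly where the structure of $\overline{I}$ matters. Because the supremum is taken \emph{outside} a family of convex functions, each of which uses the same $\boldsymbol{Q}$ and differs only in $p_\theta$, convexity is preserved. I would check this point carefully, including the handling of zero entries of $q$ in the log-sum inequality. The remaining steps are bookkeeping.
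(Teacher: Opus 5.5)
Your proposal is correct and follows the same route as the paper. Monotonicity comes from the nesting of feasible sets. Convexity comes from mixing $\varepsilon$-optimal matrices $\boldsymbol{Q}_1,\boldsymbol{Q}_2$, checking that the mixture meets the averaged distortion constraint, and using convexity of $\overline{I}$ in $\boldsymbol{Q}$. You also supply the justifications the paper leaves implicit: each $D_\theta$ is linear and each $I_\theta$ is convex in $\boldsymbol{Q}$, both survive the supremum over $\Theta$, and you handle the empty-feasible-set case via $\inf\emptyset=+\infty$.
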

	
\textit{Proof}: See Appendix \ref{app5}. $\hfill\blacksquare$
	
\begin{theorem}\label{thelim2}
	Let $X_{1}, X_{2}, \cdots, X_{n},\cdots$ be a discrete IID source sequence defined on a sublinear expectation space $(\Omega,\mathcal{H},\mathbb{E})$ where the uncertain probability measures associated with sublinear expectation $\mathbb{E}$ are $\{P_{\theta} \}_{\theta\in \Theta}$, i.e., $\mathbb{E}[\cdot]=\underset{\theta \in \Theta}{\sup} E_{P_{\theta}}[\cdot]$. The distortion measure is denoted as $d(X,\hat{X})$. $(V,v)$ is the pair of capacities generated by $\mathbb{E}$ (see Definition \ref{def:capacity}). Suppose that $\{P_{\theta}\}_{\theta\in \Theta}$ is a countably-dimensional weakly compact family of probability measures on $(\Omega,\sigma(\mathcal{H}))$ (see Theorem \ref{thm:SLLN}). Then, for any $R_{\textrm{s}}>\hat{R}^{I}\left(D\right)$ and $\epsilon>0$, there must exist a sufficiently large $n$ and a $(||\mathcal{W}||,n,f_n,g_n)$ nonlinear source code, which has a coding rate $R_{\textrm{s}}$ and the corresponding transition probability matrix $\boldsymbol{Q}(\hat{X}|X)$, such that $\mathcal{E}_{\boldsymbol{Q}}[d(\boldsymbol{X}^{n},\hat{\boldsymbol{X}}^{n})]\leq D+\epsilon$.
\end{theorem}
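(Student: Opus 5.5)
The plan is a random-coding argument in the style of Shannon's rate--distortion achievability. The key fact is that the minimum expected distortion criterion only requires the distortion to be small under \emph{one} measure of the family; the approach is to exploit that freedom rather than fight the worst case.

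\textbf{Step 1 (fix a test channel).} Since $R_{\textrm{s}}>\hat{R}^{I}(D)$, the definition of $\hat{R}^{I}(D)$ as an infimum gives a matrix $\boldsymbol{Q}(\hat{X}|X)$ with $\mathbb{E}_{\boldsymbol{Q}}[d(X,\hat{X})]\le D$ and
\[
\overline{I}\left[\{p_{\theta}(X)\}_{\theta\in\Theta};\boldsymbol{Q}\right] < R_{\textrm{s}}-\delta
\]
for some $\delta>0$. Because $\overline{I}$ is a supremum over $\theta$, every $\theta$ satisfies $I_{\theta}(X;\hat{X})<R_{\textrm{s}}-\delta$. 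Because $\mathbb{E}_{\boldsymbol{Q}}$ is a supremum, every $\theta$ satisfies $E_{\theta}[d(X,\hat X)]\le D$. This uniformity is the point of conditioning on the upper quantities.

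\textbf{Step 2 (choose a reference measure).} Consider the scalar IID sequence
\[
Z_i=-\log \frac{1}{\sum_x p_{\theta_0}(x)q(\hat x|x)}\cdots
\]
More simply, work with the empirical statistics of $\boldsymbol{X}^n$. Apply Theorem~\ref{thm:SLLN} to the IID sequences $\phi(X_i)$ for suitable bounded test functions $\phi$, including indicators $I_{\{X_i=x\}}$ and $\sum_{\hat x}q(\hat x|X_i)d(X_i,\hat x)$. Since $V$ is an upper capacity, the event that the empirical averages have a prescribed cluster point in $[-\mathbb{E}[-\phi],\mathbb{E}[\phi]]$ has $V$-capacity one. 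By weak compactness this capacity is attained: some $P_{\theta^*}$ gives this event probability close to one.

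Choose the cluster point corresponding to a fixed distribution $p^*$ in the (closed) family. Along a subsequence $n_k$, the empirical type of $\boldsymbol{X}^{n_k}$ is then, with $P_{\theta^*}$-probability close to one, within $\eta$ of $p^*$.

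\textbf{Step 3 (codebook and covering).} Draw $2^{nR_{\textrm{s}}}$ codewords i.i.d.\ from the output marginal $\sum_x p^*(x)q(\hat x|x)$ (an ordinary product law). Encode by distortion-typical joint encoding with respect to $p^*\times\boldsymbol{Q}$.

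Classical covering-lemma estimates apply to any source sequence of type near $p^*$. Since $I_{p^*}(X;\hat X)<R_{\textrm{s}}-\delta$, for $n$ large the probability, over the codebook, of no jointly typical codeword is at most $\exp(-2^{n\delta/2})$. On the typical set the per-letter distortion is at most $E_{p^*\times Q}[d]+\eta\le D+\eta$.

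Averaging over codebooks and then derandomizing gives a deterministic code $(f_n,g_n)$ with
\[
E_{\theta^*}\bigl[d(\boldsymbol{X}^n,\hat{\boldsymbol{X}}^n)\bigr]\le D+\eta+d_{\max}\bigl(P_{\theta^*}(\text{atypical})+\text{covering failure}\bigr)\le D+\epsilon.
\]
The conclusion follows from $\mathcal{E}_{\boldsymbol{Q}}[\cdot]\le E_{\theta^*}[\cdot]$. Here $\boldsymbol{Q}$ is interpreted as the induced per-letter test channel, as the statement allows.

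\textbf{Main obstacle.} The delicate step is Step~2, and it has two parts.

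The first is turning the capacity-one statement of Theorem~\ref{thm:SLLN}, which holds only along a subsequence and for a cluster point, into a quantitative bound $P_{\theta^*}(\text{type near }p^*)\ge 1-\epsilon'$ at a single sufficiently large $n$. This is exactly why the result, like Theorem~\ref{thesource}, only asserts existence of some large $n$.

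To get this, I would use the weak compactness hypothesis. Pick $P_{\theta_k}$ with $P_{\theta_k}(A)\to V(A)=1$, where $A$ is the cluster-point event. Extract a limit $P_{\theta^*}$, so that $P_{\theta^*}(A)=1$. Then use that under a single measure a cluster-point event implies, for infinitely many $n$, the type lies near $p^*$ with probability bounded below. Handling the joint treatment of finitely many test functions $\phi$ requires applying Theorem~\ref{thm:SLLN} to the vector of indicators, or intersecting events, which I expect to need care.

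The second part is bounding $d$: assume $d_{\max}<\infty$, which holds for a discrete finite alphabet.
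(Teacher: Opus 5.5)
Your skeleton matches the paper's proof: a near-optimal $\boldsymbol{Q}$ with $R'$ strictly between the two rates, a random codebook, covering failure $e^{-2^{n(R_{\textrm{s}}-R')}}$, and the distortion split $D+\epsilon/3+d_{\max}\cdot(\text{failure})$. Reducing to one reference measure with classical covering and closing with $\mathcal{E}_{\boldsymbol{Q}}\le E_{\theta^*}$ is cleaner than the paper's capacity-defined sets $\mathcal{L}^{n},\mathcal{B}^{k}_{\epsilon},\mathcal{L}^*$. The gap is Step~2, and the mechanism you propose for it fails, for three reasons.
\emph{First}, weak compactness as defined in Theorem~\ref{thm:SLLN} only controls functions of finitely many coordinates. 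The cluster-point event $A$ is a tail $G_\delta$ event, so $P_{\theta_k}(A)\to1$ need not pass to the limit: Dirac masses at $0^k(01)^\infty$ converge to the Dirac mass at $0^\infty$, with $b=1/2$. This step is unnecessary anyway, since $V(A)=\sup_\theta P_\theta(A)=1$ directly gives some $\theta$ with $P_\theta(A)\ge1-\epsilon'$.
\emph{Second, and decisively}, set $B_n=\{\mathrm{type}(\boldsymbol{X}^n)\text{ is }\eta\text{-close to }p^*\}$. You only have $A\subseteq\limsup_n B_n$, and a single measure can give $\limsup_n B_n$ probability near one while $P(B_n)$ is small for every fixed $n$: the random times at which the averages visit the neighbourhood can be spread out, as in a typewriter sequence. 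A ``positive lower bound'' at infinitely many $n$ would not help either, because your final estimate needs $d_{\max}P_{\theta^*}(B_n^{c})$ small.
\emph{Third}, Theorem~\ref{thm:SLLN} is scalar. $V(A_1)=V(A_2)=1$ is compatible with $A_1\cap A_2=\emptyset$, and coordinatewise cluster points along different subsequences need not be a joint cluster point of the type vector.

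The fix is to reverse the quantifiers (fix $n$ first, then the measure). For the minimum criterion no strong law is needed at all. For $p=p_{\theta_0}$ we have $\mathbb{E}[\psi(\boldsymbol{x},X_n)]=\mathbb{E}[\psi(\boldsymbol{x},X_1)]\ge\sum_y p(y)\psi(\boldsymbol{x},y)$. The independence identity, monotonicity and induction on $n$ then give $\mathbb{E}[\psi(X_1,\dots,X_n)]\ge E_{p^{\otimes n}}[\psi]$ for every bounded $\psi$. With $\psi=-d(\cdot,g_n(f_n(\cdot)))$ this yields $\mathcal{E}_{\boldsymbol{Q}}[d(\boldsymbol{X}^n,\hat{\boldsymbol{X}}^n)]\le E_{p^{\otimes n}}[d(\boldsymbol{X}^n,\hat{\boldsymbol{X}}^n)]$. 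Take $p^*=p_{\theta_0}$. Your Step~1 already gives $I_{p^*}(X;\hat X)<R_{\textrm{s}}-\delta$ and $E_{p^*\times\boldsymbol{Q}}[d]\le D$. Step~3 is then classical rate--distortion achievability under $p^{*\otimes n}$, valid for all large $n$ and without weak compactness.
Alternatively, Peng's multidimensional weak law gives $V(B_n)\to1$. Hence for each large $n$ there is a $\theta_n^*$ with $P_{\theta_n^*}(B_n)\ge1-\epsilon'$, which is all Step~3 uses.
The paper does not derive this step either. It imports $\mathcal{M}^{(k)}_{\epsilon'}$ with $v((\mathcal{M}^{(k)}_{\epsilon'})^{c})\le\epsilon'/3$ from the proof of Theorem~\ref{thesource}, where a deterministic subsequence with $V\ge1-\epsilon$ is asserted from the strong law. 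That assertion is true, but it follows from the weak law, not from the cluster-point statement.
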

	
\textit{Proof}: See Appendix \ref{app5}. $\hfill\blacksquare$
	
Theorem \ref{thelim2} is called the nonlinear rate-distortion source coding theorem. It shows that $\hat{R}^{I}\left(D\right)$ is the cluster point of the lossy compression limit of uncertain-distribution information sources under the minimum expected distortion criterion. In practice, due to the uncertainty of distributions, it is challenging to ensure error-free transmission of messages. The classical rate-distortion source coding theorem provides the limit of source coding rate to control distortion when the distributions are deterministic. As far as the uncertainty of distributions is concerned, the proposed nonlinear rate-distortion source coding theorem is more suitable in real world circumstances.

\section{Application Examples}\label{sec5}
\subsection{Bernoulli Type Uncertain-Distribution Information Sources}\label{sec5.1}
In this subsection, an example of application is given for uncertain-distribution information sources by analogy with the Bernoulli type experiments having ambiguity, which demonstrates the importance of considering the uncertainty of probability models.
	
Consider binary message sequences composed of $0$ and $1$. Under the classical information theory, a transmitter sends $0$ or $1$ each time and the individual probabilities of sending $0$ and $1$ remain unchanged throughout the sending process. However, under the nonlinear information theory, the transmitter does not have fixed probabilities of sending $0$ and $1$. In other words, these probabilities may vary under the nonlinear expectation theory throughout the sending process. For example, let us consider a random variable $X$ with input alphabet $\mathcal{X}=\{0,1\}$. Suppose that the probability of the event $\{X=0\}$ is uncertain and it takes value in the interval $[\frac{1}{3},\frac{1}{2}]$. Therefore, the probability of the event $\{X=1\}$ is also uncertain and it takes value in the interval $[\frac{1}{2},\frac{2}{3}]$. Then, the uncertainty  probability distributions of $X$ are expressed as 
	\begin{equation}
	\{p_q(X)\}_{q\in [\frac{1}{3},\frac{1}{2}]}:=\{p_q=\{q,1-q\}|q \in [\frac{1}{3},\frac{1}{2}] \}.
	\end{equation}
	
Consider the message sequence of length $N$ denoted by $X_1,\cdots,X_N$, where $X_i$ and $X$ are identically distributed for $i=1,\cdots,N$. For this message sequence, the traditional probability theory and methods do not work effectively no matter which deterministic Bernoulli distribution is used for approximating. The confidence level of fitting any deterministic Bernoulli distribution to the samples of the above message sequence of length $1024$ are shown in Fig. \ref{fig_3}. The abscissas $p$ means that the Bernoulli distribution $\mathcal{B}(1,p)$ is used for approximating the samples of the message sequence. The ordinate represents the confidence level of this approximation. We can see that even the best fitting result can only achieve a confidence level of about $95\%$, which may be regarded unreliable in situations that require very high accuracy, such as the application scenarios of autonomous driving and precision manufacturing. In this context, it may be a better choice to consider message sequences directly under the sublinear expectation framework.
	
	\begin{figure}[h]
		\centering
		\includegraphics[scale=0.8]{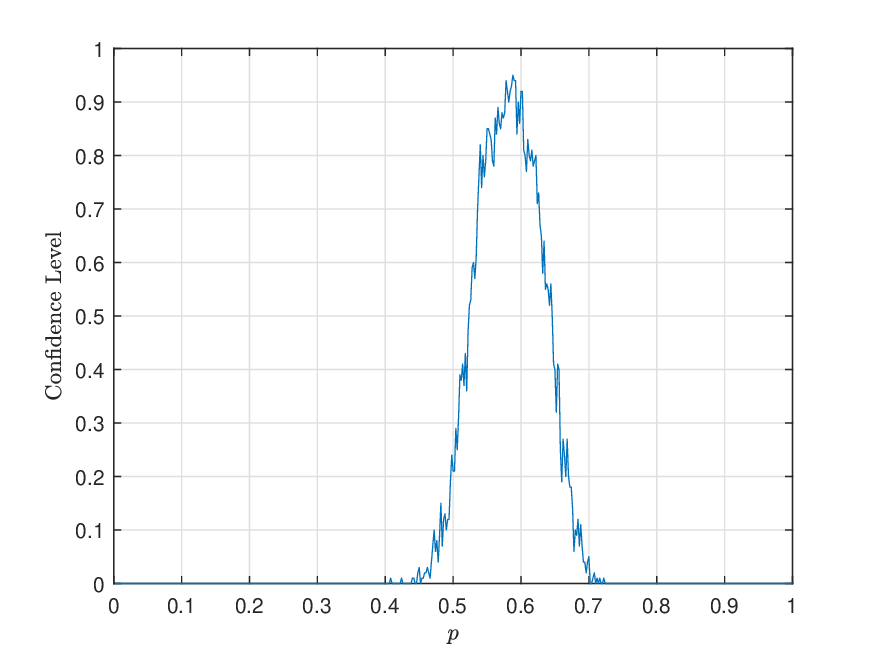}
		\captionsetup{font={scriptsize}}
		\caption{The confidence level of fitting deterministic Bernoulli distributions to the samples of a length-$1024$ message sequence with uncertain distributions $\{p_q(X)\}_{q\in [\frac{1}{3},\frac{1}{2}]}$.}
		\label{fig_3}
	\end{figure}
	
	Consider the above random variable $X$ and the probability distributions family $\{p_q(X)\}_{q\in [\frac{1}{3},\frac{1}{2}]}$. The sublinear expectation is defined as $\mathbb{E}[X]:=\underset{p_q\in \{p_q(X)\}_{q\in [\frac{1}{3},\frac{1}{2}]}}{\sup}E_{p_q}[X]$, and the pair of capacities $(V,v)$ is defined as:
	\begin{equation}
	V(A):=\sup_{p_q\in \{p_q(X)\}_{q\in [\frac{1}{3},\frac{1}{2}]}}p_q(A), 
	\end{equation}
	\begin{equation}
	\quad v(A):=\inf_{p_q\in \{p_q(X)\}_{q\in [\frac{1}{3},\frac{1}{2}]}}p_q(A).
	\end{equation}
	It is easy to verify that $\mathbb{E}[X_i]=\frac{2}{3}$ and $\hat{H}(X)=1$. For any $p_q\in \{p_q(X)\}_{q\in [\frac{1}{3},\frac{1}{2}]}$, we have
	\begin{equation}
	\overline{\mu} = \mathbb{E}\left[\log \frac{1}{p_q(X)}\right] = \frac{1}{2}\left[\log \frac{1}{q}+\log \frac{1}{1-q}\right],
	\end{equation}
	\begin{equation}
	\underline{\mu} = -\mathbb{E}\left[-\log \frac{1}{p_q(X)}\right] = \frac{1}{3}\log \frac{1}{q}+\frac{2}{3}\log \frac{1}{1-q}.
	\end{equation}
	Then, based on the strong law of large numbers under sublinear expectations, for any $b\in [ \underline{\mu},\overline{\mu} ]$, we have
	\begin{equation}
	\limsup_{n\rightarrow \infty}V\left( \left\{ |\frac{-\log p_q(X_1)\cdot \ldots \cdot p_q(X_n)}{n}-b| < \epsilon \right\} \right)=1.
	\end{equation}
	
	More generally, consider an uncertain-distribution information source $X$ with two input values $\{0,1\}$. Suppose that the probability of the event $\{X=0\}$ is uncertain and takes value in the interval $\Theta=[\max(p-\epsilon , 0),\min(p+\epsilon,1)]$ and $\textrm{Pr}(X=1)=1-\textrm{Pr}(X=0)$. Then the uncertain probability distributions of $X$ are represented by a family of probability distributions $\{p_{q}(X)\}_{q\in \Theta}$, which expressed as
	\begin{equation}
	\{p_{q}(X)\}_{q\in \Theta}=\{p_q=\{q,1-q\}|q\in \Theta \}.
	\end{equation}
	
For the uncertain-distribution information source $X$, we visualize the results of Theorem \ref{thesource} in Fig. \ref{fig_4}. The dash-dotted lines represent the cluster point of the source coding rate of the uncertain-distribution source with $\epsilon=0.02$ and $\epsilon=0.03$ under the minimum error probability criterion, and are calculated by $\underset{\theta \in \Theta}{\inf} \sum _{x}p_{\theta }(x)\log \frac{1}{V(x)}$. The dashed lines represent the nonlinear information entropy of the uncertain-distribution sources with $\epsilon=0.02$ and $\epsilon=0.03$, and are calculated by $\underset{\theta \in \Theta }{\sup } \sum _{x}p_{\theta }(x)\log \frac{1}{p_{\theta }(x)}$. For the convenience of comparison, we also draw the classical Shannon entropy of the information source without distribution uncertainty, which is corresponding to the case $\epsilon=0$ and represented by the solid line in Fig. \ref{fig_4}. We observe that the cluster point of the source coding rate of the uncertain-distribution source that we found is notably smaller than the classical Shannon entropy of the information source without distribution uncertainty. According to Theorem \ref{thesource}, this observation means that there is always a sufficiently large code length that allows the compression performance of source coding in nonlinear information theory to achieve or improve upon that of source coding without distribution uncertainty (i.e., a smaller rate corresponds to better compression under the minimum error probability criterion).
	
	\begin{figure}[h]
		\centering
		\includegraphics[scale=0.8]{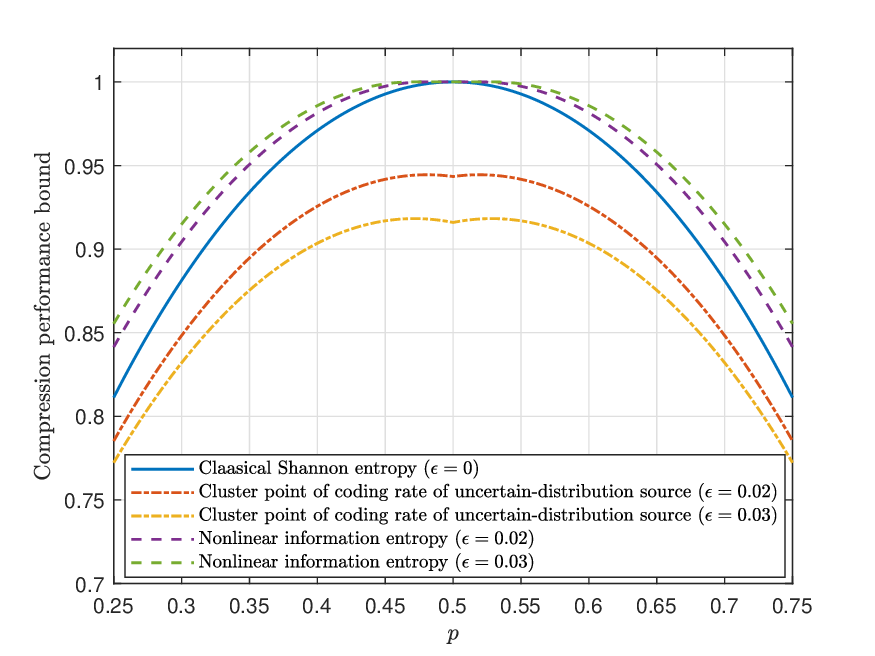}
		\captionsetup{font={scriptsize}}
		\caption{Comparison of the compression performance bound of an uncertain-distribution information source and of the corresponding information source without distribution uncertainty.}
		\label{fig_4}
	\end{figure}
	
\subsection{Uncertain-Distribution Binary Symmetric Channels}\label{sec5.2}
In this subsection, we provide an example of uncertain-distribution channel, in order to illustrate how the family of transition probability matrices of an uncertain-distribution channel arises and visualize the result of Theorem \ref{thechannel1}. In the classical information theory, a typical and commonly used discrete channel model is the binary symmetric channel (BSC), and its transition probability matrix is represented as
\begin{equation}\label{BSC}
	Q=\left[
	\begin{matrix}
	1-p & p \\
	p & 1-p
	\end{matrix}
	\right],
\end{equation}
where $p$ denotes the probability of state transition. This model assumes that the channel introduces errors with a fixed probability $p$, which is independent of the input signal. However, in practical communication systems, the probability models may not always be precisely known due to various factors such as noise, interference, and imperfections in the channels or systems. Therefore, it is necessary to consider the uncertainty of probability models.
		
For a BSC characterized by (\ref{BSC}), an example\footnote{This is one possible realization of a BSC.} can be given as
\begin{equation}\label{eq:BSC}
	Y=X \oplus Z,
\end{equation}
where $X$ is the input signal, $Y$ is the output signal, and $Z$ is the noise, all taking binary values from $\{0,1\}$. The noise $Z$ is modeled as a Bernoulli random variable with parameter $p$, meaning that $\textrm{Pr}(Z=1)=p$ and $\textrm{Pr}(Z=0)=1-p$. This model implies that the output $Y$ is the modulo-2 sum of the input $X$ and the noise $Z$. For instance, if $X=0$ and $Z=1$, then $Y=1$; if $X=1$ and $Z=1$, then $Y=0$. This representation highlights that the distribution of noise $Z$ is precisely known.
		
However, as we mentioned in Section \ref{sec1} about the research motivation of nonlinear information theory, the exact distribution of $Z$ may not exist in practical applications\cite{spiegelhalter2024does}. In this case, we can assume that the probability of $Z=1$ lies within an interval $\Theta=[\max(p-\epsilon , 0),\min(p+\epsilon,1)]$, where $\epsilon$ represents the degree of uncertainty of the distribution of $Z$, and this uncertainty naturally leads to an uncertainty of the transition probability matrix of the channel. Given the uncertainty of the distributions of $Z$, the transition probability matrix of the channel model $Y = X \oplus Z$ is no longer fixed but varies within a set of possible matrices. Define $1-\Theta$ as the set $\{1-q|q\in \Theta\}$, then the uncertainty of transition probability matrix is expressed as
\begin{equation}\label{nonlinear BSC}
	\left\{ \left[
	\begin{matrix}
	1-q & q \\
	q & 1-q
	\end{matrix}
	\right] \Bigg| q\in \Theta \right\},
\end{equation}
and the corresponding uncertain-distribution BSC is shown in Fig. \ref{fig_5}.
		
\begin{figure}[h]
	\centering
	\includegraphics[scale=0.6]{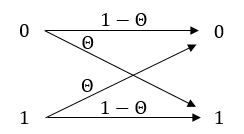}
	\captionsetup{font={scriptsize}}
	\caption{An example of uncertain-distribution BSC, where $\Theta$ is the interval $[\max(p-\epsilon , 0),\min(p+\epsilon,1)]$. }
	\label{fig_5}
\end{figure}
		
For an uncertain-distribution channel, to obtain the family of transition probability matrices, the $\varphi$-max-mean algorithm \cite{peng2017theory} within the nonlinear expectation theory can be employed to estimate the required parameters. We still take the aforementioned uncertain-distribution BSC as an example. Then the parameters we need to estimate for obtaining the family of transition probability matrices are $p$ and $\epsilon$, which can be estimated based on
\begin{equation}
	\min(p+\epsilon,1) = \mathbb{E}[Z], \quad \max(p-\epsilon , 0)=-\mathbb{E}[-Z].
\end{equation}
More specifically, upon obtaining a finite sample set $\{z_i\}_{i=1}^{n\times m}$ of random variable $Z$, we can calculate
\begin{equation}
	\overline{\mathbb{E}}[Z]=\max_{1\leq i \leq m}\frac{1}{n}\sum_{j=1}^{n}z_{n(i-1)+j}
\end{equation}
and 
\begin{equation}
	\underline{\mathbb{E}}[Z]=\min_{1\leq i \leq m}\frac{1}{n}\sum_{j=1}^{n}z_{n(i-1)+j}
\end{equation}
to get the estimates of $\mathbb{E}[Z]$ and $-\mathbb{E}[-Z]$. According to \cite{jin2021optimal}, these estimates are also unbiased\footnote{In a sublinear expectation space $(\Omega,\mathcal{H},\mathbb{E})$, a statistic $T_n(X_1,...,X_n)$ is called an unbiased estimator of $\overline{\mu}$ if $\mathbb{E}[T_n(X_1,...,X_n)]=\overline{\mu}$.}. Therefore, similar to classical information theory, we can still use a relatively short training sequence to learn the parameters of a given channel model.
		
We plot the upper limit of the achievable coding rate of the uncertain-distribution BSC that is characterized by (\ref{nonlinear BSC}) with $\epsilon=0.02$ and $0.04$ under the minimum error probability criterion, by the dashed lines in Fig. \ref{fig_6}. Note that the upper limit is denoted as $\overline{C}$, which is given by jointly considering Theorem \ref{thechannel} and Theorem \ref{thechannel1}. For the convenience of comparison, we also draw the Shannon capacity of the classical BSC that is characterized by (\ref{BSC}), with the solid line in Fig. \ref{fig_6}. We observe that under the given degrees of distribution uncertainty, the upper limit $\overline{C}$ is substantially larger than the Shannon capacity of the classical BSC, which assumes no distribution uncertainty. According to Theorem \ref{thechannel1}, this observation indicates that there is always a sufficiently large code length that allows the performance of channel coding in nonlinear information theory to attain or improve upon that of channel coding without distribution uncertainty (i.e., a larger rate corresponds to higher transmission efficiency under the minimum error probability criterion).
	
\begin{figure}[h]
	\centering
	\includegraphics[scale=0.8]{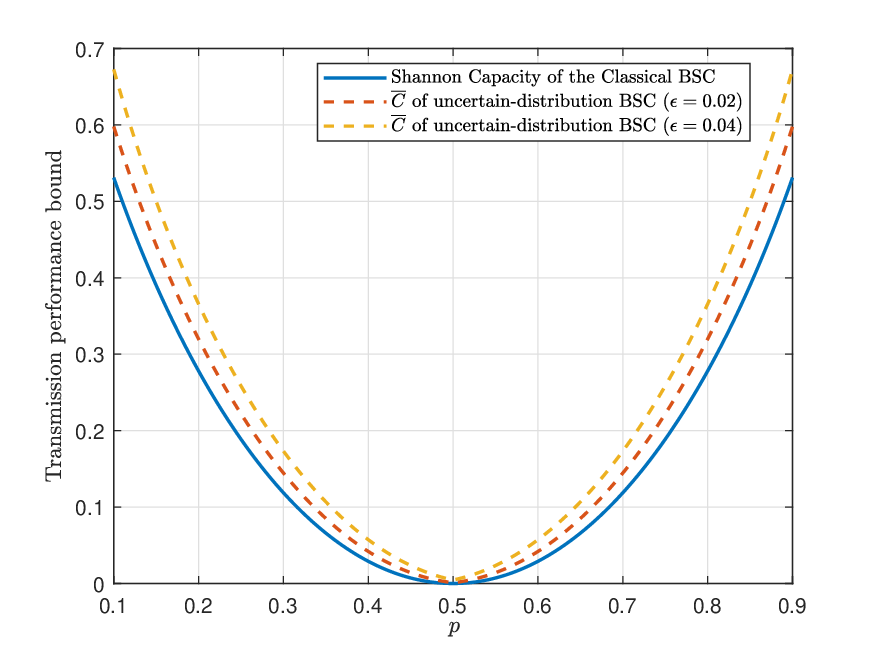}
	\captionsetup{font={scriptsize}}
	\caption{Comparison of the transmission performance bound of uncertain-distribution BSC and classical BSC.}
	\label{fig_6}
\end{figure}

\subsection{Distinction Between Uncertain-Distribution Channel and A Class of Channels}\label{sec5.3}

In the context of classical information theory, the concept of ``a class of channels''\footnote{Both compound channels and arbitrarily varying channels fall within the broader framework of communication over a class of channels. Here, we use the term a class of channels as a generic reference.} has been proposed and researched (for example in \cite{blackwell1959capacity,blackwell1960capacities,ahlswede1969correlated,ahlswede1970capacity,ahlswede1970note,ahlswede1973channels,ahlswede1978elimination,ahlswede1980method} and so on). In this paper, we propose a concept ``uncertain-distribution channel''. Both concepts utilize a family of transition probability matrices to characterize the channel behavior, but they differ fundamentally in how these distributions are applied and interpreted. In this subsection, we aim to clarify the distinctions between the two concepts through theoretical definitions and an illustrative example.
	
A class of channels is described by a family of transition probability matrices. However, the behavior of a class of channels conforms to one of the transition probability matrices within this family at any given observation slot (e.g., time slot). This implies a level of resolvability where the channel's stochastic behavior can be pinpointed to a specific transition probability matrix within the family under certain conditions. As a toy example, consider the simple communication channel $Y=X\oplus Z$, where $X$ is the input signal taking values in $\{0,1\}$, and $Z$ is a noise random variable also taking values in $\{0,1\}$. Then, the family of transition probability matrices is derived from specific observations of $Z$. Suppose we observe the value of $Z$ at a given time slot. If $Z=0$, the channel behavior corresponds to one specific transition probability matrix $\begin{pmatrix}
	1 & 0 \\
	0 & 1
\end{pmatrix}$, and if $Z=1$, it corresponds to $\begin{pmatrix}
	0 & 1 \\
	1 & 0
\end{pmatrix}$. This channel can also be viewed as a channel model for a jamming scenario in which the jammer chooses the states.
	
An uncertain-distribution channel is also characterized by a family of transition probability matrices. However, similar to the explanation presented in Appendix \ref{app1}, the actual behavior of the channel does not conform to any single transition probability matrix within this family. Instead, it represents a general concept where the channel's behavior is inherently uncertain and cannot be precisely captured by a specific transition probability matrix. For example, we still consider $Y=X\oplus Z$, and in this scenario, the distribution of $Z$ is uncertain and cannot be precisely specified. Based on the nonlinear expectation theory, the corresponding uncertain probability distributions of $Z$ are characterized as a family of probability distributions. Therefore, the transition probability matrix of the channel $Y=X\oplus Z$ is also uncertain. This uncertainty reflects the inherent ambiguity in the stochastic behavior of the channel.
	
The distinction between the two concepts mainly lies in the resolvability of their stochastic behaviors. ``Uncertain-distribution channel'' embodies an inherent and persistent uncertainty, whereas ``a class of channels'' implies conditional determinability. This distinction is critical in modeling communication systems, as it reflects the fundamental differences in how uncertainty is handled in channels that are randomly dynamic across a wide range of physical propagation environments, such as in high mobility cross-medium communications encountered in space-air-ground-sea integrated wireless networks \cite{10679173,11006056,dong2025uplink}.

\section{Conclusion}\label{sec6}
In this paper, we have established a nonlinear information theory under the framework of nonlinear expectation theory. We consider some fundamental problems on both uncertain-distribution information sources and uncertain-distribution communication channels, whose defined on sublinear expectation space. The concepts of nonlinear information entropy, nonlinear joint entropy, nonlinear conditional entropy and nonlinear mutual information are newly defined, and several important properties such as the chain rule and the Fano inequality are generalized. Based on the strong law of large numbers under sublinear expectations, we establish the nonlinear source coding theorem, nonlinear channel coding theorem and nonlinear rate-distortion source coding theorem. These results characterize fundamental performance limits for uncertain-distribution sources and channels under different criteria. The results of this paper constitute a generalization from classical Shannon information theory based on deterministic probability models to nonlinear information theory based on uncertain probability models. When the distributions of source messages and the transition probability matrices of communication channels become deterministic, the conclusions in this study degenerate into the conclusions in classical information theory. This work represents a potential paradigm shift in information theory, and provides a theoretical foundation for further investigations of information-theoretic problems under distributional uncertainty.

{\appendices

\section{Why Use Nonlinear Expectation Theory--An Intuitive Explanation}\label{app1}

In this appendix, we explain the necessity of employing nonlinear expectation theory to characterize the uncertainty of probability distributions. To this end, we must first distinguish between two cases that characterize random variables with a distribution family. These cases reflect different levels of knowledge and assumptions about the underlying probability distributions, which influence the applicability and effectiveness of traditional probabilistic methods. In the first case, if the classical probability theory framework is used, a random variable is always assumed to follow a single specific probability distribution. However, if the single specific distribution is unknown, the random variable is usually assumed to follow a distribution selected from a family of candidate distributions, with the selection depending on the value of specific control variables. In the second case, i.e., the case that employs the nonlinear expectation theory, we assume that the random variable does not follow any specific distribution within a family of distributions. Instead, the entire family of distributions is used in a general manner to describe the random variable. These two cases are fundamentally different. Existing research works based on the classical probability theory correspond to the first case, where the probability model (i.e., the probability space) is deterministic. However, in many applications the probability model itself can also be uncertain, which leads to the emergence of the second case.

In the first case, we acknowledge the existence of a particular distribution characterizing the random variable, but the specific form or parameters of this distribution are uncertain. Consequently, the random variable is described by assuming that there exists a family of candidate distributions, to account for all possible behaviors. For example, consider a scenario where the random variable is believed to follow a normal distribution, but the precise values of the mean and variance are unknown. Then, we can consider a family of normal distributions with varying means and variances under different observation slots, and each observation slot corresponds to a single specific normal distribution. This approach allows us to capture the range of possible behaviors of the random variable while acknowledging the uncertainty of the distribution's parameters to some degree. However, it is worth noting that at each observation slot the uncertain distribution is degraded to a deterministic distribution. Traditional probabilistic methods, such as maximum likelihood estimation and Bayesian inference, are commonly employed to infer the unknown parameters from observed data. These methods assume that the underlying probability model (i.e., the probability space) is deterministic but unknown, and they aim to identify the most likely distribution that fits the observed data, within the family.

To elaborate a little further, let us consider a practical application involving a dataset $\{x_1,\cdots,x_n\}$. Even when we do not know the exact distribution that this dataset follows, we can still rely on the law of large numbers in probability theory to estimate the expectation of the underlying distribution. Specifically, we compute the sample mean $\sum_{i=1}^{n}x_i$ as an estimate of the distribution's expectation. This common practice is based on the implicit assumption that the underlying probability model is deterministic. In other words, we assume that the data are generated from a single, albeit unknown, distribution, and the sample mean provides a consistent estimate of the true expectation as the sample size grows. This approach aligns well with traditional probabilistic methods, which assume a deterministic but unknown probability model and aim to infer its parameters from observed data.

In contrast, the second case represents a more general and abstract framework. Here, the random variable does not follow any specific distribution within the family. Instead, the family of distributions is used as a conceptual tool to describe the random variable's behavior without assuming that it follows any specific distribution. This approach is particularly useful when dealing with complex systems or when the exact nature of the random variable is not well understood. For instance, in highly complex or randomly dynamic systems, such as those encountered in financial markets, biological networks, communication networks, or quantum information processing, the behavior of random variables may be influenced by numerous factors that are difficult to model explicitly. In these situations, using a family of distributions to describe the unpredictable nature of random variables provides a more general framework. It allows for a broader range of possibilities and acknowledges the inherent uncertainty of the system model, which may not be captured by any single distribution within the family.

To further demonstrate the uncertainty of distribution of a random variable, one can randomly select a set of actual data samples to perform the following calculations. Here, we randomly choose 10000 samples, denoted by $\{x_i\}_{i=1}^{10000}$, from a random variable $X$ whose probability distribution is uncertain and characterized by $\{p_q(X)\}_{q\in [\frac{1}{3},\frac{1}{2}]}:=\{p_q=\{q,1-q\}|q \in [\frac{1}{3},\frac{1}{2}] \}.$\footnote{This random variable is also described in Section \ref{sec5.1}.} For $n=5000,5001,\cdots,10000$, we calculate
\begin{equation}\label{eq:upper_means}
\overline{\mathbb{E}}_n:=\max_{1\leq i \leq 10}\frac{1}{500}\sum_{j=1}^{500}x_{n-5000+(i-1)\times 500+j}
\end{equation}
and 
\begin{equation}\label{eq:lower_means}
\underline{\mathbb{E}}_n:=\min_{1\leq i \leq 10}\frac{1}{500}\sum_{j=1}^{500}x_{n-5000+(i-1)\times 500+j}. 
\end{equation}
The statistics $\overline{\mathbb{E}}_n$ and $\underline{\mathbb{E}}_n$ reflect the upper means and the lower means of samples, respectively. The values of $\overline{\mathbb{E}}_n$ and $\underline{\mathbb{E}}_n$ are shown in Fig. \ref{fig_7}.

\begin{figure}[h]
	\centering
	\includegraphics[scale=0.8]{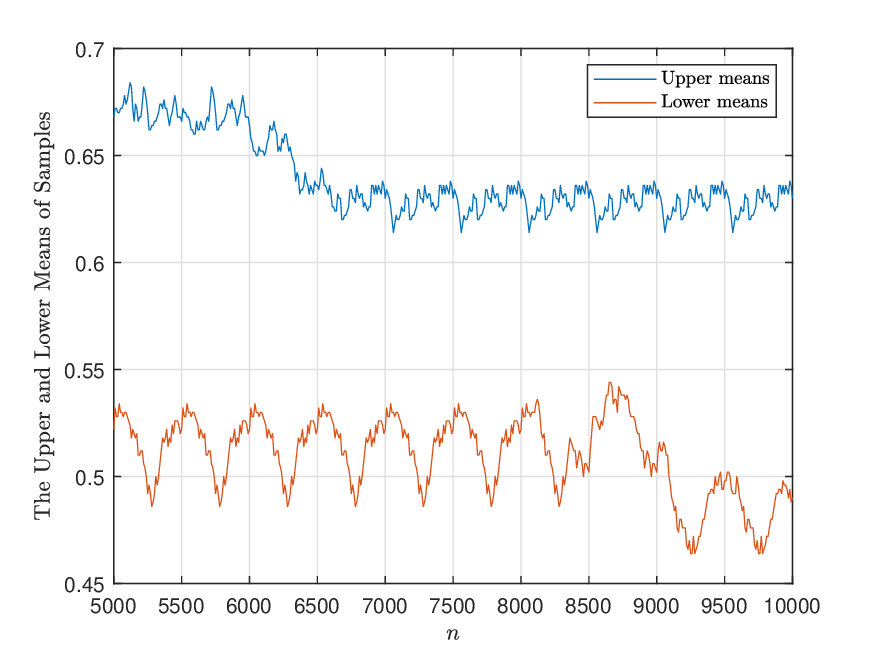}
	\captionsetup{font={scriptsize}}
	\caption{The upper means and lower means of samples of the random variable $X$ whose distribution is uncertain.}
	\label{fig_7}
\end{figure}

The statistic $\overline{\mathbb{E}}_n$ represents the maximum average value of 500 consecutive samples within a specific sliding window of 5000 samples (from $x_{n-5000}$ to $x_n$). This window is divided into 10 non-overlapping sub-windows, each containing 500 samples. Eq. \eqref{eq:upper_means} calculates the average of each sub-window and then takes the maximum of these averages.

The statistic $\underline{\mathbb{E}}_n$ represents the minimum average value of 500 consecutive samples within the same sliding window of 5000 samples (from $x_{n-5000}$ to $x_n$). Similar to $\overline{\mathbb{E}}_n$, the window is divided into 10 sub-windows, and Eq. \eqref{eq:lower_means} calculates the average of each sub-window and then takes the minimum of these averages.

We can observe in Fig. \ref{fig_7} that there is a significant gap\footnote{The gap remains visible even with an increased sample size. We use a sample size of 10000 merely as an example.} between $\overline{\mathbb{E}}_n$ and $\underline{\mathbb{E}}_n$. The fact that the maximum and minimum averages within this window differ substantially suggests that the behavior of $X$ is not consistent across different sub-windows. By contrast, let us consider a sequence of i.i.d. (independent and identically distributed in classical probability theory) random variables that follow a deterministic distribution, and also calculate \eqref{eq:upper_means} and \eqref{eq:lower_means}. In such a case, the gap between the maximum and minimum averages would be small, as observed in Fig. \ref{fig_8},  especially as the window size becomes large. This is because the law of large numbers dictates that the sample means corresponding to the sliding window of 5000 samples shall converge to the true mean of the distribution, as the sample size increases. As a result, the sample means become less variable and more representative of the true mean of the distribution.

\begin{figure}[h]
	\centering
	\includegraphics[scale=0.8]{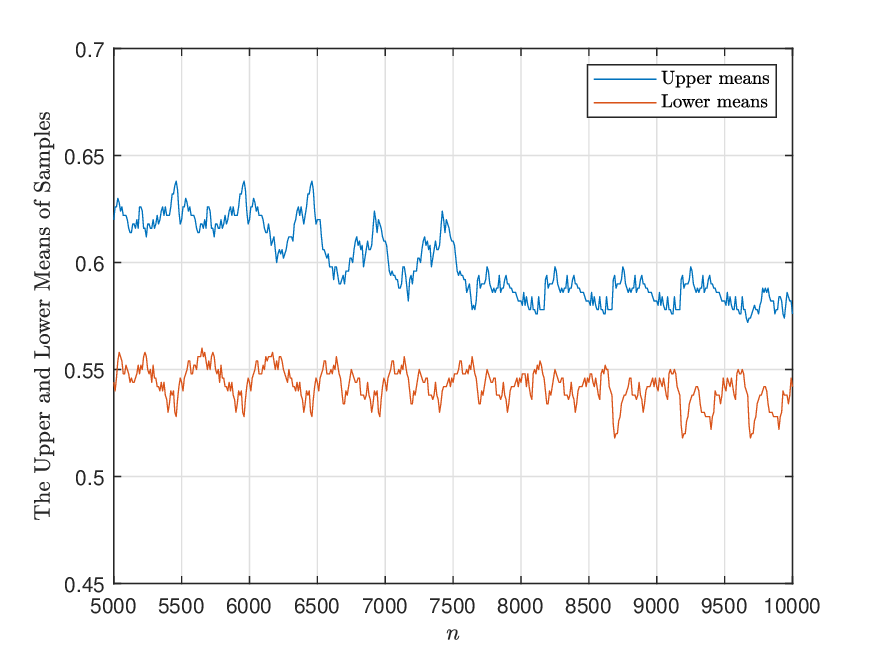}
	\captionsetup{font={scriptsize}}
	\caption{The upper means and lower means of samples of the random variable $X$ with a known Bernoulli distribution $\mathcal{B}(1,0.58)$.}
	\label{fig_8}
\end{figure}

The large gap observed in Fig. \ref{fig_7} suggests that the random variable may not follow a single distribution. Instead, it could exhibit a behavior that is more complex or randomly dynamic, possibly influenced by multiple factors, such as randomly changing conditions over time. This insight is consistent with the idea that the probability model itself may be uncertain or unpredictable, leading to the emergence of the above mentioned second case where the random variable does not follow any specific distribution within the family.

The above mentioned second case highlights the limitations of traditional probabilistic approaches when faced with more complex and uncertain system models. In such scenarios, the probability model itself may be uncertain or ill-defined, necessitating a more general and flexible framework. This is where the nonlinear expectation theory comes into play. By relaxing the assumption of a deterministic probability model, the nonlinear expectation theory provides a powerful tool for modeling and analyzing systems with intrinsic model uncertainty. It allows for a holistic consideration of multiple possible distributions, thus valuable for  capturing the ambiguity and unpredictability inherent in complex random systems.

In summary, while traditional probabilistic methods have proven effective in analyzing the above mentioned first case, where the probability model is deterministic but unknown, they fall short in addressing the complexities and uncertainties associated with the second case. The increasing prevalence of complex random systems and the need to model uncertainty more accurately have led to the growing importance of the nonlinear expectation theory. By providing a more general and flexible framework, this theory enables researchers to better understand and handle uncertainty in modern probabilistic modeling.

\section{PROOF OF THEOREM \ref{theentropy}}\label{app2}
		
For the discrete random variable $X$ with $n$ possible states $\{x_1, \cdots, x_i, \cdots, x_n\}$ defined on a sublinear expectation space $(\Omega,\mathcal{H},\mathbb{E})$, the corresponding uncertain probability distributions of $X$ are $\left\{p_{\theta }\left(X\right)\right\}_{\theta \in \Theta }$. If there is a $\theta_{0}\in\Theta$ with $p_{\theta_0}(x_i)=p_i$ such that $p_{1},\cdots, p_{n}$ are all rational numbers, suppose $p_{i}=\frac{k_{i}}{\sum _{i=1}^{n}k_{i}}$, where $k_{1},\cdots, k_{n}$ are all positive integers, and $i=1, 2, \cdots, n$. In addition, suppose one wants to choose a state from $\sum _{i=1}^{n}k_{i}$ states. Similar to Shannon’s idea, we can still divide $\sum _{i=1}^{n}k_{i}$ states into $n$ groups, with $k_{1},\cdots, k_{n}$ states in each group, respectively. For convenience, we also use $k_i$ to denote a specific group. We first randomly choose a group $k_i$ from the groups $k_{1}, \cdots, k_{n}$, and then randomly choose a state from group $k_{i}$. Different from Shannon's methodology, our random choice of a group is assumed to be made on sublinear expectation space. Thus, the choice of a single group from the groups $k_{1}, \cdots, k_{n}$ can be described as the random variable $X$. That is, this choice has the same distribution uncertainty with $X$. We use random variable $Y_i$ without distribution uncertainty to describe the state choice from group $k_i$, and the distribution of $Y_i$ is characterized by the uniform distribution of $q=\frac{1}{k_{i}}$. Therefore, on the whole, the states are chosen from $\sum _{i=1}^{n}k_{i}$, where the uniform distribution $\frac{1}{\sum _{i=1}^{n}k_{i}}$ is only a member of the uncertain distribution family for state choice. Thus, we can obtain from Assumptions 2) and 4) in Section \ref{sec3} that
		
		\begin{equation}
		\log \sum _{i=1}^{n}k_{i} \leq \hat{H}\left(X\right)+\sup_{\theta \in \Theta } \sum_{i=1}^{n} p_\theta (x_i)\hat{H}(Y_i)=\hat{H}\left(X\right)+\log \sum _{i=1}^{n}k_{i}+\mathbb{E}[\log p_{\theta_{0}}(X)].
		\end{equation}
		Hence,
		\begin{equation}
		\hat{H}\left(X\right)\geq -\mathbb{E}[\log p_{\theta_{0}}(X)] \geq \inf_{\theta \in \Theta } \sum _{x}p_{\theta }\left(x\right)\log \frac{1}{p_{\theta }(x)}.
		\end{equation}
		
		If $\hat{H}\left(X\right)<\sup_{\theta \in \Theta } \sum _{x}p_{\theta }(x)\log \frac{1}{p_{\theta }(x)}$, then there exists some $\epsilon>0$ such that
		\begin{equation}
		\hat{H}\left(X\right)+\epsilon<\underset{\theta \in \Theta }{\sup } \sum _{x}p_{\theta }(x)\log \frac{1}{p_{\theta }(x)}.
		\end{equation}
		Therefore, there exists a subset $\left\{p_{\theta }\right\}_{\theta \in \Theta^* }$ contained in $\left\{p_{\theta }\left(X\right)\right\}_{\theta \in \Theta }$ such that
		\begin{equation}
		\sum _{x}p_{\theta}\left(x\right)\log \frac{1}{p_{\theta}\left(x\right)}>\hat{H}\left(X\right)+\epsilon,\quad \forall \theta \in \Theta^*.
		\end{equation}
		
		Let $Z$ be a random variable for which the uncertain probability distributions are characterized by $\left\{p_{\theta }\right\}_{\theta \in \Theta^* }$. Then, according to Assumption 3), we have
		\begin{equation}
		\hat{H}\left(X\right)\geq \hat{H}\left(Z\right)\geq \inf_{\theta \in \Theta^*} \sum _{x}p_{\theta}\left(x\right)\log \frac{1}{p_{\theta}\left(x\right)} \geq \hat{H}\left(X\right)+\epsilon>\hat{H}\left(X\right), 
		\end{equation}
		which itself yields a contradiction.
		
		Consequently, $\hat{H}\left(X\right)=\underset{\theta \in \Theta }{\sup } \sum _{x}p_{\theta }(x)\log \frac{1}{p_{\theta }(x)}$.
		
		If some probability distributions in $\left\{p_{\theta }\left(x\right)\right\}_{\theta \in \Theta }$ are incommensurable numbers, they can be approximated by rationals. Then, the same expression must hold due to our continuity assumption, namely Assumption 1).
		
		This concludes the proof of Theorem \ref{theentropy}. $\hfill\blacksquare$

\section{PROOF OF THEOREM \ref{thesource}}\label{app3}
		
Since $X_{1},X_{2},\cdots,X_{n},\cdots$ is a discrete IID source sequence, for any $i=2,...,n$, the uncertain probability distributions of $X_i$ can also be denoted as $\{p_{\theta}(X_1)\}_{\theta \in \Theta}$.
			
For the statement 1), let $Y_i:=\log \frac{1}{V(X_i)} \in \mathcal{H}$ and $S_n:= \sum_{i=1}^{n}Y_i $. For any $\mu \in [-\mathbb{E}[-Y_i],\mathbb{E}[Y_i]]$ and any $\epsilon>0$, based on the strong law of large numbers under sublinear expectations, there exists a monotonically increasing sequence $\{n_k\}$ such that
\begin{equation}
	V\left(\left\{\omega \Big| |\frac{S_{n_k}(\omega)}{n_k}-\mu| \leq \epsilon \right\}\right) \geq 1-\epsilon.
\end{equation}
Let $\mathcal{M}_{\epsilon}^{(k)}:=\{\boldsymbol{x}^{n_k} \big| |\frac{-\log V(\boldsymbol{x}^{n_k})}{n_k}-\mu| \leq \epsilon \}$. Then, we have $V\left(\mathcal{M}_{\epsilon}^{(k)}\right) \geq 1-\epsilon$, and for any $\boldsymbol{x}^{n_k} \in \mathcal{M}_{\epsilon}^{(k)}$, we obtain
\begin{equation}
	2^{-n_k (\mu+\epsilon)} \leq V(\boldsymbol{x}^{n_k}) \leq  2^{-n_k (\mu-\epsilon)}.
\end{equation}
			
According to \cite{chen2016strong}, we know that for any $\boldsymbol{x}^{n_k},\boldsymbol{\tilde{x}}^{n_k} \in \mathcal{M}_{\epsilon}^{(k)}$, there is $V(\boldsymbol{x}^{n_k},\boldsymbol{\tilde{x}}^{n_k})=V(\boldsymbol{x}^{n_k})V(\boldsymbol{\tilde{x}}^{n_k})$. Therefore, we can get
\begin{equation}
	1-\epsilon \leq V\left(\mathcal{M}_{\epsilon}^{(k)}\right) \leq \lVert \mathcal{M}_{\epsilon}^{(k)}\rVert 2^{-n_k (\mu-\epsilon)},
\end{equation}
\begin{equation}
	\lVert \mathcal{M}_{\epsilon}^{(k)} \rVert 2^{-n_k (\mu+\epsilon)} \leq V\left(\mathcal{M}_{\epsilon}^{(k)}\right) \leq 1.
\end{equation}
			
Consequently, we have
\begin{equation}
	(1-\epsilon)2^{n_k(\mu-\epsilon)} \leq  \lVert \mathcal{M}_{\epsilon}^{(k)}\rVert \leq 2^{n_k(\mu+\epsilon)}.
\end{equation}

As a result, for any $\delta >0$, there is a sufficiently large $k$ and the source encoding function assigns a unique index to each sequence in $\mathcal{M}_{\epsilon}^{(k)}$. This source encoding function is a one-to-one mapping, whose outputs are easily decodable. Therefore, this nonlinear source code satisfies
\begin{equation}
	\mathcal{E}\left[I_{\{\hat{\boldsymbol{X}}^{n_k}\neq \boldsymbol{X}^{n_k}\}}\right] = v\left((\mathcal{M}_{\epsilon}^{(k)})^{c}\right)=1-V\left(\mathcal{M}_{\epsilon}^{(k)}\right) < \epsilon,
\end{equation}
and the source coding rate satisfies
\begin{equation}
	R_{\textrm{s}}=\frac{\log \lVert \mathcal{M}_{\epsilon}^{(k)} \rVert}{n_k} \geq \frac{\log(1-\epsilon)}{n_k}+\mu -\epsilon \geq  \underset{\theta \in \Theta}{\inf} \sum _{x}p_{\theta }(x)\log \frac{1}{V(x)} +\epsilon'.
\end{equation}
Based on the arbitrariness of $\mu$ and $\epsilon$, the statement 1) is proved.
			
For the statement 2), let $\boldsymbol{x}^{n}=(x_1,x_2,\cdots,x_{n})$ and $\mathcal{W}_{\epsilon}^{(n)}$ be the set
\begin{equation}
	\underset{\theta\in\Theta}{\cup} \left\{\boldsymbol{x}^{n} \Big| |\frac{-\log [p_{\theta}(\boldsymbol{x}^{n})]}{n}-H_{p_{\theta}}(X_1)| < \epsilon \right\},
\end{equation}
where $H_{p_{\theta}}(X_1)=\underset{x}{\sum}p_{\theta }(x)\log \frac{1}{p_{\theta }(x)}$. Then, for any $\theta\in \Theta$, there is 
\begin{equation}\label{eq:thm3-1}
	\lVert \mathcal{W}_{\epsilon}^{(n)} \rVert \geq (1-\epsilon)2^{n(H_{p_{\theta}}(X_1)-\epsilon)}.
\end{equation}
			
For any $\boldsymbol{x}^{n}\in \mathcal{W}_{\epsilon}^{(n)}$, there exists $\theta\in \Theta$ such that $\boldsymbol{x}^{n}\in \{\boldsymbol{x}^{n} \big| |\frac{-\log [p_{\theta}(\boldsymbol{x}^{n})]}{n}-H_{p_{\theta}}(X_1)| < \epsilon \}$. Therefore, we have
\begin{equation}
	2^{-n(\hat{H}(X_1)+\epsilon)} \leq p_{\theta}(\boldsymbol{x}^{n}) \leq V(\boldsymbol{x}^{n}).
\end{equation}
Based on the property that for any $\boldsymbol{x}^{n_k},\boldsymbol{\tilde{x}}^{n_k} \in \mathcal{W}_{\epsilon}^{(n)}$, there is $V(\boldsymbol{x}^{n_k},\boldsymbol{\tilde{x}}^{n_k})=V(\boldsymbol{x}^{n_k})V(\boldsymbol{\tilde{x}}^{n_k})$, we have
\begin{equation}\label{eq:thm3-2}
	\lVert \mathcal{W}_{\epsilon}^{(n)} \rVert \cdot 2^{-n(\hat{H}(X_1)+\epsilon)} \leq V\left(\mathcal{W}_{\epsilon}^{(n)}\right) < 1.
\end{equation}
Inequalities (\ref{eq:thm3-1}) and (\ref{eq:thm3-2}) show that
\begin{equation}
	(1-\epsilon)2^{n(\hat{H}(X_1)-\epsilon)} \leq \lVert \mathcal{W}_{\epsilon}^{(n)} \rVert \leq 2^{n(\hat{H}(X_1)+\epsilon)}.
\end{equation}
As a result, for any $\epsilon >0$, there is a sufficiently large $n$ and the source encoding function assigns a unique index to each sequence in $\mathcal{W}_{\epsilon}^{(n)}$. This nonlinear source code satisfies
\begin{equation}
	\mathbb{E}\left[I_{\{\hat{\boldsymbol{X}}^{n}\neq \boldsymbol{X}^{n}\}}\right] = V\left((\mathcal{W}_{\epsilon}^{(n)})^{c}\right) < \epsilon.
\end{equation}
		
This concludes the proof of Theorem \ref{thesource}. $\hfill\blacksquare$
		
\section{PROOFS OF THEOREMS IN SECTION \ref{sec4.3}}\label{app4}
		
\noindent \textit{Proof of Theorem \ref{thechannel}}:
		
Suppose the transmission process using the nonlinear channel coding scheme is characterized as
\begin{equation}
	S\xrightarrow{\varphi_n}\boldsymbol{X}^{n}\xrightarrow{\left\{\boldsymbol{P}_{\lambda } \in [0,1]^{\mathcal{X}\times \mathcal{Y}}\right\}_{\lambda \in \Lambda }}\boldsymbol{Y}^{n}\xrightarrow{\psi_n}\hat{S}.
\end{equation}
Since $S \rightarrow \boldsymbol{X}^{n} \rightarrow \boldsymbol{Y}^{n}$ satisfies the condition in Theorem \ref{thedata}, we have
\begin{equation}
	\overline{I}(S;\boldsymbol{Y}^{n})\leq \overline{I}(\boldsymbol{X}^{n};\boldsymbol{Y}^{n}).
\end{equation}
		
For any $\epsilon >0$ and a sequence of $(M,n,\varphi_n,\psi_n)$ nonlinear channel codes with coding rate $R_{\textrm{c}}$, which satisfy
\begin{equation}
	\lim _{n\rightarrow \infty } \sup_{\theta \in \Theta} P_{e,\theta}^{(n)} =0, 
\end{equation}
there exists a sufficiently large $n$ and a $(||\mathcal{X}||^{nR},n,\varphi_n,\psi_n)$ nonlinear channel code such that the maximum probability of errors occurring satisfies $\underset{\theta \in \Theta}{\sup } P_{e,\theta}^{(n)} <\epsilon $.
		
Let us define the random variable $U=I_{\{ \hat{S}\neq S \}}$, then the corresponding uncertain probabilities of the event $\{U=1\}$ is characterized as $\{P_{e,\theta}^{(n)}\}_{\theta \in \Theta}$, and $\mathbb{E}[U]=\underset{\theta \in \Theta}{\sup } P_{e,\theta}^{(n)}$.
		
Since the random variables $\boldsymbol{Y}^{n}$ and $S$ uniquely determine the random variable $U$, we have
\begin{equation}
	\hat{H}\left(U|S,\boldsymbol{Y}^{n}\right)=0.
\end{equation}
		
Therefore, according to Theorem \ref{theless}, we obtain
\begin{equation}\label{D5}
\hat{H}\left(U,S|\boldsymbol{Y}^{n}\right)=\hat{H}(S|\boldsymbol{Y}^{n})
\end{equation}
and 
\begin{equation}\label{D6}
	\hat{H}\left(U,S|\boldsymbol{Y}^{n}\right)\leq \hat{H}\left(U|\boldsymbol{Y}^{n}\right)+\hat{H}(S|U,\boldsymbol{Y}^{n}).
\end{equation}
		
Consider the term $\hat{H}\left(U|\boldsymbol{Y}^{n}\right)$. Obviously, we have
\begin{equation}\label{D7}
\hat{H}\left(U|\boldsymbol{Y}^{n}\right)\leq 1.
\end{equation}
		
Furthermore, the term $\hat{H}(S|U,\boldsymbol{Y}^{n})$ satisfies
\begin{equation}
	\hat{H}\left(S|U,\boldsymbol{Y}^{n}\right)=\sup_{\theta\in \Theta} \left[(1-P_{e,\theta}^{(n)})\hat{H}\left(S|\boldsymbol{Y}^{n},U=0\right)+P_{e,\theta}^{(n)} \hat{H}\left(S|\boldsymbol{Y}^{n},U=1\right)\right].
\end{equation}
		
Since $\hat{H}\left(S|\boldsymbol{Y}^{n},U=0\right)=0$ and $\hat{H}\left(S|\boldsymbol{Y}^{n},U=1\right)\leq \log \left(\left| \left| \mathcal{S}\right| \right| -1\right)\leq nR_{\textrm{c}}$, we obtain
\begin{equation}\label{D8}
	\hat{H}\left(S|U,\boldsymbol{Y}^{n}\right)\leq \sup_{\theta \in \Theta} P_{e,\theta}^{(n)}n R_{\textrm{c}}.
\end{equation}
		
Jointly considering \eqref{D5}, \eqref{D6}, \eqref{D7} and \eqref{D8}, we have
\begin{equation}\label{D1}
	\hat{H}\left(S|\boldsymbol{Y}^{n}\right)\leq \hat{H}\left(U|\boldsymbol{Y}^{n}\right)+\hat{H}\left(S|U,\boldsymbol{Y}^{n}\right)\leq 1+\sup_{\theta \in \Theta} P_{e,\theta}^{(n)} n R_{\textrm{c}}.
\end{equation}
		
Suppose that the uncertain probability distributions of the message $S\in \mathcal{S}=\{1,2,\cdots,M\}$ to be sent is denoted as $\{p_{\sigma}(S)\}_{\sigma \in \Sigma}$, and the uniform distribution $p\left(s\right)=\frac{1}{M}$ is a member of the family of uncertain probability distributions $\{p_{\sigma}(S)\}_{\sigma \in \Sigma}$. Then, we have
\begin{equation}\label{D2}
	n R_{\textrm{c}}=\hat{H}\left(S\right)\leq \hat{H}\left(S|\boldsymbol{Y}^{n}\right)+\overline{I}\left(S;\boldsymbol{Y}^{n}\right)\leq \hat{H}\left(S|\boldsymbol{Y}^{n}\right)+\overline{I}\left(\boldsymbol{X}^{n};\boldsymbol{Y}^{n}\right).
\end{equation}
		
Let $p(y_1)\cdots p(y_n)=p(\boldsymbol{y}^{n}):=\sum_{\boldsymbol{x}^{n}}p_{\theta }\left(\boldsymbol{x}^{n}\right)p_{\lambda }\left(\boldsymbol{y}^{n}|\boldsymbol{x}^{n}\right)$, then we have
\begin{equation}\label{D3}
	\begin{split}
	\overline{I}\left(\boldsymbol{X}^{n};\boldsymbol{Y}^{n}\right)=&\sup_{\theta \in \Theta} \sup_{\lambda \in \Lambda} \sum _{\boldsymbol{x}^{n},\boldsymbol{y}^{n}}p_{\theta }\left(\boldsymbol{x}^{n}\right)p_{\lambda }\left({\boldsymbol y}^{n}|\boldsymbol{x}^{n}\right)\log \frac{p_{\lambda }\left(\boldsymbol{y}^{n}|\boldsymbol{x}^{n}\right)}{p(\boldsymbol{y}^{n})}\\
	=&\sup_{\theta \in \Theta} \sup_{\lambda \in \Lambda} \left(\sum _{\boldsymbol{x}^{n},\boldsymbol{y}^{n}}p_{\theta }\left(\boldsymbol{x}^{n}\right)p_{\lambda }\left(\boldsymbol{y}^{n}|\boldsymbol{x}^{n}\right)\log \frac{1}{p(\boldsymbol{y}^{n})}-\sum _{\boldsymbol{x}^{n},\boldsymbol{y}^{n}}p_{\theta }\left(\boldsymbol{x}^{n}\right)p_{\lambda }\left(\boldsymbol{y}^{n}|\boldsymbol{x}^{n}\right)\log \frac{1}{p_{\lambda }\left(\boldsymbol{y}^{n}|\boldsymbol{x}^{n}\right)}\right)\\
	\leq & \sup_{\theta \in \Theta} \sup_{\lambda \in \Lambda} \sum _{i=1}^{n}\sum _{x_{i},y_{i}}p_{\theta }\left(x_{i}\right)p_{\lambda}(y_{i}|x_{i})\log \frac{p_{\lambda }\left(y_{i}|x_{i}\right)}{p\left(y_{i}\right)}\\
	\leq & \sum _{i=1}^{n}\sup_{\theta \in \Theta} \sup_{\lambda \in \Lambda} \sum _{x_{i},y_{i}}p_{\theta }\left(x_{i}\right)p_{\lambda}(y_{i}|x_{i})\log \frac{p_{\lambda }\left(y_{i}|x_{i}\right)}{p\left(y_{i}\right)}\\
	\leq & \sum _{i=1}^{n}\sup_{p\left(x_i\right)} \sup_{\lambda \in \Lambda} \sum _{x_{i},y_{i}}p\left(x_{i}\right)p_{\lambda}(y_{i}|x_{i})\log \frac{p_{\lambda }\left(y_{i}|x_{i}\right)}{p\left(y_{i}\right)}=n\overline{C}.
	\end{split}
\end{equation}
		
Substitute Eqs. (\ref{D1}) and (\ref{D3}) into (\ref{D2}), we have
\begin{equation}
	n R_{\textrm{c}}\leq 1+\sup_{\theta \in \Theta} P_{e,\theta}^{(n)}  n R_{\textrm{c}} + n\overline{C}.
\end{equation}
As a result,
\begin{equation}\label{D4}
	R_{\textrm{c}} \leq \frac{1}{n}+\sup_{\theta \in \Theta} P_{e,\theta}^{(n)} R_{\textrm{c}}+\overline{C}.
\end{equation}
The above inequality (\ref{D4}) holds for any sufficiently large $n$. Note that as $n\rightarrow \infty $, both $\frac{1}{n}$ and $\underset{\theta \in \Theta}{\sup } P_{e,\theta}^{(n)}$ tend to 0. Therefore, we have $R_{\textrm{c}} \leq \overline{C}$.
		
This concludes the proof of Theorem \ref{thechannel}. $\hfill\blacksquare$
		
\noindent \textit{Proof of Theorem \ref{thechannel1}}:
		
Because of $R_{\textrm{c}}<\overline{C}$, there exists a constant $R'$ such that $R_{\textrm{c}}<R'<\overline{C}$.
		
Let $\mathcal{G}^n=\left\{(\boldsymbol{x}^{n},\boldsymbol{y}^{n})\Big|\underset{p\left(\boldsymbol{x}^n\right)}{\sup } \, \underset{\lambda \in \Lambda}{\sup } \, \log \frac{p_{\lambda}\left(\boldsymbol{y}^n|\boldsymbol{x}^n\right)}{\sum_{\boldsymbol{x}^n}p(\boldsymbol{x}^n)p_{\lambda}\left(\boldsymbol{y}^n|\boldsymbol{x}^n\right)}>nR'\right\}$. Based on the proof of Theorem \ref{thesource}, for any $\epsilon>0$, we have a set $\mathcal{M}_{\epsilon}^{(k)}$ that is a subset of the sample sequence of $\left\{X_{i}\right\}_{i=1}^{n_k}$. $\mathcal{M}_{\epsilon}^{(k)}$ satisfies $V\left(\mathcal{M}_{\epsilon }^{\left(k\right)}\right)\geq 1-\frac{\epsilon }{3}$.
		
Let $\mathcal{G}^{*}=\left\{(\boldsymbol{x}^{n_k},\boldsymbol{y}^{n_k})\Big|\left(\boldsymbol{x}^{n_k},\boldsymbol{y}^{n_k}\right)\in \mathcal{G}^{n_k},\boldsymbol{x}^{n_k}\in \mathcal{M}_{\epsilon }^{\left(k\right)}\right\}$. For the codebook $\{\boldsymbol{x}_{1},\boldsymbol{x}_{2},\cdots ,\boldsymbol{x}_{{||\mathcal{X}||^{n_k R}}}\}$ with the code length $n_k$, we set the decoding rule as follows: Suppose the uncertain-distribution channel $[\mathcal{X},\left\{\boldsymbol{P}_{\lambda } \in [0,1]^{\mathcal{X}\times \mathcal{Y}}\right\}_{\lambda \in \Lambda },\mathcal{Y} ]$ processes the transmitted codeword $\boldsymbol{x}_{i}$ and produces an output sequence $\boldsymbol{y}^{n_k}$. Then the receiver considers the set $\mathcal{G}^{*}\left(\boldsymbol{y}^{n_k}\right)=\left\{\boldsymbol{x}^{n_k}\big|\left(\boldsymbol{x}^{n_k},\boldsymbol{y}^{n_k}\right)\in \mathcal{G}^{*}\right\}$. If there is only a single element $\hat{\boldsymbol{x}}^{n_k}$ in the set $\mathcal{G}^{*}\left(\boldsymbol{y}^{n_k}\right)$, the decoding function is applied to $\boldsymbol{y}^{n_k}$ to estimate the original message as this element. If there are more than one element in the set $\mathcal{G}^{*}\left(\boldsymbol{y}^{n_k}\right)$, an error occurrs. For this nonlinear channel coding scheme, we have
\begin{equation}
	\begin{split}
	&\mathcal{E}\left[I_{\{\hat{\boldsymbol{x}}^{n_k}\neq \boldsymbol{x}^{n_k}\}}\right]=\inf_{P_{e}^{\left(n_k\right)}\in \mathcal{P}^{\left(n_k\right)}} P_{e}^{\left(n_k\right)}\\
	\leq & v\left(\left\{\omega \Big| \boldsymbol{X}^{n_k}(\omega)\notin \mathcal{G}^{*}({\boldsymbol Y}^{n_k}(\omega))\right\}\right)+\sum _{\tilde{\boldsymbol{X}}^{n_k}\neq \boldsymbol{X}^{n_k}} V \left(\left\{\omega \Big|\tilde{\boldsymbol{X}}^{n_k}(\omega)\in \mathcal{G}^{*}(\boldsymbol{Y}^{n_k}(\omega))\right\}\right)\\
	:=&I_{1}+I_{2}.
	\end{split}
\end{equation}
			
For the term $I_{1}$, there exists a probability measure $\textrm{P}$ such that
\begin{equation}
	\begin{split}
	I_{1}\leq& \textrm{P}\left(\left\{\omega \Big|(\boldsymbol{X}^{n_k}(\omega),\boldsymbol{Y}^{n_k}(\omega))\notin \mathcal{G}^{n_k}\right\}\right)+v\left(\left\{\omega \Big|\boldsymbol{X}^{n_k}(\omega)\notin \mathcal{M}_{\epsilon }^{(k)}\right\}\right)\\
	:=& I_{1,1}+I_{1,2}. 
	\end{split}
\end{equation}
Due to $R'<\overline{C}$, there exists a probability distribution $p(X)$ and a transition probability matrix $\boldsymbol{P}_{\lambda }(Y|X)$ such that $\sum_{x,y}p(x)p_{\lambda}(y|x)\log \frac{p_{\lambda}(y|x)}{\sum_{x}p(x)p_{\lambda}(y|x)} > R'$. Therefore, there exists a sufficiently large $n_k$ such that $I_{1,1}<\frac{\epsilon }{3}$. Due to $V\left(\mathcal{M}_{\epsilon }^{\left(k\right)}\right)\geq 1-\frac{\epsilon }{3}$, we have $I_{1,2}<\frac{\epsilon }{3}$. As a result, $I_{1}<\frac{2\epsilon }{3}$.

For the term $I_{2}$, we have 
\begin{equation}
	I_{2}\leq \frac{\mathbb{E}[||\mathcal{X}||^{\underset{p\left(\boldsymbol{X}^{n_k}\right)}{\sup } \underset{\lambda \in \Lambda}{\sup } \log \frac{p_{\lambda}\left(\boldsymbol{Y}^{n_k}|\boldsymbol{X}^{n_k}\right)}{\sum_{\boldsymbol{X}^{n_k}}p(\boldsymbol{X}^{n_k})p_{\lambda}\left(\boldsymbol{Y}^{n_k}|\boldsymbol{X}^{n_k}\right)}}]}{||\mathcal{X}||^{n_k (R'-R_{\textrm{c}})}}\leq L  ||\mathcal{X}||^{n_k (R_{\textrm{c}}-R')}.
\end{equation}
Note that we have $R'>R$, hence there exists a sufficiently large $n_k$ such that $I_{2}<\frac{\epsilon }{3}$. 
		
Therefore, we have $\underset{P_{e}^{\left(n_k\right)}\in \mathcal{P}^{\left(n_k\right)}}{\inf } P_{e}^{\left(n_k\right)}<\epsilon $.
		
This concludes the proof of Theorem \ref{thechannel1}. $\hfill\blacksquare$

\section{PROOFS OF THEOREMS IN SECTION \ref{sec4.4}}\label{app5}
		
\noindent  \textit{Proof of Theorem \ref{thelim1}}:
		
\begin{enumerate}
	\item {This property is obvious.}
			
	\item {$\forall D_{1},D_{2}\geq 0$, $\lambda \in (0,1)$, $\forall \epsilon >0$, there exists transition probability matrices $ \boldsymbol{Q}_{1}$ and $\boldsymbol{Q}_{2}$ such that
	\begin{equation}
		\overline{I}\left[\{p_{\theta}(X)\}_{\theta \in \Theta};\boldsymbol{Q}_1\left(\hat{X}|X\right)\right]\leq \hat{R}^{I}\left(D_{1}\right)+\epsilon,
	\end{equation}
	\begin{equation}
		\overline{I}\left[\{p_{\theta}(X)\}_{\theta \in \Theta};\boldsymbol{Q}_2\left(\hat{X}|X\right)\right]\leq \hat{R}^{I}\left(D_{2}\right)+\epsilon.
	\end{equation}
				
	Due to $\mathbb{E}_{\lambda {\boldsymbol{Q}_{1}}+\left(1-\lambda \right){\boldsymbol{Q}_{2}}}\left[d\left(X,\hat{X}\right)\right]\leq \lambda D_{1}+(1-\lambda )D_{2}$, we can get
	\begin{equation}
		\begin{split}
		&\hat{R}^{I}\left(\lambda D_{1}+(1-\lambda )D_{2}\right)\\
		\leq& \overline{I}\left[\{p_{\theta}(X)\}_{\theta \in \Theta};\left(\lambda \boldsymbol{Q}_{1}\left(\hat{X}|X\right)+\left(1-\lambda \right)\boldsymbol{Q}_{2}\left(\hat{X}|X\right)\right)\right] \\
		\leq& \lambda\overline{I}\left[\{p_{\theta}(X)\}_{\theta \in \Theta};\boldsymbol{Q}_1\left(\hat{X}|X\right)\right]+\left(1-\lambda \right)\overline{I}\left[\{p_{\theta}(X)\}_{\theta \in \Theta};\boldsymbol{Q}_2\left(\hat{X}|X\right)\right]\\
		\leq& \lambda \hat{R}^{I}\left(D_{1}\right)+\left(1-\lambda \right)\hat{R}^{I}\left(D_{2}\right)+\epsilon.
		\end{split}
	\end{equation}
				
	Since $\epsilon $ is an arbitrary positive value, $\hat{R}^{I}\left(D\right)$ is a convex function.$\hfill\blacksquare$}
\end{enumerate}
		
\noindent  \textit{Proof of Theorem \ref{thelim2}}:
		
Let the transition probability matrix $\boldsymbol{Q}(\hat{X}|X)$ be the one that attains the infimum of $\hat{R}^{I}\left(D\right)$. Because of $R_{\textrm{s}}>\hat{R}^{I}\left(D\right)$, there exists a constant $R'$ such that $R_{\textrm{s}}>R'>\overline{I}\left[\{p_{\theta}(X)\}_{\theta \in \Theta};\boldsymbol{Q}(\hat{X}|X)\right]$.
		
Let $\mathcal{L}^{n}=\{ (\boldsymbol{x}^{n},\hat{\boldsymbol{x}}^{n})\big| \underset{\theta \in \Theta}{\inf }\log \frac{q(\hat{\boldsymbol{x}}^n|\boldsymbol{x}^n)}{\sum_{\boldsymbol{x}^n}q(\hat{\boldsymbol{x}}^n|\boldsymbol{x}^n)p_{\theta}(\boldsymbol{x}^n)}<nR' \}$. For any $\epsilon>0$, let $\mathcal{B}_{\epsilon}^{k} = \{ (\boldsymbol{x}^{n_k},\hat{\boldsymbol{x}}^{n_k})\big| |d(\boldsymbol{x}^{n_k}, \hat{\boldsymbol{x}}^{n_k})-\mathbb{E}_{\boldsymbol{Q}}[d(X,\hat{X})]|<\frac{\epsilon}{3} \}$.
		
Based on the proof of Theorem \ref{thesource}, for any $\epsilon'>0$, we also have a set $\mathcal{M}_{\epsilon'}^{(k)}$ satisfying $v\left((\mathcal{M}_{\epsilon' }^{\left(k\right)})^{c}\right)\leq \frac{\epsilon' }{3}$. Let $\mathcal{L}^*=\{(\boldsymbol{x}^{n_k},\hat{\boldsymbol{x}}^{n_k})\big|(\boldsymbol{x}^{n_k},\hat{\boldsymbol{x}}^{n_k})\in \mathcal{B}_{\epsilon}^{k}\cap \mathcal{L}^{n_k}, \boldsymbol{x}^{n_k} \in \mathcal{M}_{\epsilon'}^{(k)} \}$. For the vector $\boldsymbol{x}^{n_k}$, find an index $W\in \{1,2,\cdots,||\mathcal{W}||\}$ such that $(\boldsymbol{x}^{n_k},\hat{\boldsymbol{x}}^{n_k}(W))\in \mathcal{L}^*$. If there are multiple such indices, choose the smallest one among them. If there is no such index, set $W=1$. The decoder makes the estimated sequence as $\hat{\boldsymbol{x}}^{n_k} = \hat{\boldsymbol{x}}^{n_k}(W)$.
		
Let $d_{\max}=\underset{x,\hat{x}\in \mathcal{X}}{\max} \ d(x,\hat{x})$, there holds
\begin{equation}
	\mathcal{E}_{\boldsymbol{Q}}\left[d\left(\boldsymbol{X}^{n_k},\hat{\boldsymbol{X}}^{n_k}\right)\right] \leq D+\frac{\epsilon}{3}+\mathcal{E}_{\boldsymbol{Q}}\left[I_{ \{ (\boldsymbol{X}^{n_k},\hat{\boldsymbol{X}}^{n_k})\notin \mathcal{L}^* \} }\right] d_{\max}.
\end{equation}
		
Consider the term $\mathcal{E}_{\boldsymbol{Q}}\left[I_{ \{ (\boldsymbol{X}^{n_k},\hat{\boldsymbol{X}}^{n_k})\notin \mathcal{L}^* \} }\right]$, we have
\begin{equation}
	\begin{split}
	\mathcal{E}_{\boldsymbol{Q}\left[I_{ \{ (\boldsymbol{X}^{n_k},\hat{\boldsymbol{X}}^{n_k})\notin \mathcal{L}^* \} }\right] } &\leq \mathbb{E}_{\boldsymbol{Q}}\left[\Pi_{\omega=1}^{2^{nR}} V(\{ \omega| (\boldsymbol{x}^{n_k},\hat{\boldsymbol{X}}^{n_k}(\omega))\notin \mathcal{L}^*\} )|\boldsymbol{x}^{n_k}=\boldsymbol{X}^{n_k}\right]+\frac{\epsilon'}{3}\\
	&\leq \mathbb{E}_{\boldsymbol{Q}}\left[ [1-\underset{\hat{\boldsymbol{x}}^{n_k}}{\sum}q(\hat{\boldsymbol{x}}^{n_k}|\boldsymbol{x}^{n_k})2^{-n_k R'}]^{2^{n_k R_{\textrm{s}}}}|\boldsymbol{x}^{n_k}=\boldsymbol{X}^{n_k}\right]+\frac{\epsilon'}{3}\\
	&\leq \mathbb{E}_{\boldsymbol{Q}}\left[e^{-2^{-n_k R'}2^{n_k R_{\textrm{s}}}}\right] +\frac{\epsilon'}{3}\\
	&=e^{-2^{n_k(R_{\textrm{s}}-R')}}+\frac{\epsilon'}{3}.
	\end{split}
\end{equation}
Note that we have $R'<R_{\textrm{s}}$, hence there exists a sufficiently large $n_k$ and a sufficiently small $\epsilon'$ such that $\mathcal{E}_{\boldsymbol{Q}}\left[I_{ \{ (\boldsymbol{X}^{n_k},\hat{\boldsymbol{X}}^{n_k})\notin \mathcal{L}^* \} }\right]\leq \frac{2\epsilon}{3d_{\max}}$. Therefore, we have $ \mathcal{E}_{\boldsymbol{Q}}\left[d(\boldsymbol{X}^{n},\hat{\boldsymbol{X}}^{n})\right] \leq D+\epsilon$.
		
This concludes the proof of Theorem \ref{thelim2}. $\hfill\blacksquare$

\bibliographystyle{IEEEtran}
		
\bibliography{ref.bib}%参考文献从命名为***的bib文件中出
		
\end{document}